\def\^{\widehat}
\newcommand{\norm}[1]{\Vert #1 \Vert}
\def\phi{\varphi}
\numberwithin{equation}{section}
\renewcommand{\phi}{\varphi}
\def\~{\widetilde}
\def\^{\widehat}
\newcommand{\ee}{{\rm e}\hspace{1pt}}
\newcommand{\dd}{\hspace{1pt}{\rm d}\hspace{0.5pt}}
\newcommand{\abs}[1]{\left| #1 \right|}
\newcommand{\deltas}[1]{\delta_{#1}(s)}
\newcommand{\deltat}[1]{\delta_{#1}(t)}
\newcommand{\veps}{\varepsilon}
\newtheorem{thm}{Theorem}
\newtheorem{lem}[thm]{Lemma}
\newtheorem{cor}[thm]{Corollary}
\newtheorem{defn}[thm]{Definition}
\newtheorem{remark}[thm]{\textit{Remark}}
\title{Computing Differential Privacy Guarantees for \\ Heterogeneous Compositions Using FFT}
\author{Antti Koskela and Antti Honkela \\ 
Helsinki Institute for Information Technology HIIT,\\
Department of Computer Science, University of Helsinki, Finland }
\date{}
\begin{document}
	
\maketitle

\abstract{

The recently proposed Fast Fourier Transform (FFT)-based accountant for evaluating $(\varepsilon,\delta)$-differential privacy guarantees using the privacy loss distribution formalism has been shown to give tighter bounds than commonly used methods such as R\'enyi accountants when applied to homogeneous compositions, i.e., to compositions of identical mechanisms. In this paper, we extend this approach to heterogeneous compositions. We carry out a full error analysis that allows choosing the parameters of the algorithm such that a desired accuracy is obtained. The analysis also extends previous results by taking into account all the parameters of the algorithm. Using the error analysis, we also give a bound for the computational complexity in terms of the error which is analogous to and slightly tightens the one given by Murtagh and Vadhan (2018). We also show how to speed up the evaluation of tight privacy guarantees using the Plancherel theorem at the cost of increased pre-computation and memory usage.

}

\section{Introduction}

Differential privacy (DP)~\citep{dwork_et_al_2006} has become
the standard approach for privacy-preserving machine learning. 
When using DP, one challenge is to accurately bound the
compound privacy loss of the increasingly complex DP algorithms. 
%
An important example is given by the differentially private stochastic gradient descent (DP-SGD).
The moments accountant~\citep{Abadi2016} gave a
major improvement in bounding the the complete $(\varepsilon,\delta)$-profile of the DP-SGD algorithm, and
this analysis has been refined through the general development of R\'enyi differential
privacy (RDP)~\citep{mironov2017} as well as tighter RDP bounds for
subsampled mechanisms~\citep{balle2018subsampling,wang2019,zhu2019,mironov2019} and improved conversion formulas~\cite{asoodeh2020}.
RDP enables nearly tight analysis for compositions of Gaussian
mechanisms, but this may be difficult for other mechanisms.
Moreover, conversion of RDP guarantees back to more commonly used
$(\varepsilon, \delta)$-guarantees is lossy.

In this work we use the privacy loss distribution (PLD)
formalism introduced by~\citet{sommer2019privacy} to numerically
evaluate tight privacy bounds for compositions. This work extends the
recent Fourier Accountant (FA) by~\citet{koskela2020,koskela2021tight} to heterogeneous compositions. 
This enables combining accurate accounting with more flexible algorithm design with non-uniform privacy budget spending.
Our work directly builds upon discrete mechanisms that are needed practical computer implementations of rigorous DP~\citep{mironov2012}.

FA uses numerical methods to compute very accurate privacy bounds. By taking into account the error analysis, these yield very tight rigorous bounds, but they cannot be expressed in a mathematically simple form. This appears to be a feature of bounds for complex mechanisms, as mathematically simple expressions provide either only approximate or very loose bounds.

\textbf{Our Contribution.} The main contributions of this work are:

\vspace{-0mm} 
 
\begin{itemize}
\item We extend the recently proposed FFT-based privacy accountant~\citet{koskela2020,koskela2021tight} for computing tight privacy bounds for heterogeneous compositions.

\item  We give a full error analysis for the method in terms of the pre-defined parameters of the algorithm. This also leads to strict upper $\delta(\veps)$-bounds.
We show that these bounds are accurate in a sense that they allow choosing appropriate parameter values a priori. 
We tailor the existing error analysis by~\citet{koskela2021tight} to heterogeneous compositions and extend it by analysing also the grid approximation error.


\item  Using the error analysis, we bound the computational complexity of the algorithm in terms of number of compositions $k$ and the tolerated error $\eta$. Our bound is slightly better than the 
existing bound given by~\citet{murtagh2018complexity}.

\item We show how to speed up the evaluation of the privacy parameters for varying numbers of compositions using the Plancherel theorem. 

\end{itemize}

\section{Differential Privacy}

We first recall some basic definitions of DP~\citep{dwork_et_al_2006}. We use the
following notation. An input data set containing $N$ data points is denoted as $X = (x_1,\ldots,x_N)
\in \mathcal{X}^N$, where $x_i \in \mathcal{X}$, $1 \leq i \leq N $.

\begin{defn} \label{def:adjacency}
	We say two data sets $X$ and $Y$ are neighbours in remove/add relation if you get 
	one by removing/adding an element from/to the other and denote this with $\sim_R$.
	We say $X$ and $Y$ are neighbours in substitute relation if you get one by substituting
	one element in the other. We denote this with $\sim_S$.
\end{defn}

\begin{defn} \label{def:dp}
	Let $\varepsilon > 0$ and $\delta \in [0,1]$. Let $\sim$ define a neighbouring relation.
	Mechanism $\mathcal{M} \, : \, \mathcal{X}^N \rightarrow \mathcal{R}$ is $(\varepsilon,\delta,\sim)$-DP 
	if for every $X \sim Y$
	and every measurable set $E \subset \mathcal{R}$ we have that
	$$
		\mathrm{Pr}( \mathcal{M}(X) \in E ) \leq \ee^\varepsilon \mathrm{Pr} (\mathcal{M}(Y) \in E ) + \delta.
	$$
	When the relation is clear from context or irrelevant, we will abbreviate it as $(\veps, \delta)$-DP. 
	We call $\mathcal{M}$ tightly $(\veps,\delta,\sim)$-DP, if there does not exist $\delta' < \delta$
	such that $\mathcal{M}$ is $(\veps,\delta',\sim)$-DP.
\end{defn}

\section{Privacy Loss Distribution} \label{sec:pld}

We first introduce the basic tool for obtaining tight privacy bounds: the privacy loss distribution (PLD).
The results in Subsection~\ref{subsec:pld} are reformulations of the results given by~\citet{meiser2018tight} and~\citet{sommer2019privacy}.
Proofs of the results of this section are given in the Supplements of~\citep{koskela2021tight}. 

\subsection{Privacy Loss Distribution}  \label{subsec:pld}

We consider discrete-valued one-dimensional mechanisms $\mathcal{M}$ which can be seen as mappings from $\mathcal{X}^N$
to the set of discrete-valued random variables.
The \emph{generalised probability density functions} of $\mathcal{M}(X)$ and $\mathcal{M}(Y)$, denoted $f_X(t)$ and $f_Y(t)$, respectively, are given by
\begin{equation} \label{eq:delta_sum}
	\begin{aligned}
		f_X(t) &= \sum\nolimits_i a_{X,i} \cdot \deltat{t_{X,i}},  \quad
		f_Y(t) &= \sum\nolimits_i a_{Y,i} \cdot \deltat{t_{Y,i}},
	\end{aligned}
\end{equation}
where $\delta_t( \cdot )$, $t \in \mathbb{R}$, 
denotes the Dirac delta function centred at $t$, and $t_{X,i},t_{Y,i} \in \mathbb{R}$ and $a_{X,i},a_{Y,i} \geq 0$.
We refer to~\citep{koskela2021tight} for more details of the notation.
The privacy loss distribution is defined as follows.
\begin{defn} \label{def:pld}
Let $\mathcal{M} \, : \, \mathcal{X}^N \rightarrow \mathcal{R}$, $\mathcal{R} \subset \mathbb{R}$, be a discrete-valued randomised mechanism
and let $f_X(t)$ and $f_Y(t)$ be generalised probability density functions of the form \eqref{eq:delta_sum}.
We define the generalised privacy loss distribution (PLD) $\omega_{X/Y}$ as   
\begin{equation} \label{eq:omega_pld}
	\begin{aligned}
	\omega_{X/Y}(s) &= \sum\nolimits_{{t_{X,i} = t_{Y,j} }}   a_{X,i} \cdot \deltas{s_i}, \quad s_i = \log \left( \tfrac{a_{X,i}}{a_{Y,j}} \right).
	\end{aligned}
\end{equation}
\end{defn}

\subsection{Tight $(\veps,\delta)$-Bounds for Compositions}
%
Let the generalised probability density functions $f_X$ and $f_Y$ of the form \eqref{eq:delta_sum}.
We define the convolution $f_X * f_Y$ as
\begin{equation*}
	(f_X * f_Y )(t) =  \sum\nolimits_{i,j} a_{X,i} \, a_{Y,j} \cdot \deltat{t_{X,i} + t_{Y,j}}.  
\end{equation*}
We consider non-adaptive compositions of the form $\mathcal{M}(X) = \big(\mathcal{M}_1(X), \ldots, \mathcal{M}_k(X) \big)$
and we denote by $f_{X,i}(t)$ the density function of $\mathcal{M}_i(X)$ for each $i$,
and by $f_{Y,i}(t)$ that of $\mathcal{M}_i(Y)$. 
For each $i$, 
we denote the PLD as defined by Definition~\ref{def:pld}
and the densities $f_{X,i}(t)$ and $f_{Y,i}(t)$ by $\omega_{X/Y,i}$.

The following theorem shows that the tight $(\veps,\delta)$-bounds for compositions 
of non-adaptive mechanisms are obtained using convolutions of PLDs (see also Thm.\;1 by~\citet{sommer2019privacy}).
\begin{thm} \label{thm:integral}
Consider a non-adaptive composition of $k$ independent mechanisms $\mathcal{M}_1,\ldots,\mathcal{M}_k$ and neighbouring data sets $X$ and $Y$.
The composition is tightly $(\veps,\delta)$-DP for $\delta(\veps)$ given by
$$
\delta(\veps) = \max \{ \delta_{X/Y}(\veps), \delta_{Y/X}(\veps) \},
$$ 
where
\begin{equation*} 
	\begin{aligned}
		 \delta_{X/Y}(\veps) = 1 - \prod\limits_{\ell=1}^k (1-\delta_{X/Y,\ell}(\infty)) 
	+ \int_\veps^\infty (1 - \ee^{\veps - s})\left(\omega_{X/Y,1} * \cdots * \omega_{X/Y,k} \right) (s)  \, \dd s,
	\end{aligned}
\end{equation*} 
\begin{equation} 
	\begin{aligned}
 \delta_{X/Y,\ell}(\infty) = 
 \sum\limits_{ \{ t_i \, : \, \mathbb{P}( \mathcal{M_\ell}(X) = t_i) > 0, \, \mathbb{P}( \mathcal{M_\ell}(Y) = t_i) = 0 \} }
 		\mathbb{P}( \mathcal{M_\ell}(X) = t_i)
	\end{aligned}
\end{equation}
and $\omega_{X/Y,1} * \cdots * \omega_{X/Y,k}$ denotes the convolution of 
the density functions $\omega_{X/Y,\ell}$, $1 \leq \ell \leq k$. An analogous expression holds for $\delta_{Y/X}(\veps)$.
\end{thm}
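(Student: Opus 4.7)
The plan is to reduce the composition statement to the single-mechanism version of the tight $\delta$ formula, and then exploit independence of the $\mathcal{M}_\ell$ to turn sums of per-coordinate privacy losses into convolutions of PLDs and ``finite-loss'' probabilities into products. I would first invoke the $k=1$ case of the formula as a lemma; it follows from the hockey-stick divergence characterisation of tight $(\veps,\delta)$-DP,
$$
\delta_{X/Y}(\veps) \;=\; \sum_{t} \max\bigl\{0,\, f_X(t) - \ee^\veps f_Y(t)\bigr\},
$$
obtained by taking the worst-case event $E^* = \{t : f_X(t) > \ee^\veps f_Y(t)\}$ in Definition~\ref{def:dp}. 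Splitting the sum into the common support $\{t_{X,i} = t_{Y,j}\}$ and the ``$X$-only'' support $\{t : f_Y(t) = 0,\, f_X(t) > 0\}$, and changing variables on the first piece to $s_i = \log(a_{X,i}/a_{Y,j})$, one recovers $\delta_{X/Y}(\infty) + \int_\veps^\infty (1-\ee^{\veps-s})\,\omega_{X/Y}(s)\,\dd s$. This is essentially Thm.~1 of~\citet{sommer2019privacy}, and the details are recorded in the supplements of~\citet{koskela2021tight}.

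For the composition step I would use that independence of the $\mathcal{M}_\ell$ makes the joint generalised density on $\mathcal{R}^k$ a product,
$$
f_X^{(k)}(t_1,\ldots,t_k) \;=\; \prod_{\ell=1}^k f_{X,\ell}(t_\ell),\qquad f_Y^{(k)}(t_1,\ldots,t_k) \;=\; \prod_{\ell=1}^k f_{Y,\ell}(t_\ell).
$$
At a point in the common joint support the log-ratio of the two joint densities is the sum of the per-coordinate privacy losses, and the sub-probability distribution of that sum under the $X$-law is, again by independence, the convolution $\omega_{X/Y,1} * \cdots * \omega_{X/Y,k}$. Applying the single-mechanism lemma to the composed mechanism therefore produces exactly the integral term in the stated expression.

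It remains to bookkeep the ``$\infty$'' mass of the joint mechanism. A joint output $(t_1,\ldots,t_k)$ lies in the $X$-only support of the composition iff at least one coordinate $t_\ell$ lies in the $X$-only support of $\mathcal{M}_\ell$; by independence the $X$-probability of this event is $1 - \prod_\ell \bigl(1-\delta_{X/Y,\ell}(\infty)\bigr)$, which matches the first term in the theorem. The analogous argument with $X$ and $Y$ swapped gives $\delta_{Y/X}(\veps)$, and tightness over neighbouring pairs is then captured by taking the maximum. The main obstacle I anticipate is the careful accounting needed to check that the convolution of the sub-probability measures $\omega_{X/Y,\ell}$ (each of total mass $1-\delta_{X/Y,\ell}(\infty)$) recovers exactly the ``all-coordinates-finite'' portion of the joint privacy loss, with no double-counting against the $\infty$-term; in the discrete Dirac-delta setting this requires tracking coincidences and zero-probability events explicitly.
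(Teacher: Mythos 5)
Your proposal is correct and follows essentially the same route as the paper's proof: reduce to the single-mechanism tight $\delta(\veps)$ formula (which the paper invokes as Lemma 4 of \citet{koskela2021tight} and you re-derive via the hockey-stick characterisation), then use independence so that the joint density factorises, the privacy losses add, the composed PLD becomes the convolution $\omega_{X/Y,1} * \cdots * \omega_{X/Y,k}$, and the infinite-loss mass becomes $1 - \prod_\ell (1-\delta_{X/Y,\ell}(\infty))$. The bookkeeping point you flag (the convolution carrying total mass $\prod_\ell(1-\delta_{X/Y,\ell}(\infty))$, complementary to the $\infty$-term) works out exactly as you expect and is handled the same way in the paper.
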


We remark that finding the outputs $\mathcal{M}_i(X)$ and $\mathcal{M}_i(Y)$, $1 \leq i \leq k$, that give the maximum $\delta(\varepsilon)$ 
is application-specific and has to be carried out individually for each case, similarly as, e.g., in the case of 
RDP~\citep{mironov2017}. In the experiments of Section~\ref{sec:experiments} it will be clear how to determine the worst-case distributions $f_{X,i}$ and $f_{Y,i}$.

\section{Fourier Accountant for Heterogeneous Compositions}


We next describe the numerical method for computing tight DP guarantees for heterogeneous compositions of discrete-valued mechanisms.
The method is closely related to the homogenous case considered in~\citep{koskela2021tight}.
However, the error analysis is tailored to the heterogeneous case and we  consider here also the error induced by the grid approximation.

%
\subsection{Fast Fourier Transform}

We first recall some basics of the Fast Fourier Transform (FFT)~\citep{cooley1965}. Let
$$
x = \begin{bmatrix} x_0,\ldots,x_{n-1} \end{bmatrix}^\mathrm{T}, \, w =  \begin{bmatrix} w_0,\ldots,w_{n-1} \end{bmatrix}^\mathrm{T} \in \mathbb{R}^n.
$$
The discrete Fourier transform $\mathcal{F}$ and its inverse $\mathcal{F}^{-1}$ are defined as~\citep{stoer_book}
\begin{equation}  \label{def:mathcal{F}}
	\begin{aligned}
	(\mathcal{F} x)_k  = \sum\nolimits_{j=0}^{n-1} x_j \ee^{- \mathrm{i} \, 2 \pi k j / n}, \quad 
		(\mathcal{F}^{-1} w  )_k = \frac{1}{n} \sum\nolimits_{j=0}^{n-1} w_j \ee^{ \mathrm{i} \, 2 \pi k j / n},
	\end{aligned}
\end{equation}
where $\mathrm{i} = \sqrt{-1}$.
Evaluating $\mathcal{F} x$ and $\mathcal{F}^{-1}w$  naively takes $O(n^2)$ operations,
however by using FFT 
the running time complexity reduces to $O(n \log n)$.
Also, FFT enables evaluating discrete convolutions
efficiently. The convolution theorem~\citep{stockham1966} states that 
\begin{equation*} 
	\sum\nolimits_{i=0}^{n-1} v_i w_{k-i} = \mathcal{F}^{-1} ( \mathcal{F} v \odot \mathcal{F} w),
\end{equation*}
where $\odot$ denotes the element-wise product and the summation indices are  modulo $n$.

\subsection{Grid Approximation}


Similarly as~\citet{koskela2021tight}, we place the PLD on a grid
\begin{equation} \label{eq:grid}
X_n = \{x_0,\ldots,x_{n-1}\}, \quad n \in \mathbb{Z}^+, \quad \textrm{where} \quad x_i = -L + i \Delta x, \quad  \Delta x = 2L/n. 
\end{equation}
Suppose the distribution $\omega$ of the PLD is of the form 
\begin{equation} \label{eq:omega_0}
	\omega(s) = \sum\nolimits_i a_i \cdot \deltas{s_i},
\end{equation}
where $a_i \geq 0$ and $-L \leq s_i \leq L - \Delta x$ for all $i$. 
We define the grid approximations
\begin{equation} \label{eq:omegaRL}
	\begin{aligned}
		\omega^\mathrm{L}(s)  := \sum\nolimits_i a_i \cdot \deltas{s_i^\mathrm{L}}, \quad
		\omega^\mathrm{R}(s)  := \sum\nolimits_i a_i \cdot \deltas{s_i^\mathrm{R}}, 
	\end{aligned}
\end{equation}
where
\begin{equation*}
	\begin{aligned}
		   s_i^\mathrm{L} = \max \{  x \in X_n \, : \, x \leq s_i  \}, \quad 
	       s_i^\mathrm{R} = \min \{  x \in X_n \, : \, x \geq  s_i\}.
	\end{aligned}
\end{equation*}
We note that $s_i$'s correspond to the logarithmic ratios of probabilities of individual events.
Thus, often a moderate $L$ is sufficient for the condition $-L \leq s_i \leq L - \Delta x$
to hold for all $i$.
We also provide analysis for the case where this assumption does not hold (see the Appendix). 
From \eqref{eq:omegaRL} we get: 
\begin{lem} \label{lem:deltaineq}
Let $\delta(\veps)$ be given by the integral formula of Theorem~\ref{thm:integral} for PLDs $\omega_1, \cdots, \omega_k$ of the form \eqref{eq:omega_0}.
Let $\delta^\mathrm{L}(\veps)$ and $\delta^\mathrm{R}(\veps)$  correspondingly be determined by the left and right 
approximations $\omega_1^\mathrm{L}, \ldots,  \omega_k^\mathrm{L} $ and $\omega_1^\mathrm{R}, \ldots,  \omega_k^\mathrm{R} $, as defined in \eqref{eq:omegaRL}.
Then for all $\veps>0$ :
\begin{equation*} 
	\delta^\mathrm{L}(\veps) \leq \delta(\veps) \leq \delta^\mathrm{R}(\veps).
\end{equation*}
\end{lem}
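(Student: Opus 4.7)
The plan is to show the inequalities term by term via a monotonicity argument applied to the integrand $g(s) := \max\{0, 1 - \ee^{\veps - s}\}$, exploiting the fact that each support point of $\omega_1^\mathrm{L} * \cdots * \omega_k^\mathrm{L}$ (resp.\ $\omega_1^\mathrm{R} * \cdots * \omega_k^\mathrm{R}$) is obtained by shifting a corresponding support point of $\omega_1 * \cdots * \omega_k$ to the left (resp.\ right) on the grid.

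First, I would observe that the $\delta_{X/Y,\ell}(\infty)$ masses are untouched by the grid approximation: they correspond to output values $t_i$ of $\mathcal{M}_\ell(X)$ at which $\mathcal{M}_\ell(Y)$ has zero probability, so the associated privacy loss is $+\infty$ and never enters the finite-support PLD that is being projected. Hence the $1 - \prod_\ell (1 - \delta_{X/Y,\ell}(\infty))$ term is identical for $\delta$, $\delta^\mathrm{L}$, and $\delta^\mathrm{R}$, and it suffices to compare the integral parts.

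Next I would unroll the convolution of discrete measures: writing each $\omega_\ell(s) = \sum_{i} a_{\ell,i}\, \delta_s(s_{\ell,i})$, the $k$-fold convolution $\omega_1 * \cdots * \omega_k$ is a weighted sum of Dirac masses at $s_{\underline{i}} := s_{1,i_1} + \cdots + s_{k,i_k}$ with weights $a_{\underline{i}} := a_{1,i_1}\cdots a_{k,i_k}$, where $\underline{i} = (i_1,\ldots,i_k)$. The analogous convolutions for the left/right approximations are supported at $s_{\underline{i}}^\mathrm{L} := \sum_\ell s_{\ell,i_\ell}^\mathrm{L}$ and $s_{\underline{i}}^\mathrm{R} := \sum_\ell s_{\ell,i_\ell}^\mathrm{R}$ with the same weights $a_{\underline{i}}$. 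Since $s_{\ell,i_\ell}^\mathrm{L} \leq s_{\ell,i_\ell} \leq s_{\ell,i_\ell}^\mathrm{R}$ for every index, summing over $\ell$ yields $s_{\underline{i}}^\mathrm{L} \leq s_{\underline{i}} \leq s_{\underline{i}}^\mathrm{R}$.

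Finally, the integral $\int_\veps^\infty (1 - \ee^{\veps - s})(\omega_1 * \cdots * \omega_k)(s)\, \dd s$ equals $\sum_{\underline{i}} a_{\underline{i}} \, g(s_{\underline{i}})$, and likewise for the left/right versions. Since $g(s) = \max\{0, 1 - \ee^{\veps - s}\}$ is non-negative and non-decreasing on $\mathbb{R}$, the pointwise shifts give $g(s_{\underline{i}}^\mathrm{L}) \leq g(s_{\underline{i}}) \leq g(s_{\underline{i}}^\mathrm{R})$ for every $\underline{i}$. Multiplying by $a_{\underline{i}} \geq 0$ and summing yields the claimed chain of inequalities for the integral term, and adding back the common $\delta_\ell(\infty)$ contribution completes the argument. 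There is no real obstacle; the only thing to be a bit careful about is handling the indicator $\mathbb{1}_{s \geq \veps}$ implicit in the integral, which I absorb into $g$ so that $g$ is genuinely monotone non-decreasing on all of $\mathbb{R}$ and the term-by-term comparison remains valid even when $s_{\underline{i}}^\mathrm{L} < \veps \leq s_{\underline{i}}$ or similar boundary cases.
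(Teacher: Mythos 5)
Your proposal is correct and follows essentially the same route as the paper's proof: identical treatment of the $\delta_\ell(\infty)$ terms, the same multi-index expansion of the convolved PLDs with shared weights, and the same monotonicity argument comparing $s_{\underline{i}}^\mathrm{L} \leq s_{\underline{i}} \leq s_{\underline{i}}^\mathrm{R}$. Your only addition is making the truncation at $\veps$ explicit by folding it into $g(s)=\max\{0,1-\ee^{\veps-s}\}$, which is a slightly more careful phrasing of the step the paper states as monotonicity of $1-\ee^{\veps-s}$ for $s\geq\veps$.
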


\subsection{Truncation and Periodisation} 

By truncating convolutions and periodising the PLD distributions we arrive at periodic sums to which the FFT is directly applicable.
These operations are analogous to the homogeneous case described in~\citep{koskela2021tight}. We describe them next shortly.

Suppose $\omega_1$ and $\omega_2$ are defined such that 
\begin{equation} \label{eq:omega}
	\omega_1(s) = \sum\nolimits_i a_i \cdot \deltas{s_i}, \quad \omega_2(s) = \sum\nolimits_i b_i \cdot \deltas{s_i},
\end{equation}
where for all $i$: $a_i,b_i \geq 0$ and $s_i = i \Delta x$. 
The convolution $\omega_1 * \omega_2$ can then be written as 
\begin{equation} \label{eq:o_conv}
	\begin{aligned}
	(\omega_1 * \omega_2)(s)  & =  \sum\nolimits_{i,j} a_i b_j \cdot \deltas{s_i + s_j} \quad
	 = \sum\nolimits_i \Big(\sum\nolimits_j a_j b_{i-j} \Big) \cdot \deltas{s_i}.
\end{aligned}
\end{equation}
Let $L>0$. We truncate convolutions to the interval $[-L,L]$: 
\begin{equation*} 
	\begin{aligned}
	(\omega_1 * \omega_2 )(s)  \approx \sum\nolimits_i \Big(\sum\nolimits_{-L \leq s_j < L} a_j b_{i-j} \Big) \cdot \deltas{s_i}   
	 =: (\omega_1 \circledast \omega_2 )(s).
\end{aligned}
\end{equation*}
For $\omega_1$ of the form \eqref{eq:omega}, we define $\widetilde{\omega}_1$ to be a $2 L$-periodic extension of $\omega_1$
from $[-L,L)$ to $\mathbb{R}$, i.e., $\widetilde{\omega}_1$ is of the form
\begin{equation*} 
	\widetilde{\omega}_1(s) = \sum\nolimits_{m \in \mathbb{Z}} \, \sum\nolimits_i a_i \cdot \deltas{s_i + m \cdot 2 L}.
\end{equation*}
For $\omega_1$ and $\omega_2$ of the form \eqref{eq:omega}, we approximate the convolution $\omega_1 * \omega_2$ as
\begin{equation} \label{eq:conv_approximation}
	\omega_1 * \omega_2 \approx \widetilde{\omega}_1 \circledast \widetilde{\omega}_2.
\end{equation}
Since $\omega_1$ and $\omega_2$ are defined on an equidistant grid, 
FFT can be used to evaluate the approximation $\widetilde{\omega}_1 \circledast \widetilde{\omega}_2$
as follows:
\begin{lem} \label{lem:fft}
Let $\omega_1$ and $\omega_2$ be of the form \eqref{eq:omega}, such that $s_i = -L + i \Delta x$, $0 \leq i \leq n-1$, where $L>0$, $n$ is even and 
$\Delta x = 2L/n$.
Define
\begin{equation*} 
 	\begin{aligned}
\boldsymbol{a}  = \begin{bmatrix} a_0 & \ldots & a_{n-1} \end{bmatrix}^\mathrm{T},
\quad  \boldsymbol{b} = \begin{bmatrix} b_0 & \ldots & b_{n-1} \end{bmatrix}^\mathrm{T},
\quad D = \begin{bsmallmatrix} 0 & I_{n/2} \\ I_{n/2} & 0 \end{bsmallmatrix} \in \mathbb{R}^{n \times n}.
	\end{aligned}
\end{equation*} 
Then, 
$$
(\widetilde{\omega}_1 \circledast \widetilde{\omega}_2 )(s) = \sum\nolimits_{i=0}^{n-1} c_i \cdot \deltas{s_i},
\quad
\textrm{where}
\quad
c_i = \left[D \, \mathcal{F}^{-1} \big(\mathcal{F}( D \, \boldsymbol{a} ) \odot \mathcal{F}( D \, \boldsymbol{b} )    \big) \right]_i.
$$
\end{lem}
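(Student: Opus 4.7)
The plan is to reduce the truncated periodic convolution $\widetilde{\omega}_1 \circledast \widetilde{\omega}_2$ to a standard discrete circular convolution of the coefficient vectors, and then invoke the convolution theorem; the matrix $D$ will appear because the grid $\{-L + i\Delta x\}_{i=0}^{n-1}$ starts at $-L$ rather than at $0$.

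First I would expand the definition of $\widetilde{\omega}_1 \circledast \widetilde{\omega}_2$. Because $\widetilde{\omega}_1$ and $\widetilde{\omega}_2$ are $2L$-periodic and supported on the grid $X_n$, each Dirac mass $\delta_{s_i+s_j}$ arising from a product $a_i b_j$ gets folded back into $[-L,L)$ via the identification $s_i+s_j \equiv s_k \pmod{2L}$ for a unique $k \in \{0,\ldots,n-1\}$. Since $s_i+s_j = -2L + (i+j)\Delta x$ and $0 \le i+j \le 2n-2$, a short case analysis shows the fold-back index is $k \equiv (i+j) - n/2 \pmod n$. Collecting all pairs with the same $k$ yields
\begin{equation*}
(\widetilde{\omega}_1 \circledast \widetilde{\omega}_2)(s) \;=\; \sum_{k=0}^{n-1} c_k\cdot \delta_s(s_k), \qquad c_k \;=\; \sum_{j=0}^{n-1} a_j\, b_{(k + n/2 - j)\bmod n}.
\end{equation*}

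Thus $c_k = (a \ast_\text{circ} b)_{k + n/2\bmod n}$, where $\ast_\text{circ}$ is the circular convolution on $\mathbb{Z}/n\mathbb{Z}$. The convolution theorem for the DFT, stated already in the excerpt, gives $a \ast_\text{circ} b = \mathcal{F}^{-1}(\mathcal{F} a \odot \mathcal{F} b)$. The index shift by $n/2$ is precisely the action of the matrix $D$, since $(Dv)_k = v_{(k+n/2)\bmod n}$. Hence
\begin{equation*}
c_k \;=\; \bigl[D\,\mathcal{F}^{-1}(\mathcal{F} a \odot \mathcal{F} b)\bigr]_k.
\end{equation*}

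Finally, to match the form stated in the lemma I would observe that shifting a vector by $n/2$ in the time domain corresponds to multiplication by $\ee^{-\ii \pi k}=(-1)^k$ in the frequency domain, so $\mathcal{F}(Da)_k = (-1)^k (\mathcal{F} a)_k$ and likewise for $b$. The factors $(-1)^k$ cancel in the element-wise product, giving $\mathcal{F}(Da)\odot\mathcal{F}(Db) = \mathcal{F}(a)\odot\mathcal{F}(b)$, so the expression $D\,\mathcal{F}^{-1}\bigl(\mathcal{F}(Da)\odot\mathcal{F}(Db)\bigr)$ in the lemma coincides with $D\,\mathcal{F}^{-1}(\mathcal{F}(a)\odot\mathcal{F}(b))$. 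The main obstacle is just bookkeeping the index fold-back carefully so that the $n/2$ shift — and thus $D$ — is identified correctly; once that is done, the rest is a direct application of the DFT convolution theorem.
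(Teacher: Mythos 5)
Your proposal is correct and follows essentially the same route as the paper: identify the coefficients of $\widetilde{\omega}_1 \circledast \widetilde{\omega}_2$ as a circular convolution shifted by $n/2$ (which is exactly where $D$ enters) and then invoke the DFT convolution theorem; your fold-back index $k \equiv (i+j)-n/2 \pmod{n}$ and the resulting $c_k=\sum_{j=0}^{n-1}a_j\,b_{(k+n/2-j)\bmod n}$ agree with the paper's computation. The only cosmetic difference is that the paper applies $D$ to the input vectors before using the convolution theorem, whereas you convolve $\boldsymbol{a}$ and $\boldsymbol{b}$ directly and recover the stated form via the modulation identity $\mathcal{F}(D v)_k=(-1)^k(\mathcal{F}v)_k$, which is a valid equivalent bookkeeping.
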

Since the coefficients of $\widetilde{\omega}_1 \circledast \widetilde{\omega}_2$ are given by the discrete Fourier transform,
we are able to analyse the error induced by the FFT approximation by only considering the error of the approximation \eqref{eq:conv_approximation}.

\begin{algorithm}[ht!]
\caption{ Fourier Accountant Algorithm for Heterogeneous Discrete-Valued Mechanisms}
\begin{algorithmic}
\STATE{Input: distributions $\omega_1, \ldots, \omega_m$ of the form 
$\omega_j(s) = \sum\nolimits_i a_i^j \cdot \deltas{s_i}$,
$1 \leq j \leq m$, such that $s_i = -L + i \Delta x$, where 
$n$ is even and, $0 \leq i \leq n-1$, $\Delta x=2L/n$. 
Numbers of compositions for each mechanism, $k_1,\ldots, k_m$.}
\vspace{2mm}
\STATE{Set
\begin{equation*}
\begin{aligned}
	\boldsymbol{a}^j = \begin{bmatrix} a_0^j & \ldots & a_{n-1}^j \end{bmatrix}^\mathrm{T}, \quad 1 \leq j \leq m.
\end{aligned}
\end{equation*}
}
\STATE{ 
For each $j$, $1 \leq j \leq m$, evaluate the FFT:
$$
\widetilde{\boldsymbol{a}}^j = \mathcal{F}( D \boldsymbol{a}^j ).
$$
Compute the element-wise products and apply $\mathcal{F}^{-1}$: 
\begin{equation*}
\boldsymbol{b} = \left[D \, \mathcal{F}^{-1} \big( (\widetilde{\boldsymbol{a}}^1)^{\odot k_1} \odot \cdots \odot (\widetilde{\boldsymbol{a}}^m)^{\odot k_m} \big) \right].
\end{equation*}
} 
\STATE{Approximate $\delta(\veps)$: 
\begin{equation*}
	\begin{aligned}
		\delta( \veps)  \approx  1 - \prod\limits_{\ell=1}^m (1-\delta_{X/Y,\ell}(\infty))^{k_\ell} 
	 +  \sum\limits_{\{ \ell \, : \, -L + \ell \Delta x > \veps \}}  \big(1 - \ee^{\veps - ( - L + \ell \Delta x)} \big) \, b_\ell,
	\end{aligned}
\end{equation*}
where $\delta_{X/Y,\ell}(\infty)$ is defined in Theorem~\ref{thm:integral}.
}
\end{algorithmic}
\label{alg:delta}
\end{algorithm}

\subsection{Computing Upper Bounds for $\delta(\veps)$}

Given a discrete-valued PLD distribution $\omega$, we get a strict upper $\delta(\veps)$-DP bound as follows.
Using parameter values $L>0$ and $n \in \mathbb{Z}^+$, we form a grid $X_n$ as defined in \eqref{eq:grid} and 
place each PLDs $\omega_i$, $1 \leq i \leq k$, on $X_n$ to obtain $\omega_i^\mathrm{R}$ as defined in \eqref{eq:omegaRL}. 
We then approximate $\delta^R(\veps)$ using Algorithm~\ref{alg:delta}. We estimate the error incurred by truncation of convolutions
periodisation of PLDs using  Thm.~\ref{thm:alg_error_bound1}.
By adding this error to the approximation given by Algorithm~\ref{alg:delta} 
we obtain a strict upper bound for $\delta(\veps)$.
The parameter $n$ can be increased in case the discretisation error bound given by Thm.~\ref{thm:discretisation_error} is too large.


\section{Error Analysis} \label{sec:err_est}

We next bound the error induced by the grid approximation and Algorithm~\ref{alg:delta}. 
The total error consists of (see the Appendix for more details)
\begin{enumerate}
	\item The tail integral $\int_L^\infty (1 - \ee^{\veps - s})(\omega_1 * \cdots * \omega_k ) (s)  \, \dd s$.
	\item The error arising from periodisation of $\omega$ and truncation of the convolutions (affected by $L$):
	\begin{equation*}
		\begin{aligned}
			 \int_\veps^L (1 - \ee^{\veps - s})(\omega_1 * \cdots * \omega_k ) (s)  \, \dd s 
			-  \int_\veps^L (1 - \ee^{\veps - s})(\widetilde{\omega}_1 \circledast \cdots  \circledast \widetilde{\omega}_k ) (s)  \, \dd s.
		\end{aligned}
	\end{equation*}
	\item The discretisation error arising from the grid approximations (affected by both $L$ and $n$):
	\begin{equation*}
		\begin{aligned}
				 \int_\veps^L (1 - \ee^{\veps - s})(\omega_1 * \cdots * \omega_k ) (s)  \, \dd s 
				-  \int_\veps^L (1 - \ee^{\veps - s})(\omega_1^\mathrm{R} * \cdots * \omega_k^\mathrm{R}) (s)  \, \dd s.
		\end{aligned}
	\end{equation*}
\end{enumerate}

\subsection{Bounding Tails Using the Chernoff Bound} \label{subsubsec:num_approx1}

We obtain error bounds essentially using 
the Chernoff bound 
$\mathbb{P}[Z \geq t] 
\leq \tfrac{ \mathbb{E}[ \ee^{\lambda Z} ] }{\ee^{\lambda t}}$
which holds for any random variable $Z$ and all $\lambda > 0$.
Suppose $\omega_{X/Y}$ is of the form 
\begin{equation} \label{eq:omega_xy}
	\omega_{X/Y}(s) = \sum\nolimits_i a_{X,i} \cdot \deltas{s_i},  
\end{equation} 
where $s_i = \log \left( \tfrac{a_{X,i}}{a_{Y,i}} \right)$, $a_{X,i},a_{Y,i}>0$.  
Then, the moment generating function of $\omega_{X/Y}$ is given by
\begin{equation} \label{eq:pld_lmf}
	\begin{aligned}
		\mathbb{E} [\ee^{\lambda \omega_{X/Y} }] 
		 = \sum\nolimits_i  \ee^{\lambda s_i} \cdot a_{X,i}  
		 = \sum\nolimits_i  \left( \frac{a_{X,i}}{a_{Y,i}}  \right)^\lambda a_{X,i}.
	\end{aligned}
\end{equation}

In our analysis, we repeatedly use the Chernoff bound to bound tails of PLD distributions in terms of
pre-computable moment-generating functions.
Denote $S_k := \sum_{i=1}^k \omega_i$, where $\omega_i$ denotes the PLD random variable of the $i$th mechanism.
If $\omega_i$'s are independent, 
$
\mathbb{E} [ \ee^{\lambda S_k}  ]  = \prod\nolimits_{i=1}^k \mathbb{E} [ \ee^{\lambda \omega_i}  ].
$
Then, the Chernoff bound shows that for any $\lambda > 0$
\begin{equation} \label{eq:tail_bound}
	\begin{aligned}
		\int_L^\infty  ( \omega_{1} * \cdots * \omega_{k})(s) \, \dd s = \mathbb{P}[ S_k \geq L ]
		\leq \prod\nolimits_{i=1}^k \mathbb{E} [ \ee^{\lambda \omega_i}  ] \,  \ee^{- \lambda L}
		 \leq \ee^{\sum_{i=1}^k \alpha_i(\lambda)}  \ee^{- \lambda L},
	\end{aligned}
\end{equation}
where $\alpha_i(\lambda) = \log( \mathbb{E} [\ee^{\lambda \omega_i }] )$.

\subsection{Truncation and Periodisation Error} \label{sec:trunc_error}


Denote the logarithms of the moment generating functions of the PLDs as
$$
\alpha_i^+(\lambda) = \log \Big(	\mathbb{E} [\ee^{\lambda \omega_i }] \Big),
\quad  \alpha_i^-(\lambda) = \log \Big(	\mathbb{E} [\ee^{ - \lambda \omega_i }] \Big),
$$
where $1 \leq i \leq k$. Furthermore, denote 
\begin{equation} \label{eq:alphaplusminus}
	\alpha^+(\lambda) = \sum\nolimits_i \, \alpha^+_i(\lambda),
	\quad  \alpha^-(\lambda) = \sum\nolimits_i \, \alpha^-_i(\lambda).
\end{equation}
To obtain $\alpha^+(\lambda)$ and $\alpha^-(\lambda)$, 
we evaluate  the moment generating functions 
using the finite sum \eqref{eq:pld_lmf}. 

Using the analysis given in the Appendix, we bound the errors arising from the
periodisation of the distribution and truncation of the convolutions. As a result, when combining with the Chernoff bound \eqref{eq:tail_bound}, we obtain
the following two bounds for the total error incurred by Algorithm~\ref{alg:delta}.

\begin{thm} \label{thm:alg_error_bound1}
Let $\omega_i$'s be defined on the grid $X_n$ as described above (i.e., $s_j \in [-L,L-\Delta x]$ for all $j$). 
Let $\delta(\veps)$ give the tight $(\veps,\delta)$-bound for the PLDs $\omega_1, \ldots, \omega_k$
and let $\widetilde{\delta}(\veps)$ be the result of Algorithm~\ref{alg:delta}.
Then, for all $\lambda > 0$
\begin{equation*}
	\begin{aligned}
 \abs{ \delta(\veps)  -  \widetilde{\delta}(\veps) }
  \leq	 \big(  \ee^{\alpha^+(\lambda)} + \ee^{ \alpha^-(\lambda)} \big) \, \frac{\ee^{-L \lambda}}{1- \ee^{-2 L \lambda}}.
 	\end{aligned}
\end{equation*}	
\end{thm}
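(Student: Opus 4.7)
The approach is to write both $\delta(\veps)$ and $\widetilde\delta(\veps)$ as integrals against the $k$-fold convolution $g := \omega_1 * \cdots * \omega_k$ and bound their difference using two ingredients: the identity that expresses the periodised truncated convolution computed by the FFT as an aliased sum of $g$, and the Chernoff-type tail bounds of Section~\ref{subsubsec:num_approx1}.

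First, iterating Lemma~\ref{lem:fft} over the $k$ factors shows that the coefficients $b_\ell$ produced in Algorithm~\ref{alg:delta} correspond on $[-L, L)$ to $\widetilde g(s) := \sum_{m\in \mathbb{Z}} g(s + 2mL)$, so that $\widetilde\delta(\veps)$ equals the constant $1 - \prod_\ell (1 - \delta_{X/Y,\ell}(\infty))^{k_\ell}$ plus $\int_\veps^L (1 - \ee^{\veps - s}) \widetilde g(s)\,\dd s$. Theorem~\ref{thm:integral} gives the same constant plus $\int_\veps^\infty (1 - \ee^{\veps - s}) g(s)\,\dd s$ for $\delta(\veps)$. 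The constants cancel and a short rearrangement leaves
\begin{equation*}
\delta(\veps) - \widetilde\delta(\veps) \;=\; \int_L^\infty (1 - \ee^{\veps - s}) g(s)\,\dd s \;-\; \sum_{m \neq 0}\int_\veps^L (1 - \ee^{\veps - s}) g(s + 2mL)\,\dd s.
\end{equation*}

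Next, because $0 \le 1 - \ee^{\veps - s} \le 1$ whenever $s \ge \veps$, each term on the right is bounded in absolute value by a plain integral of $g$ over the corresponding shifted region. For $m \ge 1$, the substitution $u = s + 2mL$ maps $\int_\veps^L g(s + 2mL)\,\dd s$ into $\int_{\veps + 2mL}^{(2m+1)L} g(u)\,\dd u \leq \int_{(2m-1)L}^\infty g(u)\,\dd u$, which is $\mathbb{P}[S_k \ge (2m-1)L]$ and is therefore at most $\ee^{\alpha^+(\lambda) - (2m-1)\lambda L}$ by the Chernoff bound \eqref{eq:tail_bound}. A symmetric substitution for $m \le -1$, combined with the Chernoff bound applied to $-S_k$, yields $\ee^{\alpha^-(\lambda) - (2|m|-1)\lambda L}$. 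The truncation tail $\int_L^\infty g$ is similarly sliced into length-$2L$ blocks $[(2j+1)L, (2j+3)L]$ and each block is bounded by $\mathbb{P}[S_k \ge (2j+1)L] \le \ee^{\alpha^+(\lambda) - (2j+1)\lambda L}$, so that it slots into the same geometric series.

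Summing the resulting series $\sum_{m \ge 1} \ee^{-(2m-1)\lambda L} = \ee^{-\lambda L}/(1 - \ee^{-2\lambda L})$ separately for the $\alpha^+$ and the $\alpha^-$ contributions then delivers the claimed bound. The main obstacle is the bookkeeping: the ``truncation'' tail $\int_L^\infty g$ and the ``periodisation'' aliases with $m \ge 1$ must be handled jointly and re-bounded by Chernoff at shifted thresholds so that the denominator $1 - \ee^{-2\lambda L}$ assembles cleanly; a naive bound that treats tail and aliases independently leaves a stray $\ee^{-\lambda L}$ term outside the geometric sum and inflates the constant. Once the exponents are lined up, the remaining steps are routine.
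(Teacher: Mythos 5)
Your skeleton matches the paper's argument: identify the algorithm's output on $[-L,L)$ with the aliased convolution $\widetilde g(s)=\sum_{m\in\mathbb{Z}} g(s+2mL)$ (this is exactly what the appendix derives by tracking the periodised coefficients), bound each aliased contribution by the Chernoff bound at thresholds $(2m-1)L$, and sum the geometric series. Up to that point the computation is correct.

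The gap is in the final combination of the truncation tail with the aliasing terms, and it is not mere bookkeeping. Your decomposition is
$\delta(\veps)-\widetilde\delta(\veps)=T-A$ with
$T=\int_L^\infty(1-\ee^{\veps-s})g(s)\,\dd s\ge 0$ and
$A=\sum_{m\neq 0}\int_\veps^L(1-\ee^{\veps-s})g(s+2mL)\,\dd s\ge 0$.
If you bound $\abs{T-A}\le T+A$ by the triangle inequality, then no re-slicing of $T$ helps: slicing $[L,\infty)$ into blocks and applying Chernoff at $(2j+1)L$ gives the series $\ee^{\alpha^+(\lambda)}\,\ee^{-L\lambda}/(1-\ee^{-2L\lambda})$ for $T$ \emph{in addition to} the same $\alpha^+$-series already produced by the $m\ge 1$ aliases, so you end up with $\big(2\ee^{\alpha^+(\lambda)}+\ee^{\alpha^-(\lambda)}\big)\ee^{-L\lambda}/(1-\ee^{-2L\lambda})$, not the stated constant. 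The missing idea is to use the sign structure rather than the triangle inequality: since $T\ge 0$ and $A\ge 0$, one has $\abs{T-A}\le\max\{T,A\}$, and both $T$ and $A$ are separately at most $\big(\ee^{\alpha^+(\lambda)}+\ee^{\alpha^-(\lambda)}\big)\ee^{-L\lambda}/(1-\ee^{-2L\lambda})$ (indeed $T\le\mathbb{P}[S_k\ge L]\le\ee^{\alpha^+(\lambda)}\ee^{-L\lambda}$). Even sharper, the $m\ge1$ aliases are dominated by pieces of the tail: substituting $u=s+2mL$ maps their ranges into disjoint subsets of $[L,\infty)$ and the weight satisfies $1-\ee^{\veps-(u-2mL)}\le 1-\ee^{\veps-u}$, so $0\le T-A^{+}\le T$ and hence $\abs{\delta-\widetilde\delta}\le\max\{T,A^{-}\}$, which gives the claimed bound. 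This cancellation step (equivalently, the paper's bound of the total aliased mass by $\mathbb{P}(S_k\notin[-L,L))$ together with the observation that the truncation error need not be added on top) is what your write-up asserts can be arranged "so that the denominator assembles cleanly" but never actually supplies; as written, your proof establishes only the weaker constant.
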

As $s_i$'s correspond to the logarithmic ratios of probabilities of individual events,
often a moderate $L$ is sufficient for $-L \leq s_i \leq L - \Delta x$ to hold for all $i$.
In the Appendix, we give a bound which holds also 
in case $s_i$'s are not inside the interval $[-L,L)$.  


\subsection{Bound for the Discretisation Error} \label{sec:discr_error}

Let $\omega_1,\ldots, \omega_k$ be PLD distributions of the form \eqref{eq:omega_0}. 
For each $\ell$, denote the PLD as $\omega_\ell(s) = \sum\nolimits_i a_i^\ell \cdot \deltas{s_i^\ell}$
and the corresponding left and right grid approximation (defined in \eqref{eq:omegaRL}) as
$$
\omega^\mathrm{L}_\ell(s) = \sum\nolimits_i a_i^\ell \cdot \deltas{s_i^{\mathrm{L},\ell}}  \quad \textrm{and} \quad 
\omega^\mathrm{R}_\ell(s) = \sum\nolimits_i a_i^\ell \cdot \deltas{s_i^{\mathrm{R},\ell}} 
$$
and the tight $(\veps,\delta)$-bound corresponding to the PLDs $\omega^\mathrm{L}_1 * \cdots * \omega^\mathrm{L}_k$ 
and $\omega^\mathrm{R}_1 * \cdots * \omega^\mathrm{R}_k$ by $\delta^\mathrm{L}(\veps)$ and $\delta^\mathrm{R}(\veps)$. 
We have the following bound for the error arising from the right grid approximation.
\begin{thm} \label{thm:discretisation_error}
Let $\delta(\veps)$ denote the tight $(\veps,\delta)$-bound for the convolution PLD $\omega_1 * \cdots * \omega_k$.
The discretisation error $\delta^\mathrm{R}(\veps) - \delta(\veps)$ can be bounded as
\begin{equation} \label{eq:discretisation_error}
	\delta^\mathrm{R}(\veps) - \delta(\veps) \leq   k \Delta x \,  \big( \mathbb{P}( \omega_1 + \cdots + \omega_k \geq \veps ) - \delta(\veps) \big).
\end{equation}	
\end{thm}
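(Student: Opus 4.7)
The plan is to couple each right-discretised $\omega^{\mathrm{R}}_\ell$ with the original $\omega_\ell$ through a right-shift of its atoms: every atom of $\omega_\ell$ at position $s_i^\ell$ is sent to $s_i^{\mathrm{R},\ell}=s_i^\ell+\xi_i^\ell$ with $\xi_i^\ell\in[0,\Delta x]$, so the convolved random variable satisfies $S^{\mathrm{R}}=S+\Xi$ where $S=\omega_1+\cdots+\omega_k$ and $0\le\Xi\le k\Delta x$ is a deterministic function of the contributing atoms. Because a shift of less than $\Delta x$ does not move any mass to or from $+\infty$, the $\delta_{X/Y,\ell}(\infty)$ terms of Theorem~\ref{thm:integral} are identical for $\omega^{\mathrm{R}}_\ell$ and $\omega_\ell$ and cancel in the difference, leaving
\[
\delta^{\mathrm{R}}(\veps)-\delta(\veps)=\mathbb{E}\bigl[g_\veps(S+\Xi)-g_\veps(S)\bigr],\qquad g_\veps(s):=(1-\ee^{\veps-s})_+.
\]

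The main ingredient is a pointwise estimate of the increment. On $\{S\ge\veps\}$, $g_\veps$ is concave and differentiable with $g_\veps'(s)=\ee^{\veps-s}$, so $g_\veps(s+h)-g_\veps(s)\le h\,\ee^{\veps-s}$ for any $h\ge0$. Combined with $\Xi\le k\Delta x$, this yields
\[
\mathbb{E}\bigl[(g_\veps(S+\Xi)-g_\veps(S))\,\mathbb{1}_{S\ge\veps}\bigr]\le k\Delta x\cdot\mathbb{E}\bigl[\ee^{\veps-S}\mathbb{1}_{S\ge\veps}\bigr]=k\Delta x\bigl(\mathbb{P}(S\ge\veps)-\delta(\veps)\bigr),
\]
where the last equality is obtained by the PLD manipulation used already in Section~\ref{subsubsec:num_approx1}: writing $a_{X,i}\ee^{-s_i}=a_{Y,i}$ for each atom of the convolved PLD converts the tail expectation $\int_\veps^\infty\ee^{\veps-s}(\omega_1*\cdots*\omega_k)(s)\,\dd s$ into exactly $\mathbb{P}(S\ge\veps)-\delta(\veps)$ via Theorem~\ref{thm:integral}. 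Thus the contribution from $\{S\ge\veps\}$ already matches the claimed bound.

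The main obstacle is the boundary region $\{S<\veps\le S+\Xi\}$, where $g_\veps(S)=0$ but $g_\veps(S+\Xi)>0$ and the concavity bound does not apply. Here I would use $g_\veps(S+\Xi)=1-\ee^{-(S+\Xi-\veps)}\le S+\Xi-\veps\le\Xi$, then re-express each boundary atom's contribution via the PLD identity $a_{X,i}=a_{Y,i}\ee^{s_i}$ as $a_{Y,i}(\ee^{s_i}-\ee^{\veps-\xi_i})=a_{Y,i}\ee^{\veps-\xi_i}(\ee^{s_i+\xi_i-\veps}-1)$, which the inequality $\ee^x-1\le x\ee^x$ controls by $\xi_i\,a_{X,i}\le\Delta x\,a_{X,i}$. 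The delicate final step is to show that after summing these boundary contributions over the tuple indices of the $k$-fold convolution, they collapse into the same $k\Delta x(\mathbb{P}(S\ge\veps)-\delta(\veps))$ quantity on the right-hand side of the theorem, rather than into the weaker $k\Delta x(\mathbb{P}(S\ge\veps-k\Delta x)-\delta(\veps-k\Delta x))$ that a naive pointwise argument produces; this absorption step is where the combinatorial structure of the convolution must be exploited.
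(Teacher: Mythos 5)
Your handling of the region $\{S \ge \veps\}$ coincides with the paper's own proof: the paper expands $\omega_1 * \cdots * \omega_k$ over index tuples, applies $a\le b \Rightarrow \ee^{b}-\ee^{a}\le \ee^{b}(b-a)$ with total rounding shift at most $k\Delta x$, and identifies $\sum a^1_{i_1}\cdots a^k_{i_k}\,\ee^{\veps-s^1_{i_1}-\cdots-s^k_{i_k}} = \mathbb{P}(S\ge\veps)-\delta(\veps)$. Your coupling $S^{\mathrm{R}}=S+\Xi$, $0\le\Xi\le k\Delta x$, the derivative bound on $g_\veps$, and the cancellation of the $\delta_{X/Y,\ell}(\infty)$ terms are the same computation in different notation, so that part is fine.

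The gap is the crossing set $\{S<\veps\le S+\Xi\}$, and you should know that the ``absorption step'' you defer does not exist, nor is it in the paper: the paper's first display writes $\delta^{\mathrm{R}}(\veps)-\delta(\veps)$ as a sum over tuples with $s^1_{i_1}+\cdots+s^k_{i_k}\ge\veps$ only, i.e.\ it tacitly assumes that rounding carries no atom of the convolution across the threshold $\veps$. That the crossing mass cannot be folded into $k\Delta x\big(\mathbb{P}(S\ge\veps)-\delta(\veps)\big)$ is already visible for $k=1$: take a PLD with an atom of mass $a>0$ at some $s_1\in(\veps-\Delta x,\veps)$ whose right grid neighbour satisfies $s_1^{\mathrm{R}}\ge\veps$, with the remaining mass far below $\veps$. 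Then $\delta(\veps)=0$ and $\mathbb{P}(\omega_1\ge\veps)=0$, so the right-hand side of \eqref{eq:discretisation_error} vanishes, while $\delta^{\mathrm{R}}(\veps)=a\big(1-\ee^{\veps-s_1^{\mathrm{R}}}\big)>0$. Consequently your proposal, like the paper's argument, proves the bound only for the contribution of tuples with $S\ge\veps$; what your pointwise estimate $g_\veps(S+\Xi)\le\Xi\le k\Delta x$ on the crossing event actually yields is a corrected statement such as
\[
\delta^{\mathrm{R}}(\veps)-\delta(\veps)\;\le\;k\Delta x\,\big(\mathbb{P}(S\ge\veps)-\delta(\veps)\big)\;+\;k\Delta x\,\mathbb{P}\big(\veps-k\Delta x\le S<\veps\big),
\]
or the shifted form you call ``weaker.'' The honest way to finish is to state the result with such an extra term, or under the additional hypothesis that the convolution places no mass in $[\veps-k\Delta x,\veps)$ (so no atom crosses $\veps$ under rounding), rather than to search for a combinatorial identity over the convolution tuples, which the one-atom example rules out.
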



\begin{remark} \label{remark:discretisation_error}
Theorem~\ref{thm:discretisation_error} instantly gives the bound
\begin{equation} \label{eq:deltax_bound}
	\delta^\mathrm{R}(\veps) - \delta(\veps) \leq k \Delta x \, \big( 1 - \delta(\veps) \big) \leq k \Delta x.
\end{equation}
On the other hand, the bound \eqref{eq:discretisation_error} and the Chernoff bound \eqref{eq:tail_bound} give
\begin{equation} \label{eq:discr_upper}
	\begin{aligned}
	\delta^\mathrm{R}(\veps) - \delta(\veps)  \leq k \Delta x \, \mathbb{P}( \omega_1 + \cdots + \omega_k \geq \veps ) 
	 \leq k \Delta x \, \ee^{\sum\nolimits_i \alpha_i(\lambda)} \ee^{- \lambda \veps}
	\end{aligned}
\end{equation}
which holds for any $\lambda > 0$. By choosing $\lambda$ appropriately, this leads to a considerably tighter a priori bound than \eqref{eq:deltax_bound}.
\end{remark}


\textbf{Experimental Illustration. } Tables 1 to 3 illustrate the discretisation error bound \eqref{eq:discr_upper}.  
We consider the one-dimensional binomial mechanism~\citep{agarwal2018}, where a binomially distributed 
noise $Z$ with parameters $n \in \mathbb{N}$ and $0<p<1$ is added to the output of a query $f$. Denoting the sensitivity
of $f$ by $\Delta$, tight $(\veps,\delta)$-bounds are obtained by considering the PLD $\omega_{X/Y}$ given by the distributions
$f_X$ and $f_Y$, where 
$$ 
f_X \sim  \Delta + \mathrm{Bin}( N,p) \quad \textrm{and} \quad f_Y \sim \mathrm{Bin}(N,p).
$$
We set $N=1000$, $p=0.5$, $\Delta=1$ and $L=5.0$.
In the numerical implementation 
we compute logarithmic probabilities using the digamma function and use those to evaluate 
the values of $\alpha^+(\lambda)$ and $\alpha^+(\lambda)$ required by the error bounds. 
For the upper bound \eqref{eq:discr_upper} we take the minimum of the bounds computed with $\lambda \in \{0.5L,1.0L,2.0L,3.0L,4.0L\}$.


\begin{table}[h!]
\begin{center}
\begin{tabular}{ccc}
 \hline
 $n$ &     error bound \eqref{eq:discr_upper} & $\delta(\veps)$ \\
 \hline
	$10^5$ &   $6.31 \cdot 10^{-6}$ & $ 2.37864 \cdot 10^{-5}$ \\
	$10^6$ &   $6.31 \cdot 10^{-7}$ & $ 2.35330 \cdot 10^{-5}$ \\
	$10^7$ &   $6.31 \cdot 10^{-8}$ & $ 2.35039 \cdot 10^{-5}$ \\
	$10^8$ &   $6.31 \cdot 10^{-9}$ & $ 2.35011 \cdot 10^{-5}$ \\
 \hline
 \end{tabular} 
 \caption{The error bound \eqref{eq:discr_upper} for different values of $n$ when $\veps=1.0$, $k=20$, and the corresponding $\delta(\veps)$-value.
We see that the bound is not far from the magnitude of the actual error. }
 \label{table:disc3}
 \begin{tabular}{ccc}
  \hline
  $\veps$ &     error bound \eqref{eq:discr_upper} & $\delta(\veps)$ \\
  \hline
 	0.7 &   $1.32 \cdot 10^{-6}$ & $8.62596 \cdot 10^{-4}$ \\
 	1.1 &   $1.79 \cdot 10^{-8}$ & $5.66127 \cdot 10^{-6}$ \\
 	1.5 &   $3.31 \cdot 10^{-11}$ & $6.03580 \cdot 10^{-9}$ \\
 	1.9 &   $8.36 \cdot 10^{-15}$ & $9.82392 \cdot 10^{-13}$ \\
  \hline
  \end{tabular}
  \caption{The error bound \eqref{eq:discr_upper} for different values of $\veps$ when $n=10^7$ and $k=20$ and the corresponding $\delta(\veps)$-value. 
  We see that the bound \eqref{eq:discr_upper} stays small in relation to $\delta(\veps)$ as $\delta$ decreases. }
  \label{table:disc2}
\end{center}
\end{table}
%
\subsection{Upper Bound for the Computational Complexity}


The results by~\citet{murtagh2018complexity} state that there is no algorithm for computing tight $(\veps,\delta)$-bounds that would have polynomial complexity in $k$, number of compositions.
However, Theorem 1.7 by~\citet{murtagh2018complexity} states that allowing a small error in the output, the bounds can be evaluated efficiently.
Assuming there are $m<k$ distinct mechanisms in the composition, using the error analysis of Sections~\ref{sec:trunc_error} and~\ref{sec:discr_error}, 
we obtain the following bound for the evaluation of tight $\delta$ as a function of $\veps$.
This slightly improves the the complexity result by~\citet{murtagh2018complexity}.

\begin{lem}
Consider a non-adaptive composition of the mechanisms $\mathcal{M}_1, \ldots, \mathcal{M}_k$ with corresponding worst-case pairs of distributions
$f_{X,i}$ and $f_{Y,i}$, $1 \leq i \leq k$. Suppose the sequence $\mathcal{M}_1, \ldots, \mathcal{M}_k$ consists of $m$ distinct mechanisms.
Then, it is possible to have an approximation of $\delta(\veps)$ within error less than $\eta$
with number of operations
$$
\mathcal{O}\left(\frac{2 m \cdot k^2 \cdot C_k}{\eta} \log  \frac{k^2 \cdot C_k}{\eta} \right),
$$
where 
$$
C_k = \max \{ \tfrac{1}{k} \sum_i D_{\infty}(f_{X,i} || f_{Y,i}),  \tfrac{1}{k} \sum_i D_{\infty}(f_{Y,i} || f_{X,i}) \},
\quad
D_{\infty}(f_X || f_Y) = \sup_{a_{Y,i} \neq 0} \log \frac{a_{X,i}}{a_{Y,i}},
$$
and the additional factor in the leading constant is the leading constant of the FFT algorithm.
\end{lem}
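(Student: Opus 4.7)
My plan is to pick the algorithm parameters so that the periodisation/truncation contribution of Theorem~\ref{thm:alg_error_bound1} vanishes exactly, leaving only the discretisation error of Theorem~\ref{thm:discretisation_error} to spend the budget $\eta$; then I read off the FFT cost from the resulting grid size $n$.

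First I set $L = k\cdot C_k$. For each $\ell$ the support of the PLD $\omega_{X/Y,\ell}$ lies in $[-D_\infty(f_{Y,\ell}\|f_{X,\ell}),\,D_\infty(f_{X,\ell}\|f_{Y,\ell})]$, so the $k$-fold convolution $\omega_{X/Y,1}*\cdots*\omega_{X/Y,k}$ has support contained in the interval
$$\Big[-\sum\nolimits_\ell D_\infty(f_{Y,\ell}\|f_{X,\ell}),\;\sum\nolimits_\ell D_\infty(f_{X,\ell}\|f_{Y,\ell})\Big] \subseteq [-L,L]$$
by the definition of $C_k$; the same holds for the $Y/X$ direction. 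With this choice of $L$ the $2L$-periodic circular convolution of Lemma~\ref{lem:fft} agrees with the linear convolution on $[-L,L)$ and no mass is truncated, so the periodisation/truncation term in Theorem~\ref{thm:alg_error_bound1} is identically zero and only the discretisation error remains.

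Second, Remark~\ref{remark:discretisation_error} yields $\delta^{\mathrm R}(\veps)-\delta(\veps)\le k\Delta x$, so choosing $\Delta x = \eta/k$ keeps the total approximation error below $\eta$. Combined with $\Delta x = 2L/n$ this forces
$$n = \frac{2L}{\Delta x} = \frac{2k^2 C_k}{\eta}.$$
Third, I count operations in Algorithm~\ref{alg:delta}: the $m$ forward FFTs of length $n$ cost $O(mn\log n)$; the $m$ element-wise powers $(\widetilde{\boldsymbol a}^j)^{\odot k_j}$ and the $m-1$ element-wise products add $O(mn)$; the single inverse FFT contributes $O(n\log n)$; and the final summation for $\delta(\veps)$ is $O(n)$. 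The dominant term is $O(mn\log n)$, where the leading constant is that of the FFT. Substituting $n = 2k^2 C_k/\eta$ yields
$$O\!\left(\frac{m\cdot k^2\cdot C_k}{\eta}\log\frac{k^2\cdot C_k}{\eta}\right),$$
and running the algorithm once each for $\delta_{X/Y}$ and $\delta_{Y/X}$ (both are needed to form $\delta(\veps)$ via Theorem~\ref{thm:integral}) doubles the constant to the stated $2m$.

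The main obstacle in turning this sketch into a complete proof is cleanly justifying that the circular convolution of the periodised PLDs coincides with the linear convolution on $[-L,L)$ when $L = kC_k$: one has to argue that every translated copy of the support produced by the periodic extensions sits at distance at least $2L$ from the principal window, so no wraparound mass falls inside $[-L,L)$ and the FFT recurrence of Lemma~\ref{lem:fft} reproduces the exact convolution coefficients on which Algorithm~\ref{alg:delta} operates. Once this is in place, balancing the two parameters $L$ and $\Delta x$ as above gives the complexity bound directly.
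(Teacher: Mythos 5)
Your overall strategy---choose $L$ so large that no probability mass ever leaves the window, make the periodisation/truncation error exactly zero, and spend the whole budget $\eta$ on the discretisation error---is a legitimate alternative to the paper's argument (the paper instead keeps a nonzero periodisation error and bounds it by $\eta$ via the Chernoff bound with $\lambda=1$, which is why it ends up requiring $L \geq kC_k + 1 + \log\eta^{-1}$ and the side assumption $k \geq \log \eta^{-1}$; your route, if it worked, would avoid both). But as executed the key step fails: with $L = kC_k$ exactly, the circular convolution computed by Algorithm~\ref{alg:delta} does \emph{not} coincide with the linear convolution. The aliasing is governed by the grid-rounded PLDs $\omega_\ell^{\mathrm R}$ that the FFT actually operates on, not by the exact PLDs you reason about: the right-rounding moves every atom up by as much as $\Delta x$ (and an atom lying in $(L-\Delta x, L]$ cannot even be placed on the grid $X_n=\{-L,\ldots,L-\Delta x\}$, violating the hypothesis of Lemma~\ref{lem:fft} and Theorem~\ref{thm:alg_error_bound1}), so the $k$-fold convolution of the rounded PLDs can carry mass at points up to about $kC_k + k\Delta x > L$. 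That mass wraps around to the neighbourhood of $-L$, falls below $\veps$, and is silently dropped from the final sum. It need not be small: for a $k$-fold composition of randomised response the top atom of the exact convolution sits at $k c_p = kC_k$ with probability $p^k$, so after rounding it lands outside $[-L,L-\Delta x]$ and $\widetilde{\delta}(\veps)$ underestimates $\delta(\veps)$ by roughly $p^k$, which can exceed $\eta$ by many orders of magnitude. So the claim you yourself flag as the remaining obstacle---that no wraparound mass enters $[-L,L)$ when $L=kC_k$---is genuinely false at the level of the discretised objects.

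The gap is fixable and costs nothing asymptotically: enlarge the window to $L = kC_k + k\Delta x$ (equivalently add a buffer $\eta$, or simply $L=kC_k+1$), so that every rounded atom is a grid point and the support of $\omega_1^{\mathrm R}*\cdots*\omega_k^{\mathrm R}$ stays inside $[-L, L-\Delta x]$; then your support argument really does annihilate the periodisation and truncation errors, $n = 2L/\Delta x = 2k^2C_k/\eta + \mathcal{O}(k)$, and your operation count goes through, recovering the stated bound without the paper's $\log\eta^{-1}$ slack in $L$. Two smaller bookkeeping points: with $\Delta x = \eta/k$ the bound of Theorem~\ref{thm:discretisation_error} gives error $\leq \eta$ rather than strictly less, so shrink $\Delta x$ marginally; and the factor $2$ is a constant inside a big-O, so whether one attributes it to running both directions $X/Y$ and $Y/X$ (as you do) or to $\Delta x = 2L/n$ (as the paper implicitly does) is immaterial.
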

\subsection{Fast Evaluation Using the Plancherel Theorem}

When using Algorithm~\ref{alg:delta} to approximate $\delta(\veps)$, we need to evaluate the expression 
\begin{equation}\label{eq:summ} 
	\widetilde{\delta}(\veps) = \sum\nolimits_{ - L + \ell \Delta x > \veps}  \big(1 - \ee^{\veps - ( - L + \ell \Delta x)} \big) \, b^k_\ell, 
	\quad \textrm{where} \quad
\boldsymbol{b}^k = D \, \mathcal{F}^{-1} \big(\mathcal{F}( D \boldsymbol{a} )^{\odot k}   \big).
\end{equation}
When evaluating $\widetilde{\delta}(\veps)$ for different numbers of compositions $k$, 
we see that the inverse transform $\mathcal{F}^{-1}$ is the most expensive part as  the vector $\mathcal{F}( D \boldsymbol{a} )$ can be precomputed.
The following lemma shows that using the Plancherel theorem the updates of $\widetilde{\delta}(\veps)$ can actually be performed in $\mathcal{O}(n)$ time:
%
\begin{lem} \label{lem:plancherel}
Denote $\boldsymbol{w}_\veps \in \mathbb{R}^n$ such that
$(\boldsymbol{w}_\veps)_\ell = \max\{  1 - \ee^{\veps - ( - L + \ell \Delta x)} , 0\}$,
and let $\widetilde{\delta}(\veps)$ be given by \eqref{eq:summ}.
Then, we have that 
\begin{equation} \label{eq:planch0}
\widetilde{\delta}(\veps) = \frac{1}{n} \langle \mathcal{F}( D \boldsymbol{w}_\veps ) , \mathcal{F}( D \boldsymbol{a} )^{\odot k}   \rangle.
\end{equation}
\end{lem}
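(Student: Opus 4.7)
The plan is to unfold the definition of $\widetilde{\delta}(\veps)$ as a plain inner product, push the swap matrix $D$ and the inverse DFT through using their algebraic properties ($D^{\sf T}=D$ and Parseval), and read off the claimed identity.

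First I would observe that the summand in \eqref{eq:summ} can be written as a masked full sum: since $(\boldsymbol{w}_\veps)_\ell=0$ exactly when $-L+\ell\Delta x\le \veps$ and otherwise equals $1-\ee^{\veps-(-L+\ell\Delta x)}$, we have
\begin{equation*}
\widetilde{\delta}(\veps) = \sum\nolimits_{\ell=0}^{n-1}(\boldsymbol{w}_\veps)_\ell \, b^k_\ell = \langle \boldsymbol{w}_\veps, \boldsymbol{b}^k\rangle,
\end{equation*}
where $\langle\cdot,\cdot\rangle$ denotes the Euclidean inner product on $\RR^n$ (equivalently on $\CC^n$ with conjugation in the second argument; both vectors here are real). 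Substituting the closed form $\boldsymbol{b}^k = D\,\mathcal{F}^{-1}\bigl(\mathcal{F}(D\boldsymbol{a})^{\odot k}\bigr)$ and using that $D=\begin{bsmallmatrix}0 & I_{n/2}\\ I_{n/2} & 0\end{bsmallmatrix}$ is symmetric (so $D^{\sf T}=D$), I would move $D$ onto the first argument to get
\begin{equation*}
\widetilde{\delta}(\veps) = \langle D\boldsymbol{w}_\veps,\, \mathcal{F}^{-1}\bigl(\mathcal{F}(D\boldsymbol{a})^{\odot k}\bigr)\rangle.
\end{equation*}

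Next I would invoke Parseval's identity for the DFT defined in \eqref{def:mathcal{F}}. A direct calculation (exchanging the order of summation and using the orthogonality $\sum_{k}\ee^{\ii 2\pi k(j-j')/n}=n\,\mathbbm{1}_{j=j'}$) yields the Plancherel relation
\begin{equation*}
\langle x, y\rangle \;=\; \tfrac{1}{n}\langle \mathcal{F}x,\mathcal{F}y\rangle \quad\text{for all } x,y\in\CC^n.
\end{equation*}
Applying this with $x=D\boldsymbol{w}_\veps$ and $y=\mathcal{F}^{-1}(\mathcal{F}(D\boldsymbol{a})^{\odot k})$, and noting that $\mathcal{F}\mathcal{F}^{-1}=I$, gives
\begin{equation*}
\widetilde{\delta}(\veps) \;=\; \tfrac{1}{n}\bigl\langle \mathcal{F}(D\boldsymbol{w}_\veps),\, \mathcal{F}(D\boldsymbol{a})^{\odot k}\bigr\rangle,
\end{equation*}
which is exactly \eqref{eq:planch0}.

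There is really no hard step here; the whole proof is algebraic manipulation. The only mild pitfall is keeping the conjugation conventions straight when applying Parseval in the complex case, but since $\boldsymbol{w}_\veps,\boldsymbol{a}\in\RR^n$ and $D$ has real entries, the identity stated in \eqref{eq:planch0} is invariant under the choice of Hermitian vs.\ bilinear inner product, so no complication arises.
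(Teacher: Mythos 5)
Your proof is correct and follows essentially the same route as the paper: write $\widetilde{\delta}(\veps)=\langle \boldsymbol{w}_\veps,\boldsymbol{b}^k\rangle$, move $D$ using $D\trans=D$, and apply the Plancherel identity $\langle x,y\rangle=\tfrac{1}{n}\langle \mathcal{F}x,\mathcal{F}y\rangle$. The only (harmless, non-load-bearing) imprecision is the closing remark: since $\mathcal{F}(D\boldsymbol{w}_\veps)$ and $\mathcal{F}(D\boldsymbol{a})^{\odot k}$ are complex, the right-hand side of \eqref{eq:planch0} is not literally invariant under choosing a bilinear instead of a Hermitian inner product on the Fourier side; the identity should be read with the conjugated (Hermitian) inner product, exactly as your orthogonality derivation of Parseval gives it.
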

%


\textbf{Experimental Illustration. } 
Consider computing tight $\delta(\veps)$-bound for the subsampled Gaussian mechanism (see Section~\ref{subsec:subsampled}), for $q=0.02$ and $\sigma=2.0$.
We evaluate $\delta(\veps)$ after $k=100, 200, \ldots, 500$ compositions at $\veps=1.0$. Table~\ref{table:plancherel} illustrates the compute time for each update of $\delta(\veps)$,
using a) a pre-computed vector $\mathcal{F}( D \boldsymbol{a} )$, the inverse transform $\mathcal{F}^{-1}$ and the summation \eqref{eq:summ} and b)
 using pre-computed vectors $\mathcal{F}( D \boldsymbol{a} )^{\odot 100}$ and 
 $\mathcal{F}( D \boldsymbol{w}_\veps )$ and the inner product~\eqref{eq:planch0}. 

\begin{table}[h!]
\begin{center}
\begin{tabular}{cccc}
 \hline
 $n$ &  $t$ (ms)~\eqref{eq:summ} & $t$ (ms)~\eqref{eq:planch0} & $\delta(\veps)$ \\
 \hline
	\hspace{-1mm} $5 \cdot 10^4$ \hspace{-3mm} &     5.8 \hspace{-3mm}&  \hspace{-3mm}  0.18 \hspace{-3mm}& \hspace{-2mm} $2.900925\cdot 10^{-6}$ \\
	\hspace{-1mm} $1 \cdot 10^5$ \hspace{-3mm} &     12 \hspace{-3mm}&   \hspace{-3mm}  0.36 \hspace{-3mm}& \hspace{-1mm}$2.851835\cdot 10^{-6}$ \\
	\hspace{-1mm} $1 \cdot 10^6$ \hspace{-3mm} &     140 \hspace{-3mm}&  \hspace{-3mm}  5.1 \hspace{-3mm}&  \hspace{-1mm}$2.846942\cdot 10^{-6}$ \\
	\hspace{-1mm} $5 \cdot 10^6$ \hspace{-3mm} &  \hspace{-4.5mm}   750\hspace{-3mm} &   \hspace{-3mm} 30 \hspace{-3mm}&  \hspace{-2mm} $2.846941\cdot 10^{-6}$  \\
 \hline
 \end{tabular} 
 \caption{Compute times (in milliseconds) for an update of $\delta(\veps)$-bound for different values of $n$ using the summation~\eqref{eq:summ} and the inner product~\eqref{eq:planch0} and the $\delta(\veps)$-upper bound after $k=500$ compositions.
 We see that using Lemma~\ref{lem:plancherel} we can speed up the update more than 20-fold, and that accurate update of $\delta(\veps)$ is possible in less than one millisecond.
 }
 \label{table:plancherel}
\end{center}
\end{table}

\newpage 

\section{Experiments} \label{sec:experiments}

We compare experimentally the proposed method to the Tensorflow moments accountant~\citep{Abadi2016} which is based on RDP~\citep{mironov2017}
and allows evaluation of guarantees for heterogeneous compositions. 
For homogeneous compositions, in the Appendix we compare our method also to a more recent RDP accountant~\citep{zhu2019} 
and to Gaussian differential privacy (GDP) accounting~\citep{dong2021gaussian} as their existing implementations
 are not directly applicable to heterogeneous compositions.
In the Appendix we also illustrate the possible benefits obtained from using an improved conversion formula~\citep{asoodeh2020} from RDP to $(\veps,\delta)$-DP.

\subsection{Compositions of Discrete and Continuous Mechanisms}

We consider a non-adaptive composition of the form 
$$
\mathcal{M}(X) = \big(\mathcal{M}_1(X),\widetilde{\mathcal{M}}_2(X), \ldots, \mathcal{M}_{k-1}(X),\widetilde{\mathcal{M}}_k(X)\big),
$$
where each $\mathcal{M}_i$ is a Gaussian mechanism with sensitivity 1, and each $\widetilde{\mathcal{M}}_i$ is a randomised response mechanism with 
probability of a correct answer $p$, $\tfrac{1}{2} < p < 1$. We know that for the randomised response the PLD leading to the worst-case bound is given by
\begin{equation*}
	\begin{aligned}
\omega_{\mathrm{R}}(s) = p \cdot \deltas{c_p} + (1-p) \cdot \deltas{-c_p},
	\end{aligned}
\end{equation*}
where $c_p = \log \tfrac{p}{1-p}$~\citep{koskela2021tight}. Also, for the PLD $\omega_\mathrm{G}$ of the Gaussian mechanism we know that
$\omega_\mathrm{G} ~ \sim \mathcal{N}\left( \frac{1}{2 \sigma^2}, \frac{1}{\sigma^2} \right)$~\citep{sommer2019privacy}.
Let the $\Delta x$-grid be defined as above, i.e., 
let $L>0$, $n \in \mathbb{Z}^+$, $\Delta x = 2L/n$ and $s_i = -L + i \Delta x$ for all $i \in \mathbb{Z}$. 
Define
\begin{equation} \label{eq:c_plus}
	\begin{aligned}
		\omega_{\mathrm{G},\mathrm{max}}(s) = \sum\nolimits_{i=0}^{n-1} c^+_i \cdot \deltas{s_i}, \quad \textrm{where} \quad 
		c^+_i = \Delta x \cdot \max\nolimits_{s \in [s_{i-1}, s_i]} \omega_\mathrm{G}(s).
	\end{aligned}
\end{equation}
%
Using a bound for the moment generating function of the infinitely extending counterpart
of $\omega_{\mathrm{max}}$ and by using Alg.~\ref{alg:delta} (we refer to the Appendix for more details)
we obtain a numerical value $\delta_{\mathrm{max}}(\veps)$ (depending on $n$ and $L$) for which we have that $\delta(\veps) \leq \delta_{\mathrm{max}}(\veps)$,
where $\delta(\veps)$ gives a tight bound for the composition $\mathcal{M}(X)$.
As a comparison, in Figure~\ref{fig:rr} we also show the guarantees given by Tensorflow moments accountant.
We know that for $\alpha>1$, the $\alpha$-RDP of the randomised response is given by
$$
\frac{1}{\alpha - 1} \log \big( p^\alpha (1-p)^{1-\alpha} + (1-p)^\alpha  p^{1-\alpha}  \big)
$$
and correspondingly for the Gaussian mechanism by $\alpha/2\sigma^2$~\citep{mironov2017}. 
As is commonly done, we evaluate RDPs for integer values and sum up them along the compositions.
Then, using the moments accountant method the corresponding $(\veps,\delta)$-bounds are obtained~\citep{Abadi2016}.


\begin{figure}
     \centering
     \begin{subfigure}[b]{0.47\textwidth}
         \centering
         \includegraphics[width=\textwidth]{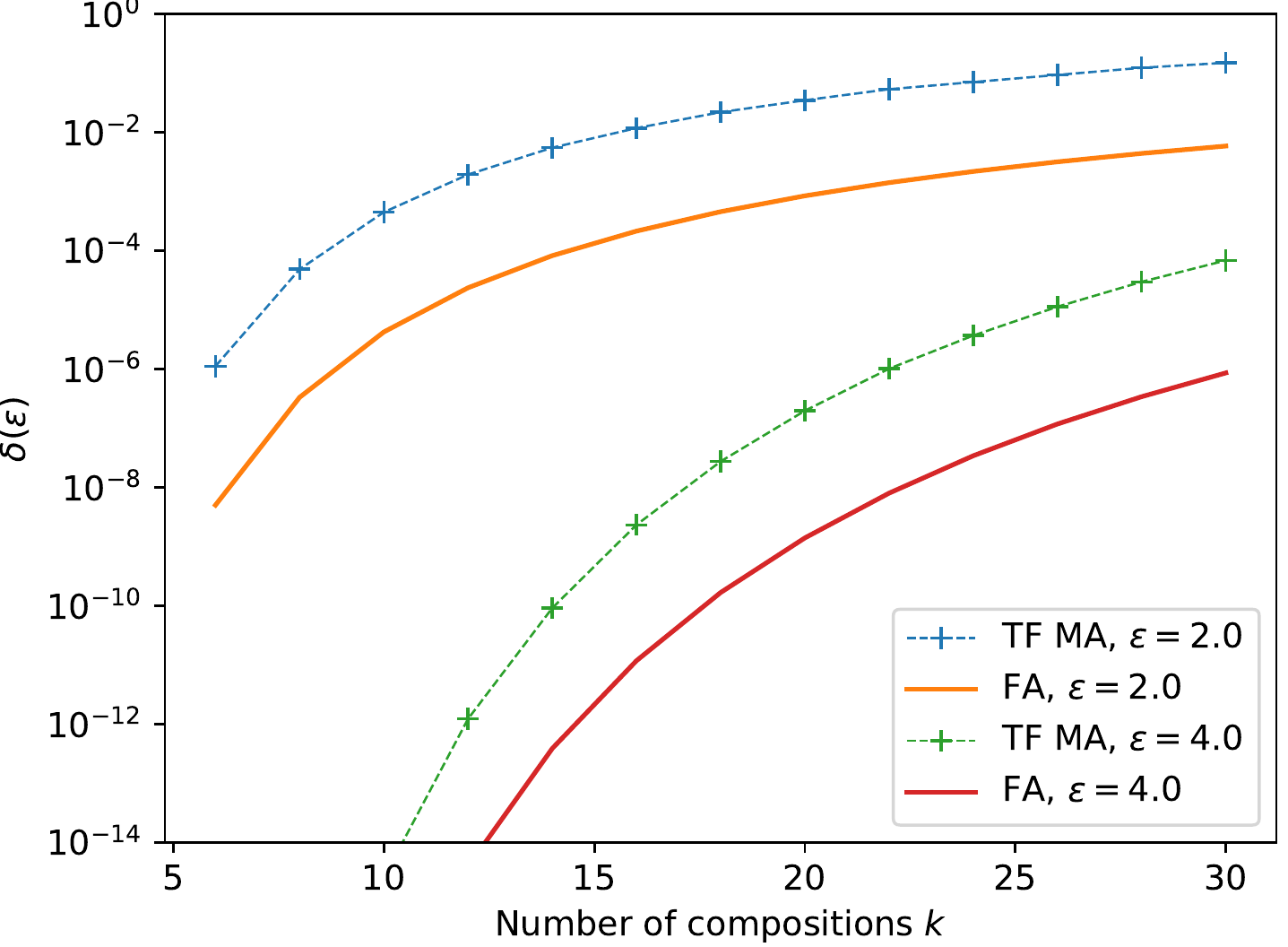}
         \caption{ Bounds for $\delta(\veps)$ computed using Algorithm~\ref{alg:delta} (FA) and Tensorflow moments accountant (TF MA),
	 when $\sigma=5.0$ and $p=0.52$, for $\veps=2.0,4.0$. 
	 We see that when $\delta \in [10^{-6},10^{-4}]$, FA allows approximately $1.5$ times as many compositions as TF MA for the same $\veps$. 
	 We use here $L=10$ and $n=10^5$ discretisation points, however note that already $n=5 \cdot 10^{-3}$ gives accurate results.}
		\label{fig:rr}
     \end{subfigure}
     \hfill
     \begin{subfigure}[b]{0.47\textwidth}
         \centering
         \includegraphics[width=\textwidth]{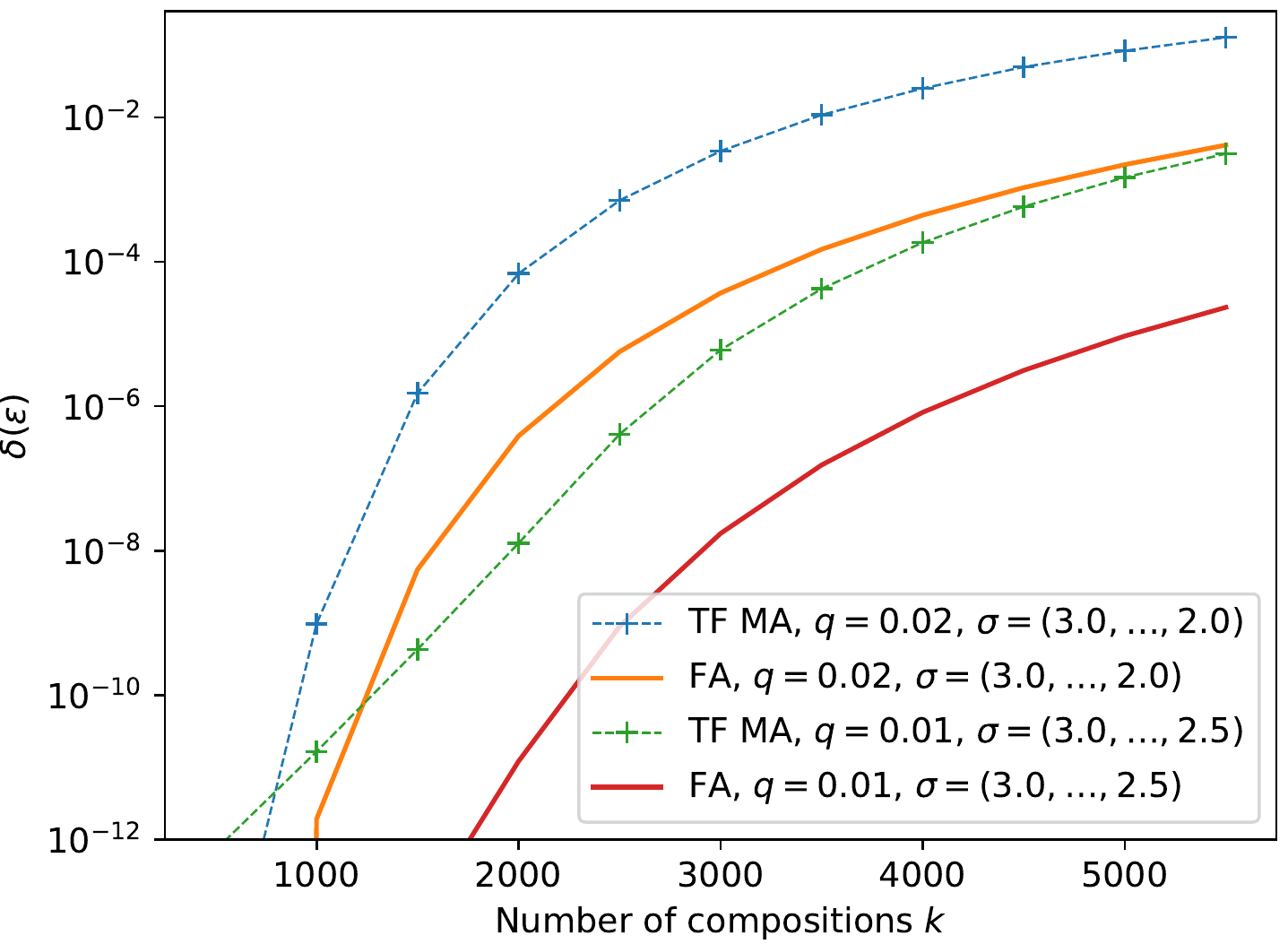}
         \caption{ Bounds for $\delta(\veps)$ computed using Algorithm~\ref{alg:delta} (FA) and Tensorflow moments accountant (TF MA).
	In the first option $\veps=1.0$, $q=0.02$ and $\sigma$ decreases linearly from $3.0$ to $2.0$.
 	In the second option $\veps=1.5$, $q=0.01$ and $\sigma$ decreases linearly from $3.0$ to $2.5$. For each value of $\sigma$, 500 compositions are evaluated.
	We see that when $\delta \in [10^{-6},10^{-4}]$, FA allows approx. $1.5$ times as many compositions. }
         \label{fig:subsampled}
     \end{subfigure}
	 \caption{Comparisons of FA and the Tensorflow moments accountant.}
\end{figure}
\subsection{Heterogeneous Subsampled Gaussian Mechanism} \label{subsec:subsampled}

We next show how to compute $(\veps,\delta)$-upper bounds for heterogeneous compositions of the subsampled Gaussian mechanism. 
We consider the Poisson subsampling and $\sim_R$-neighbouring relation.
The fact that we obtain an upper bound in this case by considering 
non-adaptive compositions of univariate mechanisms is shown in the Appendix.
For a subsampling ratio $q$ and noise level $\sigma$, the continuous PLD of the subsampled Gaussian mechanism is given by~\citep{koskela2020}
\begin{equation*} 
\omega(s) = \begin{cases}
f(g(s))g'(s), &\text{ if }  s > \log(1-q), \\
0, &\text{ otherwise},
\end{cases}
\end{equation*}
where
$$
f(t) = \frac{1}{\sqrt{2 \pi \sigma^2}} \, [ q \ee^{ \frac{-(t-1)^2}{2 \sigma^2}} + (1-q) \ee^{-\frac{t^2}{2 \sigma^2}} ], \quad
g(s) = \sigma^2 \log \left( \frac{\ee^s - (1-q)}{q} \right) + \frac{1}{2}.
$$
Analogously to \eqref{eq:c_plus}, using $\omega$ we determine a discrete PLD $\omega_{\mathrm{max}}$,
and by deriving a bound for the moment generating function 
of $\omega_{\mathrm{max}}$ (see also details in the Appendix) and by using Alg.~\ref{alg:delta} and Thm.~\ref{thm:alg_error_bound1}
we obtain a numerical value $\delta_{\mathrm{max}}(\veps)$  such that after $k$ compositions
\begin{equation*} 
	\delta(\veps) \leq \delta_{\mathrm{max}}(\veps),
\end{equation*} 
where $\delta(\veps)$ gives a tight bound for the $k$-fold  composition of heterogeneous subsampled Gaussian mechanisms.
Figure~\ref{fig:subsampled} illustrates $\delta_{\mathrm{max}}(\veps)$ as $k$ grows, when $L=10$ and $n=10^6$.
For comparison, we also show the numerical values given by Tensorflow moments accountant~\citep{Abadi2016}.

\section{Conclusions} 

We have extended the Fast Fourier Transform-based approach for computing tight privacy bounds for discrete-valued mechanisms to heterogeneous compositions.
We have given a complete error analysis of the method such that using the derived bounds it is possible to 
determine appropriate values for all the parameters of the algorithm, allowing more black-box like usage.
The error analysis also led to a complexity bound that is slightly better than the existing theoretical
complexity bound for non-adaptive compositions. Using the Plancherel theorem, we have shown how to further speed up the evaluation of DP bounds. 
We emphasise that due to the construction of the algorithm and to the rigorous error analysis, the reported $(\veps,\delta)$-bounds are strict upper privacy bounds.
One clear deficit of our approach, when compared to approaches such as GDP and RDP, is the difficulty of its implementation. 
However, in situations where accurate $(\veps,\delta)$-bounds for compositions of complex mechanisms are required,
the Fourier accountant appears as an attractive alternative.


\section*{Acknowledgements} 

This work has been supported by the Academy of Finland [Finnish Center for Artificial Intelligence FCAI and grant 325573] and by the Strategic Research Council at the Academy of Finland [grant 336032].

\newpage

\bibliography{pld}

\newpage

\appendix

\section{Comparisons to State-of-the-art DP Accountants}

%
%

\subsection{Comparison to the R\'enyi Differential Privacy accountant}

First, we compare our method to the RDP accountant by~\citet{zhu2019} which gives optimal RDP bounds for the subsampled Gaussian mechanism.
This method is included in the 'autodp' package~\footnote{https://github.com/yuxiangw/autodp} and it works for fixed values
of $\sigma$ and $q$. Computing Fourier accountant (FA) bounds in the case where $\sigma$ drops linearly from 3.0 to 2.5 and the subsampling ratio
$q$ is fixed, FA gives tighter bounds than the RDP accountant even for fixed $\sigma=3.0$ (see Fig.~\ref{fig:eps_rdp}).

\begin{figure} [h!]
	\centering
	\includegraphics[width=0.6\linewidth]{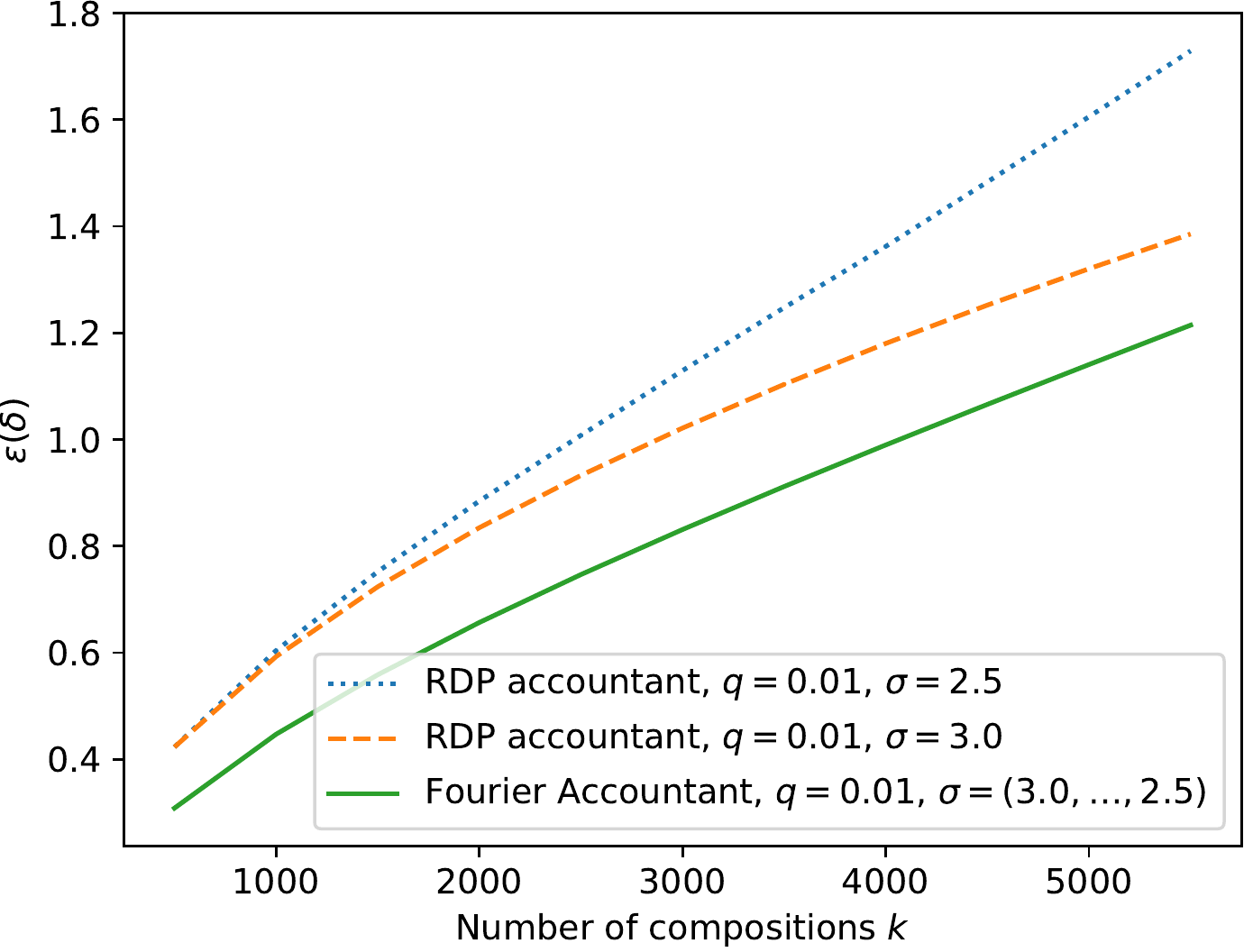}
\caption{Upper $\veps(\delta)$-bounds obtained using the Fourier Accountant and the R\'enyi DP accoutant 'autodp'. Here $\delta=10^{-6}$.}
\label{fig:eps_rdp}
\end{figure}

In Figures 2a and 2b we compare FA and 'autodp' in case $\sigma$ is fixed for both. We fix $\sigma=2.0$ and vary the subsampling ratio $q$
and the number of compositions $k$. We see that FA gives considerably tighter bounds.

Part of the differences in these results is explained by the loss in 
converting RDP-values to $(\veps,\delta)$-values.
The conversion of the RDP-values to $(\veps,\delta)$-values is carried out here using the formula~\citep{zhu2019} 
$
\delta(\veps)= \inf_{\alpha>1} \ee^{-(\alpha-1)(\veps-\gamma(\alpha)T)}.
$
In the next subsection consider the possible gains of using a tighter conversion formula.

\begin{figure} [h!]
     \centering
     \begin{subfigure}{0.48\textwidth}
         \centering
         \includegraphics[width=\textwidth]{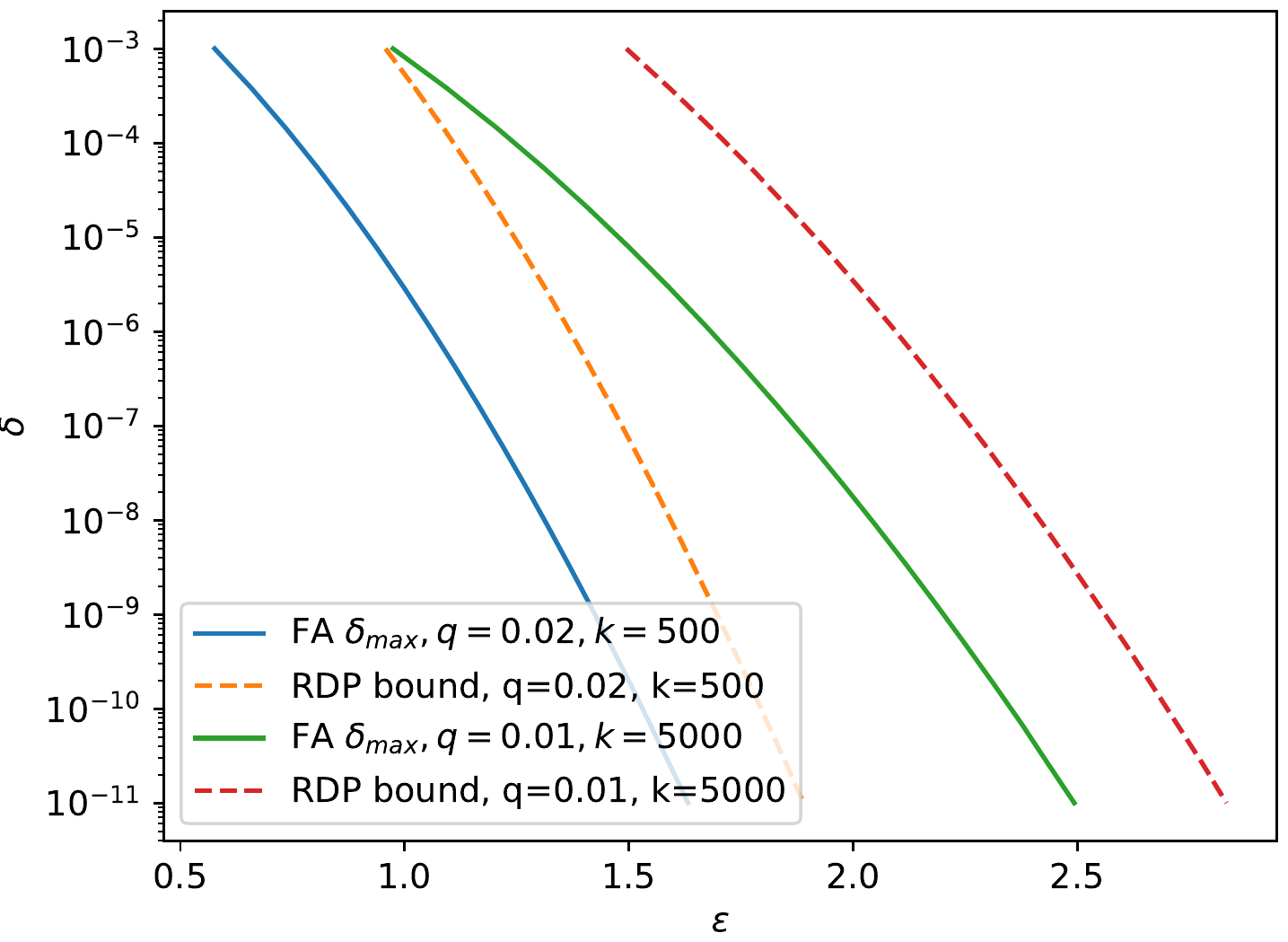}
         \caption{$q=0.01$, $k=5000$ and $q=0.02$, $k=500$.}
     \end{subfigure} 
     \begin{subfigure}{0.48\textwidth}
         \centering
         \includegraphics[width=\textwidth]{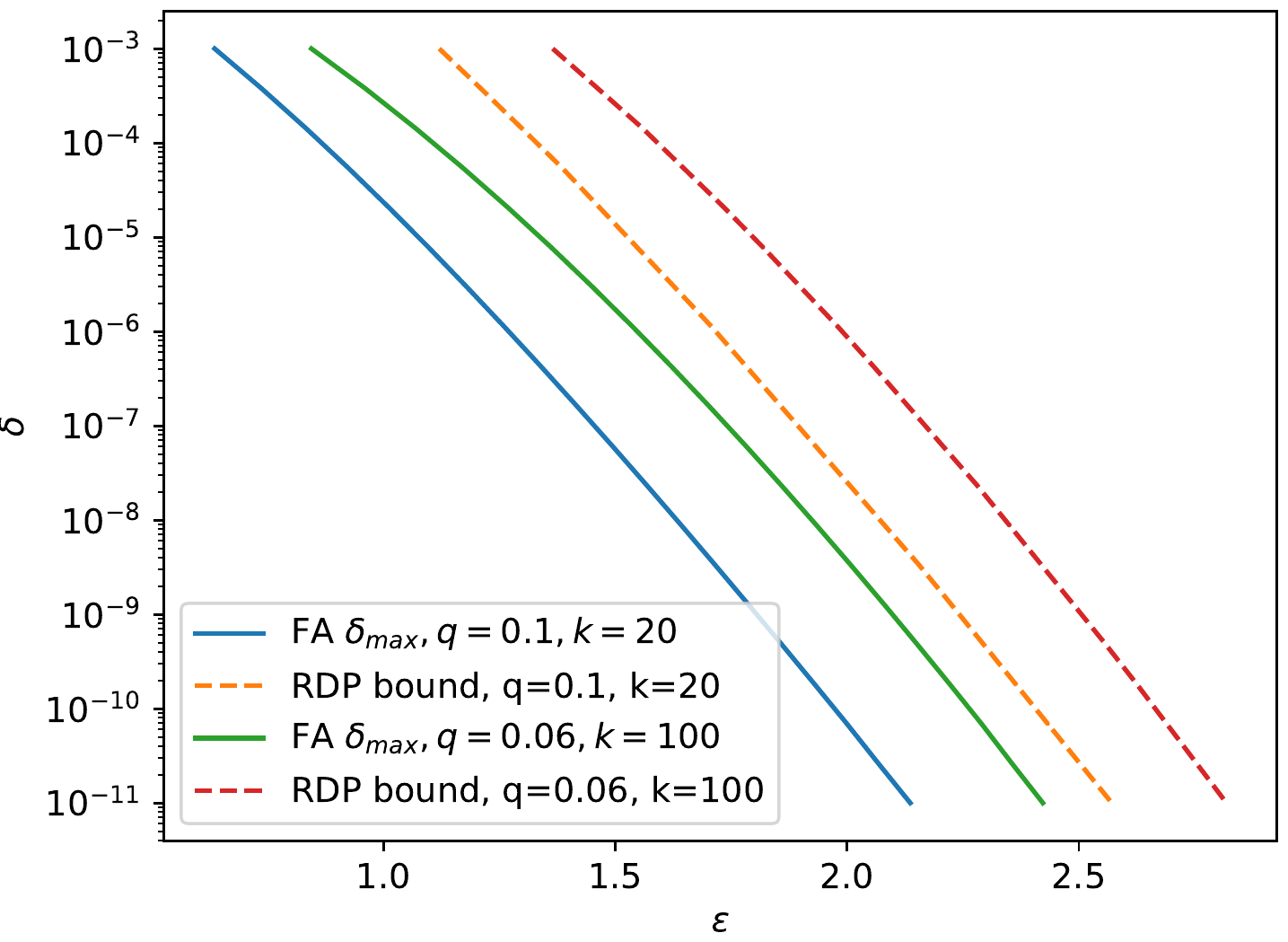}
         \caption{$q=0.06$, $k=100$ and $q=0.1$, $k=20$.}
     \end{subfigure} 
	 \caption{Comparison of the upper and lower $\delta(\veps)$-bounds given by FA and the RDP bound given by autodp,
	 for various configurations when $\sigma=2.0$. Here $\veps=1.0$.
	 } \label{fig:delta_rdp}
\end{figure}

\subsection{Tighter Conversion of RDP to $(\veps,\delta)$-DP}

\citet{asoodeh2020} consider a tighter conversion of RDP to $(\veps,\delta)$-DP. The bound is optimal in a sense, that
the obtained $(\veps,\delta)$-values satisfy
$$
\veps_\alpha^\delta(\gamma) = \inf \{ \veps \geq 0 \, : \, \forall \mathcal{M} \in \mathbb{M}_\alpha(\gamma) \textrm{ is } (\veps,\delta)-\textrm{DP} \},
$$
where $\mathbb{M}_\alpha(\gamma)$ denotes the set of all $(\alpha,\gamma)$-RDP mechanisms. 
As this definition suggests, the obtained $(\veps,\delta)$-bounds are not necessarily tight DP-bounds for a given particular mechanism.
Using log convex optimisation,~\citet{asoodeh2020} find $\veps_\alpha^\delta(\gamma)$-upper bounds
for the Gaussian mechanism from its RDP values~\citep[Lemma 2]{asoodeh2020}. We illustrate the sub-optimality of the resulting $(\veps,\delta)$-bounds as follows.

First of all, tight $(\veps,\delta)$-bounds for the Gaussian mechanism are obtained as follows. For the PLD $\omega_\mathrm{G}$ of the Gaussian mechanism we know that~\citep{sommer2019privacy}
$$
\omega_\mathrm{G} ~ \sim \mathcal{N}\left( \frac{1}{2 \sigma^2}, \frac{1}{\sigma^2} \right)
$$
and for a $k$-wise composition, by convolution, we have that 
$$
\omega^k_\mathrm{G} ~ \sim \mathcal{N}\left( \frac{k}{2 \sigma^2}, \frac{k}{\sigma^2} \right).
$$
The $(\veps,\delta)$-values for this PLD are obtained by conversion involving the CDF of the Gaussian function~\citep{sommer2019privacy}.

We know that the RDP-value of order $\alpha$ for the Gaussian mechanism is~\cite{mironov2017}
$$
\gamma(\alpha) = \alpha/2\sigma^2.
$$
We combine this RDP with the conversion formula of~\citep[Lemma 2]{asoodeh2020}.
We also compare the commonly used conversion formula~\citep[see e.g.][Thm.\;2]{Abadi2016}
\begin{equation} \label{eq:conversion_classic}
	\delta(\veps) = \inf_{\alpha>1} \ee^{-(\alpha-1)(\veps-\gamma(\alpha)T)}.
\end{equation}
As Fig.~\ref{fig:compare_DP_RDP} shows, the conversion by~\citet[Lemma 2]{asoodeh2020} gives tighter results than the commonly used conversion formula~\eqref{eq:conversion_classic},
however the $\veps_\alpha^\delta(\gamma)$-bound does not give tight   $(\veps,\delta)$-bounds whereas
the bounds given by the Fourier accountant converge to the tight $(\veps,\delta)$-bounds of the Gaussian mechanism.

\begin{figure}[h!]
	\centering
	\includegraphics[width=0.6\linewidth]{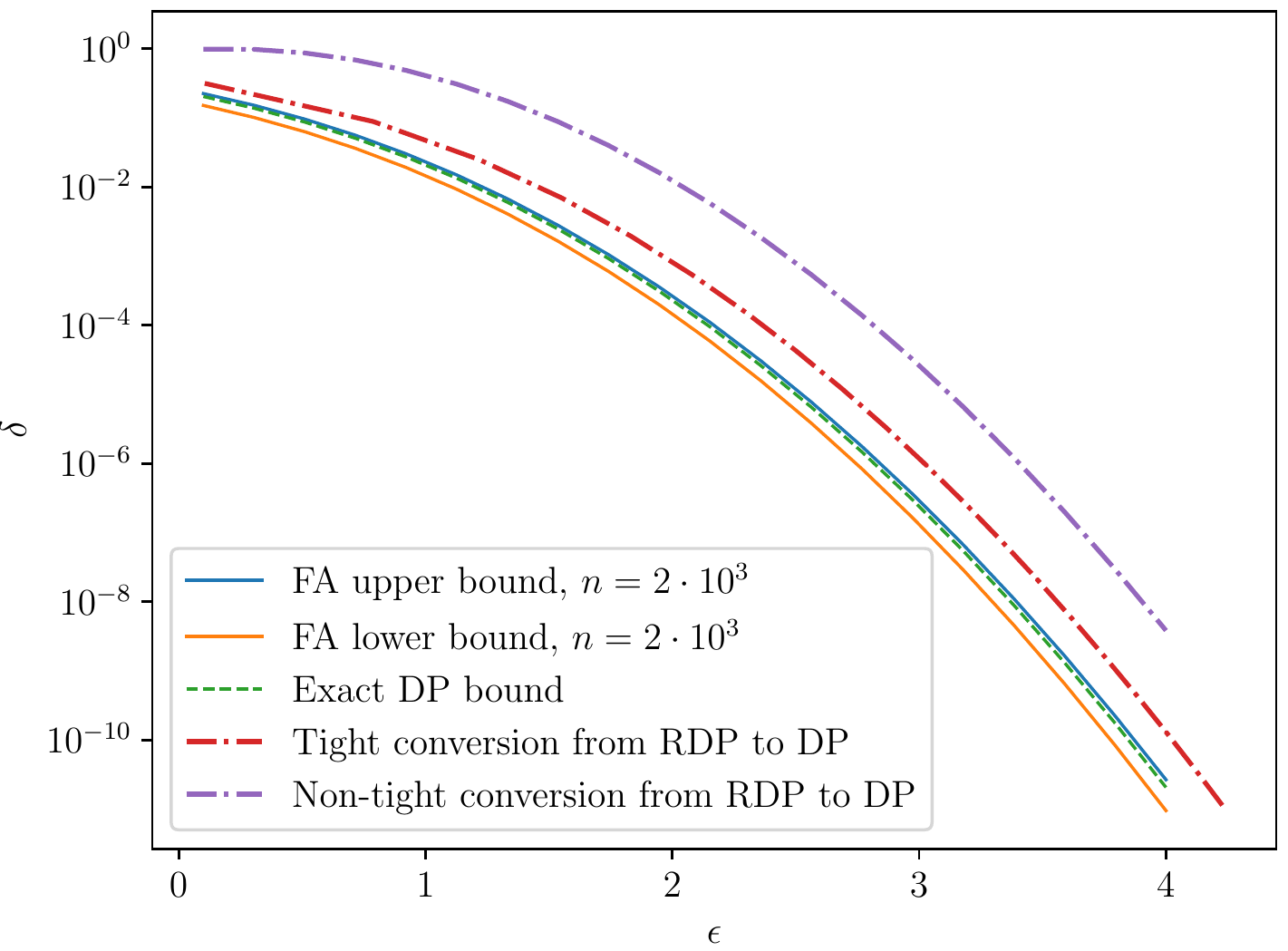}
\caption{Comparison of the Fourier Accountant and the RDP bounds obtained with different conversion methods.
Here $k=6$ compositions. Here $n$ denotes the number of discretisation points for FA. We note that already for $n=10^4$ the upper and lower bounds given by FA become almost indistinguishable.
}
\label{fig:compare_DP_RDP}
\end{figure}

\subsection{Comparison to the Gaussian Differential Privacy accountant}

Gaussian Differential Privacy is an attractive alternative for privacy accounting
as the bounds can be expressed using a single parameter $\mu$~\citep[for more details, see][]{dong2021gaussian}.
Conversion to $(\veps,\delta)$-bounds is straightforward using the CDF of the Gaussian function~\citep[Corollary 1]{dong2021gaussian}.
GDP gives exact $(\veps,\delta)$-bounds for compositions of the Gaussian mechanism. For other mechanisms, for large numbers of compositions one can approximate
the $\mu$-values using the central limit theorem. For example, in differentially private training of neural networks,
the number of compositions is commonly several tens of thousands which makes the resulting GDP approximates accurate.

~\citet[Section 4]{dong2021gaussian} provide also subsampling amplification results in case the subsample is of fixed size and uniformly sampled.
~\citet{bu2020deep} consider also the Poisson subsampling, 
and also an expression for the resulting DP bound is given, in terms of subsampling ratio $q$ and noise parameter 
$\sigma$~\citep[see Sec. 3][]{bu2020deep}. Evaluating this expression analytically is difficult and therefore~\citet{bu2020deep}
use the central limit theorem which says that after $k$ compositions the Poisson subsampled Gaussian mechanism
is approximately $p \sqrt{k(\ee^{1/\sigma^2}-1)}$-GDP. This formula combined with the conversion 
formula~\citep[Corollary 1][]{dong2021gaussian} is also the numerical method implemented in the Tensorflow libabry.

%

As these GDP results obtained using the Tensorflow accountant are \emph{approximations} based on the central limit theorem~\citep{bu2020deep}
instead of strict upper bounds like the results of the Fourier Accountant, we expect them to give inaccurate results for small numbers of compositions $k$.
This is indeed illustrated in Figure~\ref{fig:GDP_FA}. We emphasise that the first figure ($k=5000$) is closest to a realistic scenario of a DP-SGD training.

\begin{figure} [h!]
     \centering
     \begin{subfigure}{0.48\textwidth}
         \centering
         \includegraphics[width=\textwidth]{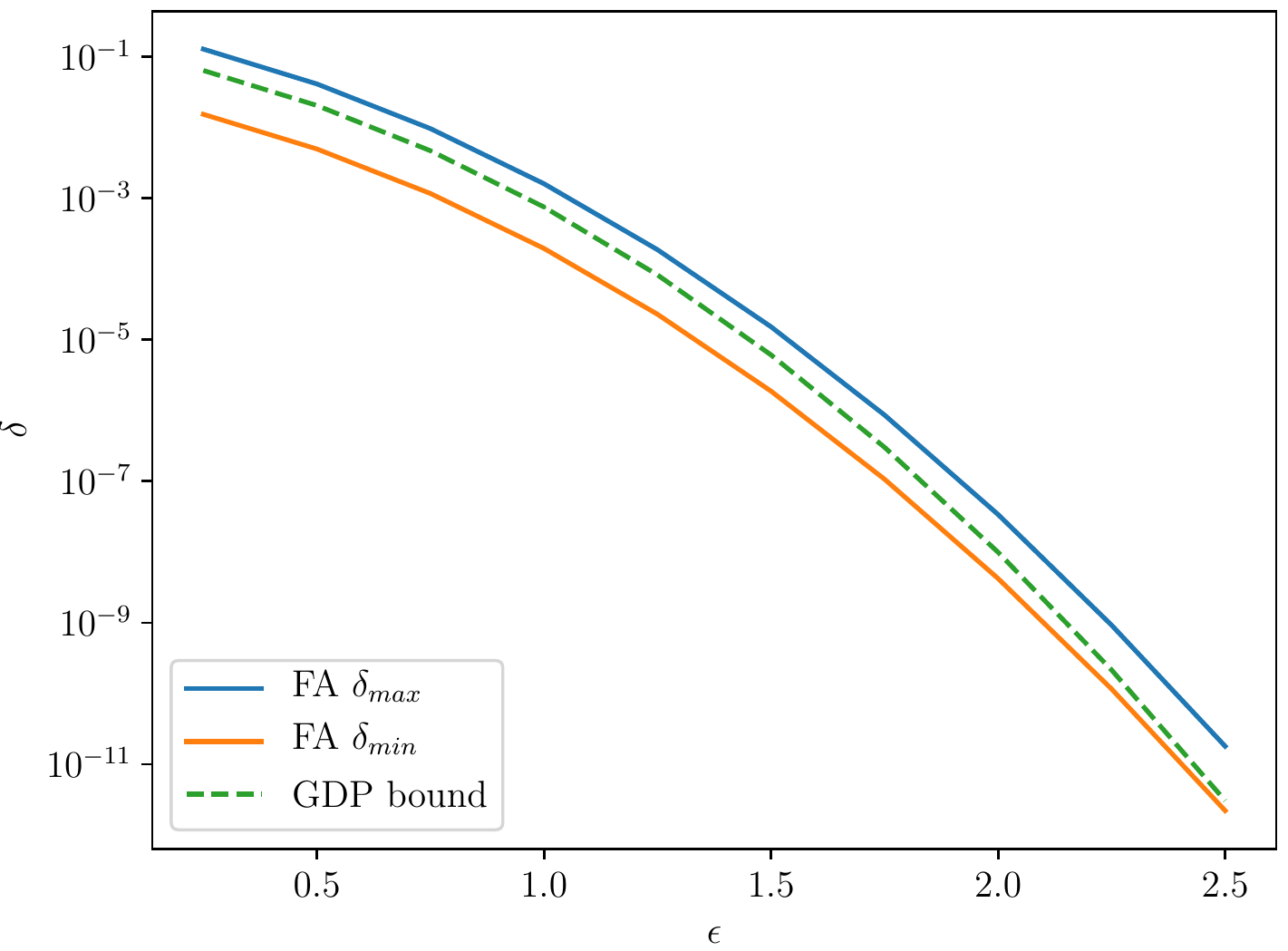}
         \caption{$n=10^7$, $q=0.01$, $k=5000$.}
     \end{subfigure}
     \hfill
     \begin{subfigure}{0.48\textwidth}
         \centering
         \includegraphics[width=\textwidth]{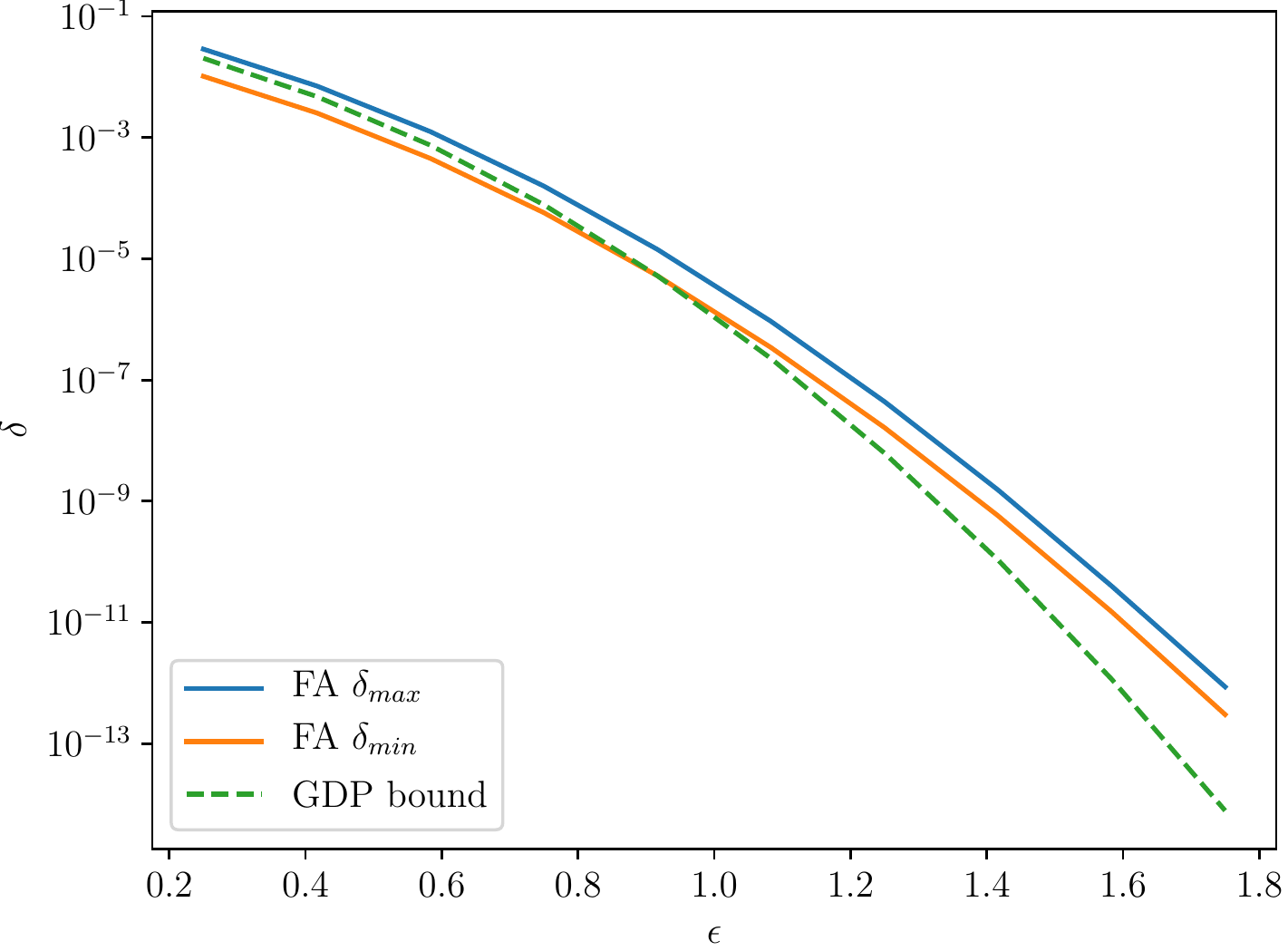}
         \caption{$n=2 \cdot 10^6$, $q=0.02$, $k=500$.}
     \end{subfigure} \\
     \begin{subfigure}{0.48\textwidth}
         \centering
         \includegraphics[width=\textwidth]{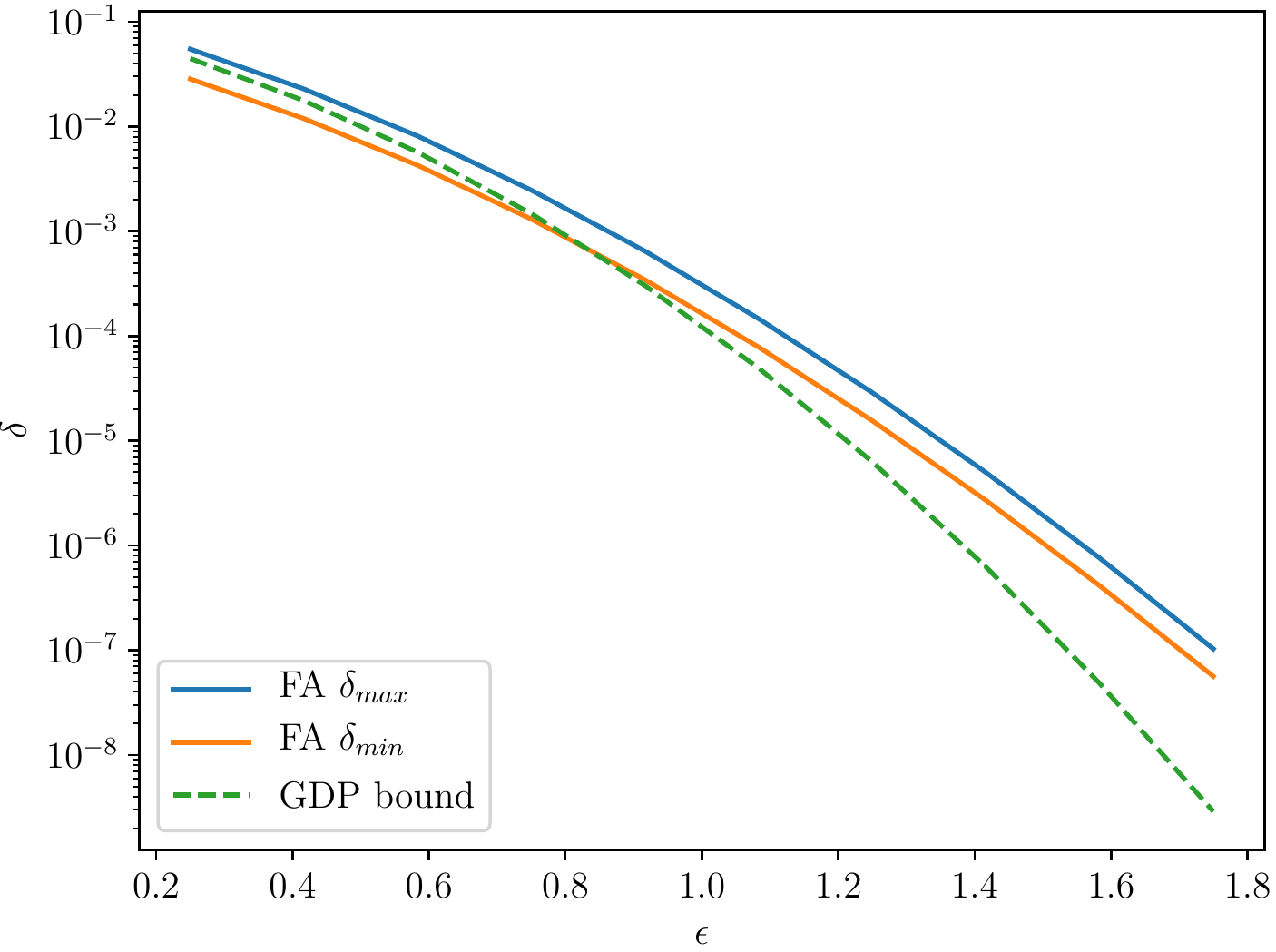}
         \caption{$n=2 \cdot 10^5$, $q=0.06$, $k=100$.}
     \end{subfigure}
     \hfill
     \begin{subfigure}{0.48\textwidth}
         \centering
         \includegraphics[width=\textwidth]{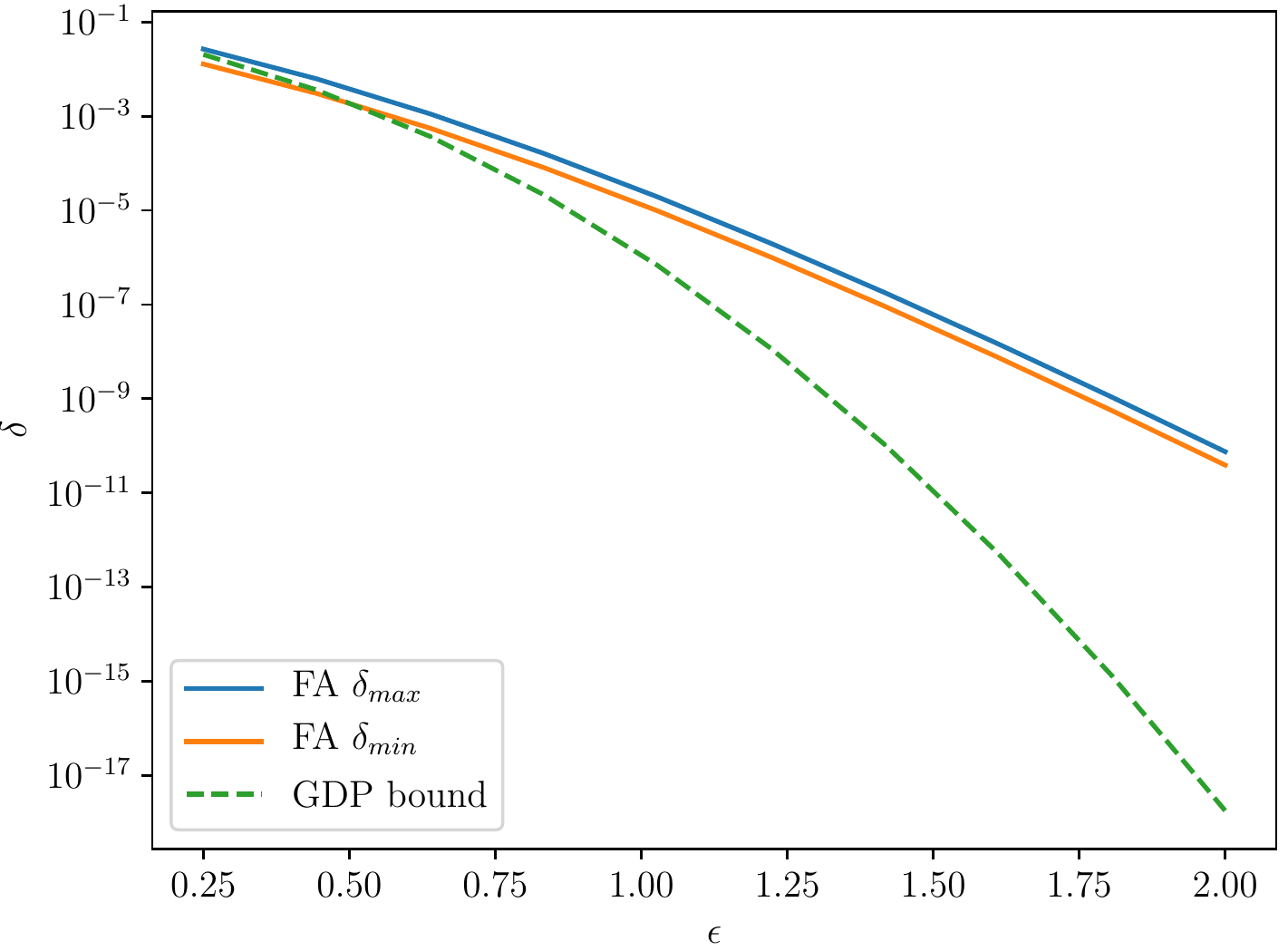}
         \caption{$n=2 \cdot 10^4$, $q=0.1$, $k=20$.}
     \end{subfigure} 
	 \caption{Comparison of the upper and lower $\delta(\veps)$-bounds given by the Fourier Accountant and the approximative GDP bound,
	 for different values of the subsampling ratio $q$ and different numbers of compositions $k$, when $\sigma=2.0$. 
	 For larger number of compositions, we use a larger number of discretisation points $n$ for the Fourier Accountant.
	 }	  \label{fig:GDP_FA}
\end{figure}

\subsection{Few Conclusions About the Comparisons}

Each of the DP accounting methods have their merits. Implementing GDP combined with a CLT approximation is extremely simple and for large numbers of homogeneous compositions
(i.e. compositions where the mechanisms do not vary) the approximations based on the CLT give accurate results, as shown by the experiments
of Figure~\ref{fig:GDP_FA}. However, for small number of compositions, the Fourier accountant appears as superior compared to this approach.

The situation is similar when using RDP: implementing the accountant and understanding its functionality is often easier than that of 
the Fourier accountant. 
The improved conversion bounds proposed by~\citet{asoodeh2020} seem to give considerably tighter DP bounds than the commonly used conversion formula~\citep{Abadi2016}.
However, as illustrated by Figures~\ref{fig:delta_rdp} and~\ref{fig:compare_DP_RDP}, the difference in the $\delta$-upper bounds obtained using RDP 
and the Fourier accountant remains approximately at a one order of magnitude. Moreover, as also shown by the experiments of~\cite{koskela2020}, for small number of
compositions the Fourier accountant appears to give upper $\delta(\veps)$-bounds that are several orders of magnitudes smaller.
One clear benefit of the RDP approach is that it is more easily applicable to heterogeneous \emph{adaptive compositions},
something that is more cumbersome when using the PLD approach.

The downside of the PLD approach is undoubtedly the complexity of the algorithm, there are simply much more lines of code involved and also possible pitfalls in the implementation.
However in situations where accurate $(\veps,\delta)$-bounds are required, and also for sanity-checking the functionality of other accountants
by using simple non-adaptive compositions, the Fourier Accountant appears as an attractive alternative.

\section{Theorem 4 of the Main Text}

Theorem 4 of the main text shows that the tight $(\veps,\delta)$-bounds for compositions 
of non-adaptive mechanisms are obtained using convolutions of PLDs (see also Thm.\;1 by~\citet{sommer2019privacy}).
 We include the proof here for completeness.

\begin{thm} \label{thm:convolutions} 
Consider a non-adaptive composition of $k$ independent mechanisms $\mathcal{M}_1,\ldots,\mathcal{M}_k$ and neighbouring data sets $X$ and $Y$.
The composition is tightly $(\veps,\delta)$-DP for $\delta(\veps)$ given by
$$
\delta(\veps) = \max \{ \delta_{X/Y}(\veps), \delta_{Y/X}(\veps) \},
$$ 
where
\begin{equation}  \label{eq:delta_inf}
	\begin{aligned}
		 \delta_{X/Y}(\veps) &= 1 - \prod\limits_{\ell=1}^k (1-\delta_{X/Y,\ell}(\infty)) 
		+ \int_\veps^\infty (1 - \ee^{\veps - s})\left(\omega_{X/Y,1} * \cdots * \omega_{X/Y,k} \right) (s)  \, \dd s, \\
 \delta_{X/Y,\ell}(\infty) &= 
\sum\limits_{ \{ t_i \, : \, \mathbb{P}( \mathcal{M_\ell}(X) = t_i) > 0, \, \mathbb{P}( \mathcal{M_\ell}(Y) = t_i) = 0 \} }
 		\mathbb{P}( \mathcal{M_\ell}(X) = t_i)
	\end{aligned}
\end{equation}
and $\omega_{X/Y,1} * \cdots * \omega_{X/Y,k}$ denotes the convolution of 
the density functions $\omega_{X/Y,\ell}$, $1 \leq \ell \leq k$. An analogous expression holds for $\delta_{Y/X}(\veps)$.
\begin{proof}
We show the proof first for the composition of two mechanisms. It will be clear from the proof how to generalise for a non-adaptive composition of $k$ mechanisms. 
We start by considering Lemma 4 of~\citep{koskela2021tight} that gives an expression for the tight $(\veps,\delta)$-DP bound for a single mechanism.
By definition of the privacy loss distribution, the PLD distribution $\widetilde{\omega}$ of the non-adaptive composition of mechanisms $\mathcal{M}_1$ and 
$\mathcal{M}_2$ is given by
\begin{equation*}
	\begin{aligned}
\widetilde{\omega}_{X/Y}(s) & = \sum\limits_{(t_i,t_i') = (t_j,t_j')}  \mathbb{P}\big( (\mathcal{M}_1(X),\mathcal{M}_2(X)  = (t_i,t_i') \big)
\cdot \deltas{ \widetilde{s}_i }, \\ 
& \quad \quad \widetilde{s}_i = \log \left( \frac{ (\mathcal{M}_1(X) , \mathcal{M}_2(X)) = (t_i,t_i') }
{ (\mathcal{M}_1(Y) , \mathcal{M}_2(Y) = (t_j,t_j') } \right).
	\end{aligned}
\end{equation*}
Due to the independence of $\mathcal{M}_1$ and $\mathcal{M}_2$,
\begin{equation} \label{eq:independence}
	\begin{aligned}
\mathbb{P}\big( \mathcal{M}_1(X) &=t_i, \, \mathcal{M}_2(X) = t_i'\big) = \mathbb{P}\big( \mathcal{M}_1(X)=t_i\big) \, \mathbb{P}\big( \mathcal{M}_2(X) = t_i'\big), \\
\mathbb{P}\big( \mathcal{M}_1(Y) &=t_j, \, \mathcal{M}_2(Y) = t_j' \big) = \mathbb{P}\big( \mathcal{M}_1(Y)=t_j\big) \, \mathbb{P}\big( \mathcal{M}_2(Y) = t_j'\big). \\
	\end{aligned}
\end{equation}
Therefore,
\begin{equation*}
	\begin{aligned}
		\log \left(  \frac{\mathbb{P}\big( \mathcal{M}_1(X)=t_i, \, \mathcal{M}_2(X) = t_i'\big)}{\mathbb{P}\big( \mathcal{M}_1(Y)=t_j, \, \mathcal{M}_2(Y) = t_j'\big) }  \right) 
		= \log \left( \frac{\mathbb{P}\big( \mathcal{M}_1(X)=t_i\big)}{ \mathbb{P}\big( \mathcal{M}_1(Y)=t_j\big) }    \right) + 
		\log \left( \frac{ \mathbb{P}\big( \mathcal{M}_2(X) = t_i'\big) }{ \mathbb{P}\big( \mathcal{M}_2(Y) = t_j'\big) }    \right).
	\end{aligned}
\end{equation*}
and
\begin{equation} \label{eq:conv_last_step}
	\begin{aligned}
\widetilde{\omega}_{X/Y}(s) = \sum\limits_{(t_i,t_i') = (t_j,t_j')}  \mathbb{P}\big( \mathcal{M}_1(X)=t_i\big) \, \mathbb{P}\big( \mathcal{M}_2(X) = t_i'\big)
\cdot \deltas{s_i + s_i'}, 
	\end{aligned}
\end{equation}
where
$$ 
s_i = \log \left( \frac{ \mathbb{P}\big( \mathcal{M}_1(X) = t_i \big) }{ \mathbb{P}\big( \mathcal{M}_1(Y) = t_j \big) } \right),
\quad s_i' = \log \left( \frac{ \mathbb{P}\big( \mathcal{M}_2(X) = t_i' \big) }{ \mathbb{P}\big( \mathcal{M}_2(Y) = t_j' \big) } \right).
$$
We see from \eqref{eq:conv_last_step} that $\widetilde{\omega}_{X/Y}  = \omega_{X/Y} * \omega_{X'/Y'}$ with discrete convolution $*$ as defined in the main text.
The expression for $\widetilde{\delta}_{X/Y}(\infty)$ follows directly from its definition in Lemma 4 of~\citep{koskela2021tight} 
that gives an expression for the tight $(\veps,\delta)$-DP bound for a single mechanism,
and from the independence of the mechanisms \eqref{eq:independence}. We see from this proof and from the definition of the discrete convolution 
that the result directly generalises for a non-adaptive composition of $k$ mechanisms.
\end{proof}
\end{thm}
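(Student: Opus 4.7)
The plan is to reduce Theorem \ref{thm:convolutions} to the single-mechanism tight $(\veps,\delta)$-formula of Lemma 4 in \citep{koskela2021tight} applied to the joint output $\mathcal{M}(X) = (\mathcal{M}_1(X),\ldots,\mathcal{M}_k(X))$. First I would form the generalised probability density of $\mathcal{M}(X)$ and $\mathcal{M}(Y)$ on the product space, read off its PLD $\widetilde{\omega}_{X/Y}$ directly from Definition \ref{def:pld}, and show that $\widetilde{\omega}_{X/Y}$ equals the convolution $\omega_{X/Y,1} * \cdots * \omega_{X/Y,k}$. Once this identification is made, the integral term of the theorem is an immediate consequence of the single-mechanism formula.

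The key algebraic step is to exploit independence. Since the $\mathcal{M}_\ell$ act on $X$ independently, $\Pr(\mathcal{M}_1(X)=t_1,\ldots,\mathcal{M}_k(X)=t_k)$ factorises as $\prod_\ell \Pr(\mathcal{M}_\ell(X)=t_\ell)$, and likewise for $Y$. Taking the logarithm converts the ratio of product probabilities into a sum of per-mechanism log-ratios, which is exactly the support-shift defining a discrete convolution in the sense of \eqref{eq:o_conv}. I would verify the identity $\widetilde{\omega}_{X/Y} = \omega_{X/Y,1} * \omega_{X/Y,2}$ for $k=2$ and then either induct on $k$ or observe that the same factorisation goes through directly for the $k$-fold product measure.

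The subtle part is the accounting of the $\delta_{X/Y,\ell}(\infty)$ terms, which is where I expect the main obstacle to lie. For a single mechanism these correspond to outcomes $t$ with $\Pr(\mathcal{M}_\ell(X)=t)>0$ but $\Pr(\mathcal{M}_\ell(Y)=t)=0$, i.e.\ points where the privacy loss is $+\infty$ and which therefore fall outside the sum defining $\omega_{X/Y,\ell}$. In the composed mechanism, a joint outcome contributes "infinite" mass precisely when at least one coordinate is such a point; by independence, the probability that \emph{no} coordinate is infinite equals $\prod_\ell(1-\delta_{X/Y,\ell}(\infty))$, so the composite infinity mass is $1 - \prod_\ell(1-\delta_{X/Y,\ell}(\infty))$, matching the first term of \eqref{eq:delta_inf}. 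I would therefore partition the product outcome space into an "all finite" part and an "at least one infinite" part, restrict the convolution to the first part, and bound the second part trivially by its total mass.

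Putting these pieces together, the single-mechanism formula applied to $\widetilde{\omega}_{X/Y}$ yields $\delta_{X/Y}(\veps) = 1 - \prod_\ell(1-\delta_{X/Y,\ell}(\infty)) + \int_\veps^\infty (1-\ee^{\veps-s})(\omega_{X/Y,1} * \cdots * \omega_{X/Y,k})(s)\,\dd s$, and the symmetric analysis with the roles of $X,Y$ swapped gives the companion expression for $\delta_{Y/X}(\veps)$. Taking the maximum then yields the tight $(\veps,\delta)$-DP guarantee of Definition \ref{def:dp}, completing the proof.
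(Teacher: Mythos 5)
Your proposal is correct and follows essentially the same route as the paper: apply the single-mechanism tight $(\veps,\delta)$-formula (Lemma 4 of Koskela et al.\ 2021) to the joint output, use independence to factorise the joint probabilities so that the composed PLD is identified with the convolution $\omega_{X/Y,1} * \cdots * \omega_{X/Y,k}$, and obtain the $1-\prod_\ell(1-\delta_{X/Y,\ell}(\infty))$ term from independence of the events where some coordinate has zero probability under $Y$. Your treatment of the infinity-mass term is in fact slightly more explicit than the paper's, which simply states that it follows from the definition and independence, but the argument is the same.
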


\section{Proofs for the Results of Section 4}

\subsection{Lemma 5 of the Main Text}

We first prove Lemma 5 of the main text. To that end, recall the grid approximation: we place PLDs on a grid
\begin{equation} \label{Aeq:grid}
X_n = \{x_0,\ldots,x_{n-1}\}, \quad n \in \mathbb{Z}^+,
\end{equation}
where
$$
	x_i = -L + i \Delta x, \quad  \Delta x = 2L/n. 
$$
Suppose the PLD distribution $\omega$ is of the form 
\begin{equation} \label{Aeq:omega_0}
	\omega(s) = \sum\nolimits_{i=0}^{n-1} a_i \cdot \deltas{s_i},
\end{equation}
where $a_i \geq 0$ and $-L \leq s_i \leq L - \Delta x$, $0 \leq i \leq n-1$. 
We define the grid approximations
\begin{equation} \label{Aeq:omegaRL}
	\begin{aligned}
		\omega^\mathrm{L}(s)  &:= \sum\nolimits_{i=0}^{n-1} a_i \cdot \deltas{s_i^\mathrm{L}}, \\
		\omega^\mathrm{R}(s)  &:= \sum\nolimits_{i=0}^{n-1} a_i \cdot \deltas{s_i^\mathrm{R}}, 
	\end{aligned}
\end{equation}
where
\begin{equation*}
	\begin{aligned}
		   s_i^\mathrm{L} &= \max \{  x \in X_n \, : \, x \leq s_i  \}, \\ 
	       s_i^\mathrm{R} &= \min \{  x \in X_n \, : \, x \geq  s_i\}.
	\end{aligned}
\end{equation*}

\begin{lem} \label{Alem:deltaineq}
Let $\delta(\veps)$ be given by the integral formula of Theorem 4 of the main text for PLDs $\omega_1, \cdots, \omega_k$ of the form \eqref{Aeq:omega_0}.
Let $\delta^\mathrm{L}(\veps)$ and $\delta^\mathrm{R}(\veps)$  correspondingly be determined by the left and right 
approximations $\omega_1^\mathrm{L}, \ldots,  \omega_k^\mathrm{L} $ and $\omega_1^\mathrm{R}, \ldots,  \omega_k^\mathrm{R} $, as defined in \eqref{Aeq:omegaRL}.
Then for all $\veps>0$ :
\begin{equation} \label{Aeq:delta_ineq}
	\delta^\mathrm{L}(\veps) \leq \delta(\veps) \leq \delta^\mathrm{R}(\veps).
\end{equation}
\begin{proof}
Recall the integral formula of Theorem 4 of the main text:
$$
\delta(\veps) = 1 - \prod\limits_{\ell=1}^k (1-\delta_{\ell}(\infty)) + \int_\veps^\infty (1 - \ee^{\veps - s})\left(\omega_{1} * \cdots * \omega_{k} \right) (s)  \, \dd s.
$$
As the probabilities $\delta_{\ell}(\infty)$ are not affected by the grid approximation, we may only consider bounds for the integral
$$
\int_\veps^\infty (1 - \ee^{\veps - s})\left(\omega_{1} * \cdots * \omega_{k} \right) (s)  \, \dd s.
$$
By definition of the discrete convolution,
\begin{equation} \label{Aeq:cor_pld01}
	(\omega_1 * \cdots * \omega_k)(s) = \sum\nolimits_{i_1,\ldots,i_k=0}^{n-1} a_{i_1}^j \cdots a_{i_k}^j \cdot \deltas{s_{i_1} + \ldots + s_{i_k}}
\end{equation}
and 
\begin{equation} \label{Aeq:cor_pld02}
(\omega_1^\mathrm{L} * \cdots * \omega_k^\mathrm{L} )(s) = \sum\nolimits_{i_1,\ldots,i_k=0}^{n-1} a_{i_1}^j \cdots a_{i_k}^j \cdot \deltas{s_{i_1}^\mathrm{L} + \ldots + s_{i_k}^\mathrm{L}}.
\end{equation}
Since $(1-\ee^{\veps - s})$ is a monotonously increasing function of $s$ for $s \geq \veps$, and 
since $s_{i_1} + \ldots + s_{i_k} \geq s_{i_1}^\mathrm{L} + \ldots + s_{i_k}^\mathrm{L}$ for all $(i_1,\ldots,i_k)$, we instantly see from \eqref{Aeq:cor_pld01} and \eqref{Aeq:cor_pld02} that
$$
\delta^\mathrm{L}(\veps) \leq \delta(\veps).
$$
For the right grid approximation, $s_{i_1} + \ldots + s_{i_k} \leq s_{i_1}^\mathrm{R} + \ldots + s_{i_k}^\mathrm{R}$ for all $(i_1,\ldots,i_k)$, and we similarly see that 
$$
\delta^\mathrm{R}(\veps) \geq \delta(\veps).
$$
\end{proof}
\end{lem}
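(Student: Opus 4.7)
\textbf{Proof plan for Lemma~\ref{Alem:deltaineq}.}

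The plan is to reduce the inequalities \eqref{Aeq:delta_ineq} to a pointwise monotonicity statement about the test function $s \mapsto (1 - \ee^{\veps - s})$ that appears inside the integral of Theorem~\ref{thm:convolutions}. First I would write
$$
\delta(\veps) \;=\; 1 - \prod_{\ell=1}^{k}\bigl(1 - \delta_\ell(\infty)\bigr) \;+\; \int_\veps^\infty \bigl(1 - \ee^{\veps - s}\bigr)\,(\omega_1 * \cdots * \omega_k)(s)\,\dd s,
$$
and observe that the factor $1 - \prod_{\ell}(1-\delta_\ell(\infty))$ only depends on the sets of outputs on which $\mathbb{P}(\mathcal{M}_\ell(Y) = t) = 0$ while $\mathbb{P}(\mathcal{M}_\ell(X) = t) > 0$; neither the left nor the right snap of the $s_i$'s onto $X_n$ changes these sets or these probabilities. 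Hence the $\delta_\ell(\infty)$ terms are identical for $\delta$, $\delta^\mathrm{L}$, and $\delta^\mathrm{R}$, and it suffices to compare the three integrals.

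Next I would expand the convolution in closed form: since each $\omega_\ell$ is a finite sum of Diracs with weights $a_i^\ell$, one has
$$
(\omega_1 * \cdots * \omega_k)(s) \;=\; \sum_{i_1,\ldots,i_k} a_{i_1}^1 \cdots a_{i_k}^k \cdot \delta_s\bigl(s_{i_1} + \cdots + s_{i_k}\bigr),
$$
and the analogous identities hold for the left- and right-approximated PLDs, with $s_{i_\ell}$ replaced throughout by $s_{i_\ell}^\mathrm{L}$ or $s_{i_\ell}^\mathrm{R}$. Crucially, the nonnegative weights $a_{i_1}^1 \cdots a_{i_k}^k$ are the same in all three expansions; only the locations of the Diracs move. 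Integrating against $(1 - \ee^{\veps-s})\mathbf{1}_{[\veps,\infty)}(s)$ then gives
$$
\int_\veps^\infty (1 - \ee^{\veps-s})(\omega_1 * \cdots * \omega_k)(s)\,\dd s \;=\; \sum_{i_1,\ldots,i_k} a_{i_1}^1 \cdots a_{i_k}^k \cdot \phi\bigl(s_{i_1} + \cdots + s_{i_k}\bigr),
$$
where $\phi(u) := \max\{0, 1 - \ee^{\veps - u}\}$.

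The key observation is that $\phi$ is nondecreasing on $\RR$: it is $0$ for $u \leq \veps$ and strictly increasing afterwards. Since by construction $s_{i_\ell}^\mathrm{L} \leq s_{i_\ell} \leq s_{i_\ell}^\mathrm{R}$ for every $i_\ell$, summing yields
$$
s_{i_1}^\mathrm{L} + \cdots + s_{i_k}^\mathrm{L} \;\leq\; s_{i_1} + \cdots + s_{i_k} \;\leq\; s_{i_1}^\mathrm{R} + \cdots + s_{i_k}^\mathrm{R},
$$
so applying $\phi$ and multiplying by the nonnegative weight $a_{i_1}^1 \cdots a_{i_k}^k$ preserves the inequalities term by term. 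Summing over all multi-indices and adding back the common $1 - \prod_\ell(1-\delta_\ell(\infty))$ gives $\delta^\mathrm{L}(\veps) \leq \delta(\veps) \leq \delta^\mathrm{R}(\veps)$, which is the claim.

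The whole argument is essentially bookkeeping; the only subtle point, and the one I would check carefully, is the handling of the lower limit $\veps$ of integration. Naively one might worry that a Dirac sitting slightly above $\veps$ in $\omega_1 * \cdots * \omega_k$ could be pushed below $\veps$ by the left approximation, so that it disappears from the integral entirely. The clean way to handle this is exactly the extension to $\phi(u) = \max\{0, 1 - \ee^{\veps-u}\}$, which is continuous and nondecreasing across the threshold $u = \veps$ and makes the $\mathbf{1}_{[\veps,\infty)}$ cutoff disappear from the computation; a Dirac crossing $\veps$ from above simply has its contribution drop continuously to $0$, consistent with the monotonicity chain above.
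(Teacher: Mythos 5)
Your proposal is correct and follows essentially the same route as the paper's proof: separate off the $\delta_\ell(\infty)$ terms, expand the convolution of Diracs with common nonnegative weights, and use monotonicity of $(1-\ee^{\veps-s})$ for $s\geq\veps$ together with $s_{i}^\mathrm{L}\leq s_{i}\leq s_{i}^\mathrm{R}$ term by term. Your extension to $\phi(u)=\max\{0,1-\ee^{\veps-u}\}$ is a nice, slightly more careful treatment of Diracs near the threshold $\veps$, which the paper's proof handles only implicitly.
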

\subsection{Lemma 6 of the Main Text}

We next prove Lemma 6 of the main text which shows that the truncated convolutions of periodic distributions
can be evaluated using FFT. 
Suppose $\omega_1$ and $\omega_2$ are defined such that 
\begin{equation} \label{Aeq:omega}
	\omega_1(s) = \sum\nolimits_i a_i \cdot \deltas{s_i}, \quad \omega_2(s) = \sum\nolimits_i b_i \cdot \deltas{s_i},
\end{equation}
where for all $i$: $a_i,b_i \geq 0$ and $s_i = i \Delta x$. 
The convolution $\omega_1 * \omega_2$ can then be written as 
\begin{equation*}
	\begin{aligned}
	(\omega_1 * \omega_2)(s)  & =  \sum\nolimits_{i,j} a_i b_j \cdot \deltas{s_i + s_j} \\
	& = \sum\nolimits_i \Big(\sum\nolimits_j a_j b_{i-j} \Big) \cdot \deltas{s_i}.
\end{aligned}
\end{equation*}
Let $L>0$. We truncate these convolutions to the interval $[-L,L]$ such that 
\begin{equation*}
	\begin{aligned}
	(\omega_1 * \omega_2 )(s) & \approx \sum\nolimits_i \Big(\sum\nolimits_{-L \leq s_j < L} a_j b_{i-j} \Big) \cdot \deltas{s_i}   \\
	& =: (\omega_1 \circledast \omega_2 )(s).
\end{aligned}
\end{equation*}
For $\omega_1$ of the form \eqref{Aeq:omega}, we define $\widetilde{\omega}_1$ to be a $2 L$-periodic extension of $\omega_1$
from $[-L,L]$ to $\mathbb{R}$, i.e., $\widetilde{\omega}_1$ is of the form
$$
\widetilde{\omega}_1(s) = \sum\nolimits_{m \in \mathbb{Z}} \, \sum\nolimits_i a_i \cdot \deltas{s_i + m \cdot 2 L}.
$$
For $\omega_1$ and $\omega_2$ of the form \eqref{Aeq:omega}, we approximate the convolution $\omega_1 * \omega_2$ as
\begin{equation*}
	\omega_1 * \omega_2 \approx \widetilde{\omega}_1 \circledast \widetilde{\omega}_2.
\end{equation*}
Since $\omega_1$ and $\omega_2$ are defined on an equidistant grid, FFT can be used to evaluate the approximation $\widetilde{\omega}_1 \circledast \widetilde{\omega}_2$
as follows:
\begin{lem}[Lemma 6 of the main text] \label{Alem:fft}
Let $\omega_1$ and $\omega_2$ be of the form \eqref{Aeq:omega}, such that $s_i = -L + i \Delta x$, $0 \leq i \leq n-1$, where $L>0$, $n$ is even and 
$\Delta x = 2L/n$.
Define
\begin{equation*} 
 	\begin{aligned}
\boldsymbol{a} & = \begin{bmatrix} a_0 & \ldots & a_{n-1} \end{bmatrix}^\mathrm{T}, \\
\quad  \boldsymbol{b} &= \begin{bmatrix} b_0 & \ldots & b_{n-1} \end{bmatrix}^\mathrm{T}, \\
 D &= \begin{bsmallmatrix} 0 & I_{n/2} \\ I_{n/2} & 0 \end{bsmallmatrix} \in \mathbb{R}^{n \times n}.
	\end{aligned}
\end{equation*} 
Then, 
$$
(\widetilde{\omega}_1 \circledast \widetilde{\omega}_2 )(s) = \sum\nolimits_{i=0}^{n-1} c_i \cdot \deltas{s_i},
$$
where
$$
c_i = \left[D \, \mathcal{F}^{-1} \big(\mathcal{F}( D \, \boldsymbol{a} ) \odot \mathcal{F}( D \, \boldsymbol{b} )    \big) \right]_i,
$$
and $\odot$ denotes the element-wise product of vectors.
\end{lem}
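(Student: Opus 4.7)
The strategy is to reduce the truncated convolution of the periodised PLDs to the standard circular convolution on $\mathbb{Z}_n$ by reindexing the grid, after which the discrete convolution theorem stated in the main text applies directly. The role of the matrix $D$ is purely bookkeeping: since $D$ swaps the upper and lower halves of a vector of length $n$, one has $(D\boldsymbol{a})_i = a_{(i+n/2)\bmod n}$, which precisely recenters the grid so that the index $i=0$ corresponds to the point $s=0$ rather than $s=-L$. This is the natural indexing in which the cyclic additive structure on $\mathbb{Z}_n$ is compatible with addition of grid points modulo the period $2L$.

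First I would expand $(\widetilde\omega_1 \circledast \widetilde\omega_2)(s)$ using the definitions of the periodic extension and of the truncated convolution: since $\delta_{s_j} * \delta_{s_l} = \delta_{s_j+s_l}$ and $s_j+s_l = -2L + (j+l)\Delta x$, a grid point $s_j+s_l$ coincides with $s_i$ modulo $2L$ precisely when $j+l \equiv i + n/2 \pmod n$. Collecting terms gives
\begin{equation*}
(\widetilde\omega_1 \circledast \widetilde\omega_2)(s) \;=\; \sum_{i=0}^{n-1} c_i \cdot \deltas{s_i}, \qquad c_i \;=\; \sum_{j=0}^{n-1} a_j\, b_{(i+n/2-j)\bmod n}.
\end{equation*}

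Next, I would verify that this is exactly the right-hand side of the claim. Substituting $a'_m = (D\boldsymbol{a})_m = a_{(m+n/2)\bmod n}$ and $b'_m = (D\boldsymbol{b})_m$, the standard discrete convolution theorem recalled in the introduction of the FFT subsection gives
\begin{equation*}
\bigl[\mathcal{F}^{-1}\bigl(\mathcal{F}(D\boldsymbol{a})\odot \mathcal{F}(D\boldsymbol{b})\bigr)\bigr]_k \;=\; \sum_{m=0}^{n-1} a'_m\, b'_{(k-m)\bmod n}.
\end{equation*}
Changing variable $m \mapsto j = (m+n/2)\bmod n$ turns the right-hand side into $\sum_j a_j\, b_{(k+n/2-j)\bmod n}$. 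Finally, applying $D$ once more replaces $k$ by $(i+n/2)\bmod n$, which absorbs the $n/2$ shift and yields $c_i$ as computed above.

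The only subtle step is the bookkeeping of indices modulo $n$ versus grid points modulo $2L$, and in particular checking that the two applications of $D$ (inside and outside the FFT sandwich) correspond respectively to recentering the summation variable and recentering the output variable. Once the correspondence $i \leftrightarrow (i+n/2)\bmod n$ is made explicit, no further calculation is needed and the two displayed expressions match term by term.
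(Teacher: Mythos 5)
Your overall route is the same as the paper's: identify the coefficients of $\widetilde{\omega}_1 \circledast \widetilde{\omega}_2$ with a cyclic convolution on $\mathbb{Z}_n$ after the half-period reindexing effected by $D$, apply the discrete convolution theorem, and undo the reindexing with the outer $D$. Your first display is also correct: for the grid $s_i=-L+i\Delta x$ one indeed has $c_i=\sum_{j=0}^{n-1} a_j\,b_{(i+n/2-j)\bmod n}$, and your congruence $j+l\equiv i+n/2 \pmod n$ is the right bookkeeping of grid points modulo $2L$.

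However, the verification chain then fails at exactly the step you flag as the subtle one. Since $(D\boldsymbol{a})_m=a_{(m+n/2)\bmod n}$ and $(D\boldsymbol{b})_l=b_{(l+n/2)\bmod n}$, the substitution $j=(m+n/2)\bmod n$ in $\sum_m (D\boldsymbol{a})_m (D\boldsymbol{b})_{(k-m)\bmod n}$ turns the $b$-index into $k-m+n/2\equiv k-(j-n/2)+n/2\equiv k-j \pmod n$, so that
\begin{equation*}
\bigl[\mathcal{F}^{-1}\bigl(\mathcal{F}(D\boldsymbol{a})\odot\mathcal{F}(D\boldsymbol{b})\bigr)\bigr]_k=\sum\nolimits_{j=0}^{n-1} a_j\,b_{(k-j)\bmod n},
\end{equation*}
i.e.\ the two inner swaps cancel (shifting both factors by $n/2$ shifts their circular convolution by $n\equiv 0$); it is not $\sum_j a_j b_{(k+n/2-j)\bmod n}$ as you wrote, where only the $a$-factor has been unshifted. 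With your intermediate formula, the final substitution $k\mapsto (i+n/2)\bmod n$ would give $\sum_j a_j b_{(i-j)\bmod n}=c_{(i+n/2)\bmod n}$, contradicting your own (correct) first display, so the claimed term-by-term match does not go through as written. The conclusion is nevertheless true and the repair is one line: with the corrected intermediate expression, the outer $D$ supplies exactly the missing $n/2$ offset, $\bigl[D\,\mathcal{F}^{-1}(\cdots)\bigr]_i=\sum_j a_j b_{(i+n/2-j)\bmod n}=c_i$. (Incidentally, for two factors the inner applications of $D$ are mathematically redundant for this reason; they matter for the $k$-fold products in Algorithm 1 when $k$ is odd, which is why the formula carries them.)
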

\begin{proof}
Assume $n$ is even and $s_i = -L + i \Delta x$, $0 \leq i \leq n-1$. 
From the the truncation and periodisation it follows that
$\widetilde{\omega} \circledast \widetilde{\omega}$ is of the form
\begin{equation} \label{Aeq:b_i}
	(\widetilde{\omega}_1 \circledast \widetilde{\omega}_2 )(s) = \sum\limits_{i=0}^{n-1} c_i \cdot \deltas{s_i},
	\quad \quad c_i = \sum\limits_{j = n/2}^{3n/2-1} a_j \, b_{i - j } \, \textrm{ (indices modulo $n$)}.
\end{equation}
Denoting $\boldsymbol{\widetilde{a}} = D \boldsymbol{a}$ and $\boldsymbol{\widetilde{b}} = D \boldsymbol{b}$,
we see that the coefficients $c_i$ in \eqref{Aeq:b_i} are given by the expression
$$
c_{i+n/2} = \sum\limits_{j = 0}^{n-1} \widetilde{a}_j \, \widetilde{b}_{i - j } \, \textrm{ (indices modulo $n$)},
$$
to which we can apply DFT and the convolution theorem \citep{stockham1966}. Thus, when $0 \leq i \leq n-1$,
\begin{equation} \label{Aeq:FinvF}
	c_{i+n/2} = \left[ \mathcal{F}^{-1} \big(\mathcal{F}( \boldsymbol{\widetilde{a}} ) \odot  \mathcal{F} ( \boldsymbol{\widetilde{b}} )  \big) \right]_i 
	= \left[ \mathcal{F}^{-1} \big(\mathcal{F}( D \boldsymbol{a} ) \odot  \mathcal{F} ( D \boldsymbol{b} )  \big) \right]_i,  \, \textrm{ (indices modulo $n$)}
\end{equation}
where $\odot$ denotes the elementwise product of vectors. From \eqref{Aeq:FinvF} we find that
$$
	c_i = \left[ D \mathcal{F}^{-1} \big(\mathcal{F}( D \boldsymbol{a} ) \odot  \mathcal{F} ( D \boldsymbol{b} )  \big) \right]_i,  \, \textrm{ (indices modulo $n$)}.
$$
\end{proof}
%

\section{Proofs for the Results of Section 5} \label{Asec:err_est_a} 


\subsection{Decomposition of the Total Error}

%
When carrying out the approximations described in the main text, we 
\begin{enumerate}
	\item First replace the PLDs $\omega_1,\ldots,\omega_k$ by the right grid approximations
$\omega_1^\mathrm{R},\ldots,\omega_k^\mathrm{R}$. Using the notation given in the main text, this corresponds to the approximation $\delta(\veps) \approx \delta^\mathrm{R}(\veps)$,
i.e. to the approximation
$$
\int\limits_L^\infty (1 - \ee^{\veps - s})(\omega_1 * \cdots * \omega_k ) (s)  \, \dd s \approx 
\int\limits_L^\infty (1 - \ee^{\veps - s})(\omega_1^\mathrm{R} * \cdots * \omega_k^\mathrm{R} ) (s)  \, \dd s.
$$
	\item Then, the Fourier accountant is used to approximate $\delta^\mathrm{R}(\veps) \approx \widetilde{\delta^\mathrm{R}}(\veps)$ which  in exact arithmetic corresponds to the approximation
$$
\int\limits_L^\infty (1 - \ee^{\veps - s})(\omega_1^\mathrm{R} * \cdots * \omega_k^\mathrm{R} ) (s)  \, \dd s \approx 
\int\limits_L^\infty (1 - \ee^{\veps - s})(\widetilde{\omega}_1^\mathrm{R} \circledast \cdots \circledast \widetilde{\omega}_k^\mathrm{R} ) (s)  \, \dd s,
$$
where $\widetilde{\omega}_i$'s denote the periodised PLD distributions and $\circledast$ denotes the truncated convolutions (described in the main text).
\end{enumerate}
We separately consider the errors arising from the periodisation and truncation of the convolutions and from the grid approximation. 
This means that we bound the total error as
\begin{equation*}
	\begin{aligned}
		\abs{ \delta(\veps) - \widetilde{\delta^\mathrm{R}}(\veps) } & = \abs{ \delta(\veps) - \delta^\mathrm{R}(\veps) + \delta^\mathrm{R}(\veps) - \widetilde{\delta^\mathrm{R}}(\veps)} \\
		& \leq \abs{ \delta(\veps) - \delta^\mathrm{R}(\veps)} + \abs{\delta^\mathrm{R}(\veps) - \widetilde{\delta^\mathrm{R}}(\veps)}
	\end{aligned}
\end{equation*}
Theorem 8 of the main text gives a bound for the term $\abs{ \delta(\veps) - \delta^\mathrm{R}(\veps)}$ and
Theorem 7 bounds for the term
$\abs{\delta(\veps) - \widetilde{\delta}(\veps)}$, in terms of the moment generating functions (MGFs) of $\omega_1,\ldots,\omega_k$ and $-\omega_1,\ldots,-\omega_k$. 
The bounds for the error $\abs{\delta(\veps) - \widetilde{\delta}(\veps)}$ can be directly used to bound 
the error $\abs{\delta^\mathrm{R}(\veps) - \widetilde{\delta^\mathrm{R}}(\veps)}$, either by numerically evaluating the MGFs of the PLDs 
$\omega_1^\mathrm{R},\ldots,\omega_k^\mathrm{R}$, or by using MGFs of the PLDs $\omega_1,\ldots,\omega_k$ and Lemma 7 of~\cite{koskela2021tight}, which states that when $0 < \lambda < (\Delta x)^{-1}$,
\begin{equation*} 
	\mathbb{E} [ \ee^{ - \lambda \omega^\mathrm{R}} ] \leq \mathbb{E} [ \ee^{ - \lambda \omega } ] \quad \textrm{and} \quad
	\mathbb{E} [ \ee^{ \lambda \omega^\mathrm{R}} ] \leq \tfrac{1}{1-\lambda \Delta x} \mathbb{E} [ \ee^{ \lambda \omega } ],
\end{equation*}
where $\Delta x = 2L/n$.

\subsection{Tail Bound for the Convolved PLDs}

For the error analysis we repeatedly use the Chernoff bound~\citep{wainwright2019}
\begin{equation*} 
\mathbb{P}[ X \geq t] = \mathbb{P}[ \ee^{\lambda X} \geq \ee^{\lambda t} ] \leq \frac{ \mathbb{E}[ \ee^{\lambda X} ] }{\ee^{\lambda t}}
\end{equation*}
which holds for any random variable $X$ and for all $\lambda > 0$.
If $\omega$ is of the form 
$$
\omega(s) = \sum_{i=0}^{n-1} a_i \cdot \deltas{s_i}, \quad s_i = \log \left( \frac{a_{X,i}}{a_{Y,i}}   \right),
$$ 
where $a_{X,i},a_{Y,i} \geq 0$, $s_i \in \mathbb{R}$, $0 \leq i \leq n-1$, the moment generating function is given by
\begin{equation} \label{Aeq:pld_lmf}
	\begin{aligned}
		\mathbb{E} [\ee^{\lambda \omega_{X/Y} }] &= \int\limits_{-\infty}^\infty \ee^{\lambda s} \omega(s) \, \dd s 
		= \sum\limits_{i=1}^n  \ee^{\lambda s_i} \cdot a_{X,i} 
		= \sum\limits_{i=1}^n  \left( \frac{a_{X,i}}{a_{Y,i}}  \right)^\lambda a_{X,i}.
	\end{aligned}
\end{equation}

Denote $S_k := \sum_{i=1}^k \omega_i$, where $\omega_i$ denotes the PLD random variable of the $i$th mechanism.
Since $\omega_i$'s are independent, we have that
$$
\mathbb{E} [ \ee^{\lambda S_k}  ]  = \prod\nolimits_{i=1}^k \mathbb{E} [ \ee^{\lambda \omega_i}  ].
$$
Then, the Chernoff bound shows that for any $\lambda > 0$
\begin{equation} \label{Aeq:chernoff}
	\begin{aligned}
	I_1(L) &= 	\int_L^\infty  ( \omega_{1} * \cdots * \omega_{k})(s) \, \dd s \\
	&= \mathbb{P}[ S_k \geq L ] \\
		&\leq \prod\nolimits_{i=1}^k \mathbb{E} [ \ee^{\lambda \omega_i}  ] \,  \ee^{- \lambda L} \\
		& \leq \ee^{\sum_{i=1}^k \alpha_i(\lambda)}  \ee^{- \lambda L},
	\end{aligned}
\end{equation}
where $\alpha_i(\lambda) = \log( \mathbb{E} [\ee^{\lambda \omega_i }] )$.

\subsection{Theorem 7 of the Main Text} \label{Asubsec:third_approx}

We next give a proof for Theorem 7 of the main text.
Recall: denote the logarithms of the moment generating functions of the PLDs as
$$
\alpha_i^+(\lambda) = \log \Big(	\mathbb{E} [\ee^{\lambda \omega_i }] \Big),
\quad  \alpha_i^-(\lambda) = \log \Big(	\mathbb{E} [\ee^{ - \lambda \omega_i }] \Big),
$$
where $1 \leq i \leq k$. Futhermore, denote 
\begin{equation*}
	\alpha^+(\lambda) = \sum\nolimits_i \, \alpha^+_i(\lambda),
	\quad  \alpha^-(\lambda) = \sum\nolimits_i \, \alpha^-_i(\lambda).
\end{equation*}
Using the Chernoff bound, we obtain the required using $\alpha^+(\lambda)$ and $\alpha^-(\lambda)$.

\begin{remark}
Notice that in case $s_j \in [-L,L]$ for all $j$, $0 \leq j \leq n-1$, then 
$$
(\omega_1 \circledast \cdots \circledast \omega_k)(s) = (\omega_1 * \cdots * \omega_k)(s),
$$
i.e., the error arising from the truncation of discrete convolutions vanishes and Thm. 7 of the main text
gives the total error arising from periodisation and truncation operations when $s_j \in [-L,L]$ for all $j$.

\end{remark}

\begin{thm}[Thm. 7 of the main text] \label{Alem:period}
Let $\omega$ be defined as above and suppose $s_i \in [-L,L]$ for all $0\leq i \leq n-1$. Then, 
\begin{equation*}
	\begin{aligned}
 I_3(L) = \int\limits_\veps^L\abs{(\omega_1 \circledast \cdots \circledast \omega_k - \widetilde{\omega}_1 \circledast \cdots  \circledast\widetilde{\omega}_k)(s)} \, \dd s  
 \leq \big(  \ee^{\alpha^+(\lambda)} + \ee^{\alpha^-(\lambda)}  \big) \frac{ \ee^{- L \lambda} }{ 1 - \ee^{- 2L \lambda}}.
	\end{aligned}
\end{equation*}	
\begin{proof}

Let $\omega_i$'s and the corresponding $2L$-periodic continuations $\widetilde{\omega}_i(s)$ be of the form
$$
\omega_i(s) = \sum_i a_j^i \cdot \deltas{s_j} \quad \textrm{and} \quad 
\widetilde{\omega}_i(s) = \sum_j \widetilde{a}_j^i \cdot \deltas{s_j},
$$
where $s_j = j \Delta x$ and $a_j^i,\widetilde{a}_j^i \geq 0$. By definition of the truncated convolution $\circledast$, 
\begin{equation*}
	\begin{aligned}
	( \widetilde{\omega}_1 \circledast \cdots  \circledast\widetilde{\omega}_k)(s) & = 
	\sum\limits_{-L \leq s_{j_1} < L} \widetilde{a}_{j_1}^1 \sum\limits_{-L \leq s_{j_2} < L} \widetilde{a}_{j_2}^2 
	\ldots \sum\limits_{-L \leq s_{j_{k-1}} < L} \widetilde{a}_{j_{k-1}}^{k-1} 
	\sum\limits_i  \widetilde{a}_{i - j_1 - \ldots - j_{k-1}}^k \cdot \deltas{s_i}  \\
	& = \sum\limits_{-L \leq s_{j_1} < L} a_{j_1}^1 \sum\limits_{-L \leq s_{j_2} < L} a_{j_2}^2 
		\ldots \sum\limits_{-L \leq s_{j_{k-1}} < L} a_{j_{k-1}}^{k-1} 
		\sum\limits_i  \widetilde{a}_{i - j_1 - \ldots - j_{k-1}}^k \cdot \deltas{s_i} \\
	& = \sum\limits_{j_1} a_{j_1}^1 \sum\limits_{j_2}  a_{j_2}^2 
		\ldots \sum\limits_{j_{k-1}}  a_{j_{k-1}}^{k-1}  
		\sum\limits_i  \widetilde{a}_{i - j_1 - \ldots - j_{k-1}}^k \cdot \deltas{s_i},
	\end{aligned}
\end{equation*}	
since for all $i$, $\widetilde{a}^i_j = a^i_j$ for all $j$ such that $-L \leq s_j < L$.
Furthermore,
\begin{equation*}
	\begin{aligned}
	(\omega_1 \circledast \cdots \circledast \omega_k)(s)
	& = \sum\limits_{-L \leq s_{j_1} < L} a_{j_1}^1 \sum\limits_{-L \leq s_{j_2} < L} a_{j_2}^2
		\ldots \sum\limits_{-L \leq s_{j_{k-1}} < L} a_{j_{k-1}}^{k-1} 
		\sum\limits_i  a_{i - j_1 - \ldots - j_{k-1}}^k \cdot \deltas{s_i} \\
	& = \sum\limits_{j_1} a_{j_1}^1 \sum\limits_{j_2}  a_{j_2}^2 
		\ldots \sum\limits_{j_{k-1}}  a_{j_{k-1}}^{k-1}
		\sum\limits_i  a_{i - j_1 - \ldots - j_{k-1}}^k \cdot \deltas{s_i}.
	\end{aligned}
\end{equation*}	
Thus
\begin{equation} \label{Aeq:C31}
	\begin{aligned}
	& (\widetilde{\omega}_1 \circledast \cdots  \circledast\widetilde{\omega}_k - 
	\omega_1 \circledast \cdots \circledast \omega_k)(s)  \\	
& \quad \quad \quad \quad	=  \sum\limits_{j_1}  a_{j_1}^1 \sum\limits_{j_2}  a_{j_2}^2
		\ldots \sum\limits_{j_{k-1}} a_{j_{k-1}}^{k-1} 
		\sum\limits_i  \widehat{a}_{i - j_1 - \ldots - j_{k-1}}^k \cdot \deltas{s_i},
	\end{aligned}
\end{equation}	
where 
\begin{equation} \label{Aeq:C32}
	\widehat{a}^k_{j} = \widetilde{a}^k_j - a^k_j = \begin{cases}
		 0 , &\text{ if } -L \leq s_j < L, \\
				a^k_{j \, \textrm{mod} \, n}, &\text{ else. }
	\end{cases}
\end{equation}
From \eqref{Aeq:C31} we see that 
\begin{equation} \label{Aeq:C34}
	\begin{aligned}
 & \int\limits_\veps^L\abs{(\omega_1 \circledast \cdots \circledast \omega_k - \widetilde{\omega}_1 \circledast \cdots  \circledast\widetilde{\omega}_k)(s)} \, \dd s  \\
  &  \quad \quad \leq \int\limits_{\mathbb{R}} \abs{(\omega_1 \circledast \cdots \circledast \omega_k - \widetilde{\omega}_1 \circledast \cdots  \circledast\widetilde{\omega}_k)(s)} \, \dd s \\
 & \quad \quad = \int\limits_{\mathbb{R}} \sum\limits_{j_1}  a_{j_1}^1 \sum\limits_{j_2}  a_{j_2}^2
		\ldots \sum\limits_{j_{k-1}} a_{j_{k-1}}^{k-1} 
		\sum\limits_i  \widehat{a}_{i - j_1 - \ldots - j_{k-1}}^k \cdot \deltas{s_i} \, \dd s \\
 & \quad \quad = \sum\limits_{j_1} a_{j_1} \sum\limits_{j_1}  a_{j_1}^1 \sum\limits_{j_2}  a_{j_2}^2
		\ldots \sum\limits_{j_{k-1}} a_{j_{k-1}}^{k-1} 
		\sum\limits_i  \widehat{a}_{i - j_1 - \ldots - j_{k-1}}^k.
	\end{aligned}
\end{equation}
From the periodic form of the coefficients $\widehat{a}^k_{j}$ \eqref{Aeq:C32} we have that
\begin{equation} \label{Aeq:C35}
	\begin{aligned}
& \sum\limits_{j_1}  a_{j_1}^1 \sum\limits_{j_2}  a_{j_2}^2
		\ldots \sum\limits_{j_{k-1}} a_{j_{k-1}}^{k-1} 
		\sum\limits_i  \widehat{a}_{i - j_1 - \ldots - j_{k-1}}^k \\
		& \quad \quad =  \sum\limits_{n \in \mathbb{Z}\setminus \{0\}} \mathbb{P} \big( (2 n - 1) L \leq  \omega_{1} * \cdots * \omega_{k} < (2n+1) L \big) \\
		& \quad \quad =  \sum\limits_{n \in \mathbb{Z}^-} \mathbb{P} \big( (2 n - 1) L \leq  \omega_{1} * \cdots * \omega_{k} < (2n+1) L \big)  \\
		 & \quad \quad \quad \quad + \sum\limits_{n \in \mathbb{Z}^+} \mathbb{P} \big( (2 n - 1) L \leq  \omega_{1} * \cdots * \omega_{k} < (2n+1) L \big) \\
		& \quad \quad \leq  \sum\limits_{n \in \mathbb{Z}^-} \mathbb{P} \big( \omega_{1} * \cdots * \omega_{k} \leq (2n+1) L \big)
		+ \sum\limits_{n \in \mathbb{Z}^+} \mathbb{P} \big(  \omega_{1} * \cdots * \omega_{k}  \geq (2 n - 1) L \big). \\
	\end{aligned}
\end{equation}
We also see that
$$
\sum\limits_{n \in \mathbb{Z}^-} \mathbb{P} \big( \omega_{1} * \cdots * \omega_{k} \leq (2n+1) L \big) 
= \sum\limits_{n \in \mathbb{Z}^+} \mathbb{P} \big( (-\omega_1) * \cdots * (-\omega_k) \geq (2n-1) L \big).
$$
Using the Chernoff bound \eqref{Aeq:chernoff}, we have the tail bound
\begin{equation*}
	\begin{aligned}
\mathbb{P} \big(  \omega_{1} * \cdots * \omega_{k}  \geq (2 n - 1) L \big)  & \leq \ee^{\sum_{i=1}^k \alpha_i(\lambda)}  \ee^{- (2n - 1) L \lambda} \\
& = \ee^{\alpha^+(\lambda)} \ee^{- (2n - 1) L \lambda}
	\end{aligned}
\end{equation*}
and similarly 
$$
\mathbb{P} \big( (-\omega_1) * \cdots * (-\omega_k)  \geq (2 n - 1) L \big) 
 \leq \ee^{\alpha^-(\lambda)} \ee^{- (2n - 1) L \lambda}.
$$
Using the bounds \eqref{Aeq:C34}, \eqref{Aeq:C35} and the Chernoff bound \eqref{Aeq:chernoff}, we find that for all
$\lambda > 0$
\begin{equation*}
	\begin{aligned}
		\int\limits_\veps^L\abs{(\omega_{1} * \cdots * \omega_{k} - \widetilde{\omega}_1 \circledast \cdots  \circledast\widetilde{\omega}_k)(s)} \, \dd s
		 & \leq  \sum\limits_{\ell=1}^\infty \ee^{ \alpha^+(\lambda)} \ee^{- (2 \ell - 1) L \lambda} + \ee^{ \alpha^-(\lambda)} \ee^{- (2 \ell - 1) L \lambda} \\
		 & = \big(  \ee^{ \alpha^+(\lambda)} + \ee^{ \alpha^-(\lambda)}  \big) \frac{ \ee^{- L \lambda} }{ 1 - \ee^{- 2 L \lambda}}.
	\end{aligned}
\end{equation*}
\end{proof}
\end{thm}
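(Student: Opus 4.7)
The plan is to analyze the difference between the two truncated convolutions term by term and then bound it as a sum of tail probabilities on which the Chernoff bound can be applied. First I would write out the convolutions explicitly: using the coefficient representation $\omega_i(s) = \sum_j a_j^i \cdot \delta_s(s_j)$ and $\widetilde{\omega}_i(s) = \sum_j \widetilde{a}_j^i \cdot \delta_s(s_j)$ on the $\Delta x$-grid, expand $(\omega_1\circledast\cdots\circledast\omega_k)(s)$ and $(\widetilde{\omega}_1\circledast\cdots\circledast\widetilde{\omega}_k)(s)$ as nested sums, and observe that the outer sums over the first $k-1$ indices run over the truncation window $-L \le s_{j_\ell} < L$. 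Since $\widetilde{a}_{j_\ell}^\ell = a_{j_\ell}^\ell$ on that window, only the inner sum over the $k$th index differs between the two expressions.

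Next I would define $\widehat{a}_j^k := \widetilde{a}_j^k - a_j^k$, which vanishes on $[-L, L)$ and equals $a_{j \bmod n}^k$ elsewhere; this captures precisely the periodic wrap-around contributions. Taking absolute values and integrating over $\veps \le s \le L$ is upper-bounded by integrating over all of $\mathbb{R}$, which makes the outer $k-1$ sums sum to one (since each $\omega_\ell$ is a probability distribution). The remaining sum then collapses to $\sum_{m \in \mathbb{Z}\setminus\{0\}} \mathbb{P}\bigl((2m-1)L \le \omega_1 * \cdots * \omega_k < (2m+1)L\bigr)$, which is the probability mass the convolution places outside $[-L, L)$ split into $2L$-length windows.

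Then I would split this sum into positive and negative indices and dominate each window's probability by the corresponding one-sided tail: for $m > 0$, $\mathbb{P}(\omega_1 * \cdots * \omega_k \ge (2m-1)L)$, and for $m < 0$, the symmetric statement $\mathbb{P}((-\omega_1) * \cdots * (-\omega_k) \ge (2|m|-1)L)$. Applying the Chernoff bound \eqref{Aeq:chernoff} with the MGFs $\alpha^+(\lambda)$ and $\alpha^-(\lambda)$ yields $e^{\alpha^+(\lambda)} e^{-(2m-1)L\lambda}$ and $e^{\alpha^-(\lambda)} e^{-(2m-1)L\lambda}$ respectively. Summing the resulting geometric series $\sum_{m=1}^\infty e^{-(2m-1)L\lambda} = e^{-L\lambda}/(1 - e^{-2L\lambda})$ gives the claimed bound.

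The main obstacle is the bookkeeping in the first step: correctly identifying that the $\widehat{a}^k$ terms capture exactly the tail mass of the full convolution under the $2L$-periodisation, and verifying that extending the integration domain from $[\veps, L]$ to $\mathbb{R}$ is valid (it is, since the integrand is nonnegative and we seek an upper bound). Once that equivalence is established, the remainder reduces to a standard Chernoff-plus-geometric-series argument.
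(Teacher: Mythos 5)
Your proposal is correct and follows essentially the same route as the paper's proof: expanding the truncated convolutions, isolating the difference in the last factor via $\widehat{a}^k_j = \widetilde{a}^k_j - a^k_j$, extending the integral to $\mathbb{R}$, rewriting the wrap-around mass as $\sum_{m\neq 0}\mathbb{P}\big((2m-1)L \le \omega_1 * \cdots * \omega_k < (2m+1)L\big)$, and finishing with the Chernoff bound and the geometric series $\ee^{-L\lambda}/(1-\ee^{-2L\lambda})$. No gaps worth noting.
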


\subsection{Thm. 7 of the main text, support of PLD outside of $[-L,L]$} \label{Asubsec:trunc_analysis}

For completeness, we consider also the case, where the PLD distribution is not contained in the $[-L,L]$-interval.
This mean that, in addition to the periodisation error, we give also a bound for the truncation error (which does not vanish in this case)
$$
\int\limits_\veps^L \abs{(\omega_1 * \cdots * \omega_k - \omega_1 \circledast \cdots \circledast \omega_k)(s)} \, \dd s
$$ 
in terms of the moment generating function of $\omega$.



\begin{thm} \label{Athm:trunc}
Let $\omega_i$'s, $\alpha_i^+(\lambda)$'s and $\alpha_i^-(\lambda)$'s be defined as above. For all $\lambda > 0$, we have that
\begin{equation*}
	\begin{aligned}
I_2(L) & = \int\limits_{\veps}^L (\omega_1 * \cdots * \omega_k - \omega_1 \circledast \cdots \circledast \omega_k)(s) \, \dd s \\
& \leq \bigg( \frac{\ee^{k \max_i \alpha_i^+(\lambda)}-\ee^{ \max_i \alpha_i^+(\lambda)}}{\ee^{\max_i \alpha_i^+(\lambda)}-1}
		+ \frac{\ee^{k \max_i \alpha_i^-(\lambda)}-\ee^{ \max_i \alpha_i^-(\lambda)}}{\ee^{\max_i \alpha_i^-(\lambda)}-1} \bigg)  \, \ee^{-L \lambda}.
	\end{aligned}
\end{equation*}
\begin{proof}
By adding and subtracting $(\omega_1 * \cdots * \omega_{k-1}) \circledast \omega_k $ , we may write
\begin{equation} \label{Aeq:C21}
	\begin{aligned}
		& \omega_1 * \cdots * \omega_k - \omega_1 \circledast \cdots \circledast \omega_k \\
		= &	(\omega_1 * \cdots * \omega_{k-1}) * \omega_k - (\omega_1 * \cdots * \omega_{k-1}) \circledast \omega_k \\
		& \quad +  ( \omega_1 * \cdots * \omega_{k-1} - \omega_1 \circledast \cdots \circledast \omega_{k-1} ) \circledast \omega_k.
	\end{aligned}
\end{equation}
Let $\ell \in \mathbb{Z}^+$. Let $\omega_\ell$ be of the form 
$$
\omega_\ell(s) = \sum_j a_j^\ell \cdot \deltas{s_j}
$$
and let the convolution $\omega * \cdots * \omega_{\ell-1}$ be of the form 
$$
(\omega * \cdots * \omega_{\ell-1})(s) = \sum_j c_j \cdot \deltas{s_j}
$$
for some $a^\ell_j,c_j \geq 0$, $s_j = j \Delta x$. 
From the definition of the operators $*$ and $\circledast$ it follows that
\begin{equation*}
	\begin{aligned}
		 & \big( (\omega * \cdots * \omega_{\ell-1}) * \omega_\ell - (\omega * \cdots * \omega_{\ell-1}) \circledast \omega_\ell \big) (s) \\
		& \quad \quad =   \sum\limits_{j_1} \Big( \sum\limits_{j_2}  c_{j_2} a_{j_1-j_2}   \Big) \cdot \deltas{s_{j_1}}
		- \sum\limits_{j_1} \Big( \sum\limits_{-L \leq s_{j_2} < L}  c_{j_2} a_{j_1-j_2}   \Big) \cdot \deltas{s_{j_1}} \\
		& \quad \quad =   \sum\limits_{j_1} \Big( \sum\limits_{ s_{j_2} < -L, \, s_{j_2} \geq L}  c_{j_2} a_{j_1-j_2}   \Big) \cdot \deltas{s_{j_1}}.
	\end{aligned}
\end{equation*}
Therefore,
\begin{equation} \label{Aeq:C22}
	\begin{aligned}
& \int\limits_{\mathbb{R}} \big( (\omega * \cdots * \omega_{\ell-1}) * \omega_\ell - (\omega * \cdots * \omega_{\ell-1}) \circledast \omega_\ell \big) (s) \, \dd s \\
	& \quad \quad =  \int\limits_{\mathbb{R}} \sum\limits_{j_1} \Big( \sum\limits_{ s_{j_2} < -L, \, s_{j_2} \geq L}  c_{j_2} a_{j_1-j_2}^\ell   \Big) \cdot \deltas{s_{j_1}} \, \dd s \\
	& \quad \quad = \sum\limits_{ s_{j_2} < -L, \, s_{j_2} \geq L}  c_{j_2}    \sum\limits_{j_1} \int\limits_{\mathbb{R}}  a_{j_1-j_2}^\ell  \cdot \deltas{s_{j_1}} \, \dd s \\
	& \quad \quad = \sum\limits_{ s_{j_2} < -L, \, s_{j_2} \geq L}  c_{j_2}    \sum\limits_{j_1} a_{j_1-j_2}^\ell  \\
	& \quad \quad =  \sum\limits_{ s_{j_2} < -L, \, s_{j_2} \geq L}  c_{j_2} \\
    & \quad \quad =  \, \mathbb{P} \Big( \omega_1 * \cdots * \omega_{\ell-1} < -L \Big) +  \mathbb{P} \Big( \omega_1 * \cdots * \omega_{\ell-1} \geq L \Big)  \\
	& \quad \quad \leq \, \ee^{ \sum\nolimits_{i=1}^{\ell-1}\alpha_i^+(\lambda)} \ee^{-L \lambda} + \ee^{\sum\nolimits_{i=1}^{\ell-1}\alpha_i^-(\lambda)} \ee^{-L \lambda} \\
	& \quad \quad \leq \, \ee^{ (\ell-1) \max_i \alpha_i^+(\lambda)} \ee^{-L \lambda} + \ee^{(\ell-1) \max_i \alpha_i^-(\lambda)} \ee^{-L \lambda}
	\end{aligned}
\end{equation}
for all $\lambda>0$. The second last inequality follows from the Chernoff bound.

Similarly, suppose $\omega_1 * \cdots * \omega_{\ell-1} - \omega_1 \circledast \cdots \circledast \omega_{\ell-1}$
is of the form 
$$
(\omega_1 * \cdots * \omega_{\ell-1} - \omega_1 \circledast \cdots \circledast \omega_{\ell-1})(s) = \sum_i \widetilde{c}_i \cdot \deltas{s_i}
$$
for some coefficients $\widetilde{c}_i \geq 0$, where $s_i = i \Delta x$. Then,
\begin{equation} \label{Aeq:C23}
	\begin{aligned}
	&  \int\limits_{\mathbb{R}} \big( ( \omega_1 * \cdots * \omega_{\ell-1} - \omega_1 \circledast \cdots \circledast \omega_{\ell-1} ) \circledast \omega_\ell \big) (s) \, \dd s \\
	& \quad \quad =  \int\limits_{\mathbb{R}} \sum\limits_{j_1} \Big( \sum\limits_{-L \leq s_{j_2} < L}  \widetilde{c}_j a_{{j_1}-{j_2}}^\ell   \Big) \cdot \deltas{s_{j_1}} \, \dd s \\
	& \quad \quad =  \sum\limits_{-L \leq s_{j_2} < L}  \widetilde{c}_{j_2} \int\limits_{\mathbb{R}} \sum\limits_i a_{{j_1}-{j_2}}^\ell   \cdot \deltas{s_{j_1}} \, \dd s \\
	& \quad \quad =  \sum\limits_{-L \leq s_{j_2} < L}  \widetilde{c}_{j_2} \\
	& \quad \quad \leq \int\limits_{\mathbb{R}} ( \omega_1 * \cdots * \omega_{\ell-1} - \omega_1 \circledast \cdots \circledast \omega_{\ell-1} )(s) \, \dd s.
	\end{aligned}
\end{equation}
Using \eqref{Aeq:C21}, \eqref{Aeq:C22} and \eqref{Aeq:C23}, we see that for all $\lambda > 0$,
\begin{equation} \label{Aeq:C24}
	\begin{aligned}
		& \int\limits_{\veps}^L (\omega_1 * \cdots * \omega_k - \omega_1 \circledast \cdots \circledast \omega_k)(s) \, \dd s \\
		& \quad \quad \leq   \int\limits_{\mathbb{R}} (\omega_1 * \cdots * \omega_k - \omega_1 \circledast \cdots \circledast \omega_k)(s) \, \dd s  \\
		& \quad \quad \leq   \ee^{ (k-1) \max_i \alpha_i^+(\lambda)} \ee^{-L \lambda} + \ee^{(k-1) \max_i \alpha_i^-(\lambda)} \ee^{-L \lambda} \\
		& \quad \quad \quad \quad   +		\int\limits_{\mathbb{R}} ( \omega_1 * \cdots * \omega_{k-1} - \omega_1 \circledast \cdots \circledast \omega_{k-1} \omega )(s) \, \dd s.
	\end{aligned}
\end{equation}
Using \eqref{Aeq:C24} recursively, we see that for all $\lambda > 0$, we have
\begin{equation*}
	\begin{aligned}
		& \int\limits_{\veps}^L (\omega_1 * \cdots * \omega_k - \omega_1 \circledast \cdots \circledast \omega_k)(s) \, \dd s \\
	 	& \quad \quad \leq  \sum\limits_{\ell=1}^{k-1} \ee^{ \ell \max_i \alpha_i^+(\lambda)} \ee^{-L \lambda}
		+ \sum\limits_{\ell=1}^{k-1} \ee^{ \ell \max_i \alpha_i^-(\lambda)} \ee^{-L \lambda} \\  
		& \quad \quad =  \bigg( \frac{\ee^{k \max_i \alpha_i^+(\lambda)}-\ee^{ \max_i \alpha_i^+(\lambda)}}{\ee^{\max_i \alpha_i^+(\lambda)}-1}
		+ \frac{\ee^{k \max_i \alpha_i^-(\lambda)}-\ee^{ \max_i \alpha_i^-(\lambda)}}{\ee^{\max_i \alpha_i^-(\lambda)}-1} \bigg) \, \ee^{-L \lambda}.
	\end{aligned}
\end{equation*}
\end{proof}
\end{thm}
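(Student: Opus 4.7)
The plan is to peel off the truncation one mechanism at a time by telescoping, and to bound each layer with the Chernoff inequality. Abbreviating $A_k := \omega_1 * \cdots * \omega_k$ and $B_k := \omega_1 \circledast \cdots \circledast \omega_k$, I would use the identity
\[
A_k - B_k \;=\; \bigl(A_{k-1} * \omega_k - A_{k-1} \circledast \omega_k\bigr) + (A_{k-1} - B_{k-1}) \circledast \omega_k,
\]
which isolates the error introduced at the current (outer) step from the accumulated error carried in from the previous $k-1$ steps.

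For the first summand, the operators $*$ and $\circledast$ act identically on atoms of $A_{k-1}$ lying in $[-L,L)$, so the difference collects exactly those atoms of $A_{k-1}$ outside that interval (paired with the full mass of $\omega_k$). Integrating over $\mathbb{R}$ therefore telescopes down to $\mathbb{P}(A_{k-1} \geq L) + \mathbb{P}(-A_{k-1} \geq L)$. Applying the Chernoff bound separately to $\omega_1 + \cdots + \omega_{k-1}$ and its negative, at an arbitrary $\lambda > 0$, yields $\ee^{\sum_{i=1}^{k-1}\alpha_i^+(\lambda)}\ee^{-L\lambda} + \ee^{\sum_{i=1}^{k-1}\alpha_i^-(\lambda)}\ee^{-L\lambda}$. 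I would then coarsen each sum by $\sum_{i=1}^{k-1}\alpha_i^{\pm}(\lambda) \le (k-1)\max_i \alpha_i^{\pm}(\lambda)$ so that the layer-wise bound is uniform in the layer index.

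For the second summand, the essential observation is that $\omega_k$ is a nonnegative (sub)distribution of total mass at most one, and $\circledast$ further restricts the inner summation to $[-L,L)$, so convolving any nonnegative measure with $\omega_k$ via $\circledast$ cannot increase its total $\mathbb{R}$-mass. Hence $\int (A_{k-1}-B_{k-1}) \circledast \omega_k \, \dd s \le \int (A_{k-1}-B_{k-1}) \, \dd s$, which sets up a clean one-step recursion $E_k \le (\text{layer-}k\text{ bound}) + E_{k-1}$. Unrolling down to $E_1 = 0$ produces geometric sums $\sum_{\ell=1}^{k-1}\ee^{\ell \max_i \alpha_i^{\pm}(\lambda)}\ee^{-L\lambda}$, each of which closes in closed form to the two fractions appearing in the statement. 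Finally, since $\int_\veps^L \le \int_{\mathbb{R}}$ on any nonnegative integrand, the bound transfers to $I_2(L)$.

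The main obstacle I anticipate is the bookkeeping in the first summand: one must verify that the mass accounting at the outer step really does reduce to the two tail probabilities of $A_{k-1}$ at $\pm L$, with no residual cross-terms from $\omega_k$ itself. A minor nuisance is the passage from the sharp sum $\sum_{i<k}\alpha_i^{\pm}(\lambda)$ to $(k-1)\max_i \alpha_i^{\pm}(\lambda)$; this is what allows the recursion to produce an honest geometric series, but it is slightly lossy and explains why the final bound depends only on the maximum log-MGFs rather than on the individual $\alpha_i^{\pm}$.
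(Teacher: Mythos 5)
Your proposal is correct and follows essentially the same route as the paper's proof: the same telescoping decomposition $A_k-B_k=(A_{k-1}*\omega_k-A_{k-1}\circledast\omega_k)+(A_{k-1}-B_{k-1})\circledast\omega_k$, the same reduction of the first term to the two tail probabilities of $A_{k-1}$ at $\pm L$ bounded via Chernoff and coarsened to $(\ell-1)\max_i\alpha_i^{\pm}(\lambda)$, and the same mass-non-increase argument for the $\circledast\,\omega_k$ term yielding the recursion whose unrolling gives the geometric sums and the stated closed form.
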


\subsection{Theorem 8 of the main text}

We next give a proof for Theorem 8 of the main text which gives a bound for the grid approximation error.

\begin{thm}[Thm. 8 of the main text] \label{Alem:discretisation_error}
The discretisation error $\delta^\mathrm{R}(\veps) - \delta(\veps)$ can be bounded as
\begin{equation*} 
	\delta^\mathrm{R}(\veps) - \delta(\veps) \leq   k \Delta x \,  \big( \mathbb{P}( \omega_1 + \cdots + \omega_k \geq \veps ) - \delta(\veps) \big).
\end{equation*}	

\begin{proof}
From the definition of the discrete convolution we see that
\begin{equation*}
	\begin{aligned}
			( \omega_1 * \cdots * \omega_k) (s) = \sum\limits_{i_1,\ldots,i_k} a^1_{i_1} \cdots \, a^k_{i_k} \cdot \delta_{s^1_{i_1} + \ldots + s^k_{i_k}}(s)
	\end{aligned}
\end{equation*}
and that
\begin{equation*}
	\begin{aligned}
	\delta(\veps)	&	= \int\nolimits_\veps^\infty (1 - \ee^{\veps - s} ) \, (\omega_1 * \cdots * \omega_k)(s) \, \dd s \\
	&   =  	\sum\limits_{  \{ i_1,\ldots,i_k \, : \,  s^1_{i_1} + \ldots + s^k_{i_k} \geq \veps \}} 
	a^1_{i_1} \cdots \, a^k_{i_k}  \cdot (1 - \ee^{\veps - s^1_{i_1} - \hdots - s^k_{i_k}} )
	\end{aligned}
\end{equation*}
Then, using the inequality 
$$ 
a\leq b \Rightarrow \exp(b) - \exp(a) \leq \exp(b)(b-a),
$$ 
we have that
\begin{equation*}
	\begin{aligned}
		 \delta^\mathrm{R}(\veps) - \delta(\veps)  
		 = & \sum\limits_{  \{ i_1,\ldots,i_k \, : \,  s^1_{i_1} + \ldots + s^k_{i_k} \geq \veps \}} a^1_{i_1} \cdots \, a^k_{i_k}  \cdot 
		   (\ee^{\veps - s^1_{i_1} - \hdots - s^k_{i_k}} - \ee^{\veps - s_{i_1}^{\mathrm{R},1} - \hdots - s_{i_k}^{\mathrm{R},k}} ) \\
		\leq  &\sum\limits_{  \{ i_1,\ldots,i_k \, : \,  s^1_{i_1} + \ldots + s^k_{i_k} \geq \veps \}} a^1_{i_1} \cdots \, a^k_{i_k} \cdot
		   \big( (s_{i_1}^{\mathrm{R},1} - s^1_{i_1}) + \hdots + (s_{i_k}^{\mathrm{R},k} - s^k_{i_k})\big) \cdot \ee^{\veps - s^1_{i_1} - \hdots - s^k_{i_k}}.
	\end{aligned}
\end{equation*}
Since $s_i^{\mathrm{R},\ell} - s^\ell_i \leq \Delta x $ for all $i$ and $\ell$, we have that
\begin{equation*} 
	\begin{aligned}
		 \delta^\mathrm{R}(\veps) - \delta(\veps)  
		&\leq   \, k \Delta x \sum\limits_{  \{ i_1,\ldots,i_k \, : \,  s^1_{i_1} + \ldots + s^k_{i_k} \geq \veps \}}  a^1_{i_1} \cdots \, a^k_{i_k} \cdot  \ee^{\veps - s^1_{i_1} - \hdots - s^k_{i_k}} \\
		& =   \, k \Delta x \big(\sum\limits_{  \{ i_1,\ldots,i_k \, : \,  s^1_{i_1} + \ldots + s^k_{i_k} \geq \veps \}}  a^1_{i_1} \cdots \, a^k_{i_k} \\
		&  -  \sum\limits_{  \{ i_1,\ldots,i_k \, : \,  s^1_{i_1} + \ldots + s^k_{i_k} \geq \veps \}}  a^1_{i_1} \cdots \, a^k_{i_k} \cdot (1- \ee^{\veps - s^1_{i_1} - \hdots - s^k_{i_k}} )  \big) \\
		& = k \Delta x \,  \big( \mathbb{P}( \omega_1 * \cdots * \omega_k \geq \veps ) - \delta(\veps) \big).
	\end{aligned}
\end{equation*}
\end{proof}
\end{thm}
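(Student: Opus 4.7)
The plan is to open up the integral formula from Theorem~\ref{thm:integral} for both $\delta(\veps)$ and $\delta^\mathrm{R}(\veps)$ into explicit multi-sums over index tuples $(i_1,\ldots,i_k)$, and then perform a pointwise comparison term by term. First I would observe that the $\delta_\ell(\infty)$ masses are not affected by grid rounding (the rounding just repositions the finite part of the PLD), so the entire gap $\delta^\mathrm{R}(\veps) - \delta(\veps)$ comes from the difference of the two integrals $\int_\veps^\infty (1-\ee^{\veps - s})(\omega_1 * \cdots * \omega_k)(s)\,\dd s$ and its right-approximated analogue. Writing the convolution as $\sum_{i_1,\ldots,i_k} a^1_{i_1}\cdots a^k_{i_k}\,\delta_{s^1_{i_1}+\cdots+s^k_{i_k}}(s)$ and integrating against the Diracs reduces each quantity to an explicit finite sum over tuples.

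Next I would compare the two sums term-by-term on the index set $\{s^1_{i_1}+\cdots+s^k_{i_k}\geq\veps\}$. Since $s^{\mathrm{R},\ell}_i\geq s^\ell_i$ always, every such tuple also satisfies $s^\mathrm{R}:=s^{\mathrm{R},1}_{i_1}+\cdots+s^{\mathrm{R},k}_{i_k}\geq\veps$, so both quantities receive contributions from these tuples. The analytic workhorse is the elementary convexity/MVT bound $\ee^{b}-\ee^{a}\leq \ee^{b}(b-a)$ for $a\leq b$, applied with $b=\veps-s$ and $a=\veps-s^\mathrm{R}$. This bounds the per-tuple gap by $a^1_{i_1}\cdots a^k_{i_k}\cdot\bigl(\sum_\ell(s^{\mathrm{R},\ell}_{i_\ell}-s^\ell_{i_\ell})\bigr)\cdot \ee^{\veps - s}$. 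The grid definition guarantees $s^{\mathrm{R},\ell}_i-s^\ell_i\leq\Delta x$ for each $\ell$, so the inner sum is at most $k\Delta x$, which factors out of the outer summation.

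Finally I would identify the residual factor with $\mathbb{P}(S_k\geq\veps)-\delta(\veps)$, where $S_k=\omega_1+\cdots+\omega_k$. Directly from the multi-sum representations one has $\mathbb{P}(S_k\geq\veps)=\sum_{s\geq\veps}a^1_{i_1}\cdots a^k_{i_k}$ and $\delta(\veps)=\sum_{s\geq\veps}a^1_{i_1}\cdots a^k_{i_k}(1-\ee^{\veps - s})$, so their difference is exactly $\sum_{s\geq\veps}a^1_{i_1}\cdots a^k_{i_k}\ee^{\veps - s}$, the surviving factor after pulling out $k\Delta x$. Assembling everything gives the advertised inequality. The main obstacle I anticipate is the bookkeeping around which index tuples contribute to each sum: $\delta(\veps)$ is indexed by $\{s\geq\veps\}$ while $\delta^\mathrm{R}(\veps)$ is indexed by $\{s^\mathrm{R}\geq\veps\}$, and one must be careful when restricting to the common set $\{s\geq\veps\}$. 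Once the reduction to that common set is justified, the convexity bound and the grid estimate carry everything else.
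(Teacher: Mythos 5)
Your proposal follows essentially the same route as the paper's proof: expand the convolution into a multi-sum over index tuples, apply the elementary bound $\ee^{b}-\ee^{a}\leq \ee^{b}(b-a)$ with $b=\veps-s$ and $a=\veps-s^{\mathrm{R}}$, use $s^{\mathrm{R},\ell}_{i}-s^{\ell}_{i}\leq\Delta x$ to pull out the factor $k\Delta x$, and identify the remaining sum with $\mathbb{P}(\omega_1+\cdots+\omega_k\geq\veps)-\delta(\veps)$. The index-set bookkeeping you flag (tuples with $s<\veps\leq s^{\mathrm{R}}$) is precisely the point the paper's proof passes over silently by writing the gap as an equality over the common set $\{s\geq\veps\}$, so your attempt is, if anything, slightly more careful about the one delicate step.
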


\subsection{Lemma 10 the Main Text: Upper Bound for the Computational Complexity}

We next prove the computational complexity result of Lemma 10 of the main text.
For simplicity, we assume in the following that the compositions consist of $(\veps,0)$-DP mechanisms and that the parameter 
$L$ is chose sufficiently large so that for all $i$: $\abs{s_i} \leq L$, where $s_i = \log \tfrac{a_{X,i}}{a_{Y,i}}$.
Then, we can bound the periodisation error using Theorem 7 of the main text. 

Recall: we denote the logarithms of the moment generating functions of the PLDs as
\begin{equation} \label{eq:alphas_def}
	\alpha_i^+(\lambda) = \log \Big(	\mathbb{E} [\ee^{\lambda \omega_i }] \Big),
	\quad  \alpha_i^-(\lambda) = \log \Big(	\mathbb{E} [\ee^{ - \lambda \omega_i }] \Big),
\end{equation}
where $1 \leq i \leq k$. Futhermore, denote 
\begin{equation*}
	\alpha^+(\lambda) = \sum\nolimits_i \, \alpha^+_i(\lambda),
	\quad  \alpha^-(\lambda) = \sum\nolimits_i \, \alpha^-_i(\lambda).
\end{equation*}

\begin{lem}
Consider a non-adaptive composition of the mechanisms $\mathcal{M}_1, \ldots, \mathcal{M}_k$ with corresponding worst-case pairs of distributions
$f_{X,i}$ and $f_{Y,i}$, $1 \leq i \leq k$. Suppose the sequence $\mathcal{M}_1, \ldots, \mathcal{M}_k$ consists of $m$ distinct mechanisms.
Then, it is possible to have an approximation of $\delta(\veps)$ with error less than $\eta$ with number of operations
$$
\mathcal{O}\left(\frac{m k^2 C_k}{\eta} \log  \frac{k^2 C_k}{\eta} \right),
$$
where 
$$
C_k = \max \{ \tfrac{1}{k} \sum_i D_{\infty}(f_{X,i} || f_{Y,i}),  \tfrac{1}{k} \sum_i D_{\infty}(f_{Y,i} || f_{X,i}) \}
$$
and
$$
D_{\infty}(f_X || f_Y) = \sup_{a_{Y,i} \neq 0} \log \frac{a_{X,i}}{a_{Y,i}}.
$$
\begin{proof}

We first determine a lower bound for the truncation parameter $L$ in terms of $k$.
Consider the right-hand-side of the error bound of Theorem 7 of the main text. Suppose $L \geq 1$ and $\lambda \geq 1$. 
Then, we have that
\begin{equation} \label{Aeq:ub1}
	\begin{aligned}
 \big(  \ee^{\alpha^+(\lambda)} + \ee^{ \alpha^-(\lambda)} \big) \, \frac{\ee^{-L \lambda}}{1- \ee^{-2 L \lambda}} 
 & \leq   \big(  \ee^{\alpha^+(\lambda)} + \ee^{ \alpha^-(\lambda)} \big) \cdot \frac{\ee}{2} \cdot \ee^{-L \lambda}, \\
 & \leq   \ee^{\max \{  {\alpha^-(\lambda), \alpha^+(\lambda)} \} + 1} \ee^{-L \lambda},
	\end{aligned}
\end{equation}
where $\alpha^-(\lambda)$ and $\alpha^+(\lambda)$ are defined as above in~\eqref{eq:alphas_def}.

For each $i$, the logarithm of the moment-generating function of the PLD
can be expressed in terms of the R\'enyi divergence~\cite{mironov2017}:
\begin{equation*}
	\begin{aligned}
		\log \Big(	\mathbb{E} [\ee^{\lambda \omega_{X/Y,i} }] \Big)
		& = \lambda \cdot \frac{1}{\lambda} \sum_i \left( \frac{a_{X,i}}{a_{Y,i}}  \right)^\lambda a_{X,i} \\
		& = \lambda \cdot \frac{1}{\lambda} \sum_i \left( \frac{a_{X,i}}{a_{Y,i}}  \right)^{\lambda+1} a_{Y,i} \\
		& = \lambda \cdot D_{\lambda+1}(f_{X} || f_{Y}),		
	\end{aligned}
\end{equation*}
where $D_{\lambda}$ denotes the R\'enyi divergence of order $\lambda$.
From the monotonicity of R\'enyi divergence (see Proposition 9, \cite{mironov2017}) it follows that 
\begin{equation*}
	\begin{aligned}
\alpha^+(\lambda) & = \lambda \cdot \sum\nolimits_i D_{\lambda+1}(f_{X,i} || f_{Y,i}) \\
 & \leq \lambda \cdot \sum\nolimits_i D_{\infty}(f_{X,i} || f_{Y,i}),
	\end{aligned}
\end{equation*}
where 
$$
D_{\infty}(f_X || f_Y) = \sup_{a_{Y,i} \neq 0} \log \frac{a_{X,i}}{a_{Y,i}}.
$$
With a similar calculation, we find that 
\begin{equation*}
	\begin{aligned}
 \alpha^-(\lambda) \leq (\lambda-1) \cdot \sum\nolimits_i D_{\infty}(f_{Y,i} || f_{X,i}).
	\end{aligned}
\end{equation*}
Thus,
\begin{equation*}
	\begin{aligned}
 \max \{  {\alpha^-(\lambda), \alpha^+(\lambda)} \} \leq 
 k \lambda \cdot \max \{ \tfrac{1}{k} \sum_i D_{\infty}(f_{X,i} || f_{Y,i}),  \tfrac{1}{k} \sum_i D_{\infty}(f_{Y,i} || f_{X,i}) \}. 
	\end{aligned}
\end{equation*}
Now we can further bound \eqref{Aeq:ub1} from above as 
$$
 \ee^{\max \{  {\alpha^-(\lambda), \alpha^+(\lambda)} \}+ 1} \ee^{-L \lambda} \leq  \ee^{ k \lambda \cdot C_k + 1 } \ee^{-L \lambda},
$$
where 
$$
C_k = \max \{ \tfrac{1}{k} \sum_i D_{\infty}(f_{X,i} || f_{Y,i}),  \tfrac{1}{k} \sum_i D_{\infty}(f_{Y,i} || f_{X,i}) \}.
$$
Requiring this upper bound to be smaller than a prescribed $\eta > 0$, and setting $\lambda = 1$, we arrive at the condition
\begin{equation} \label{Aeq:cond_lower_L}
	L \geq k \cdot C_k + 1 + \log \frac{1}{\eta}.
\end{equation}

Next, we bound the computational complexity using a bound for the discretisation error.
From Thm. 8 of the main text it follows that the discretisation error is bounded as
$$
\delta^\mathrm{R}(\veps) - \delta(\veps) \leq k \Delta x = \frac{2 L k}{n}.
$$
Requiring this discretisation error to be less than $\eta$,
choosing $L$ according to \eqref{Aeq:cond_lower_L} and assuming $k \geq \log \frac{1}{\eta}$, we see that choosing
$$
n = \mathcal{O}\left(\frac{k^2 C_k }{\eta} \right)
$$
is sufficient for the sum of the error sources to be less than $2 \eta$. As we need to compute FFT for $m$ different PLDs, and since FFT has complexity $n \log n$, we see that 
with 
$$
\mathcal{O}\left(\frac{2 m k^2 C_k }{\eta} \log  \frac{ k^2 C_k }{\eta} \right)
$$
operations it is possible to have an approximation of $\delta(\veps)$ with error less than $\eta$,
and that additional factor in the leading constant is given by the leading constant in the complexity of FFT.

\end{proof}
\end{lem}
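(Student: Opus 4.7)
The plan is to combine the truncation/periodisation error bound of Theorem~\ref{thm:alg_error_bound1} with the discretisation error bound of Theorem~\ref{thm:discretisation_error}, choose the parameters $L$, $n$ and $\lambda$ so that each contribution to the total error is of order~$\eta$, and then read off the cost from the complexity of the $m$ required FFTs of length~$n$.

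First I would control the periodisation error. Starting from Theorem~\ref{thm:alg_error_bound1}, for $L\ge 1$ and $\lambda\ge 1$ I would crudely bound
$$
 \big(\ee^{\alpha^+(\lambda)}+\ee^{\alpha^-(\lambda)}\big)\,\frac{\ee^{-L\lambda}}{1-\ee^{-2L\lambda}}
 \;\le\; \ee^{\max\{\alpha^+(\lambda),\alpha^-(\lambda)\}+1}\,\ee^{-L\lambda}.
$$
The next step is to translate $\alpha^\pm(\lambda)$ into divergences: since $\log \mathbb{E}[\ee^{\lambda\omega_{X/Y,i}}]=\lambda\cdot D_{\lambda+1}(f_{X,i}\|f_{Y,i})$, monotonicity of the R\'enyi divergence gives $\alpha^+(\lambda)\le \lambda\sum_i D_\infty(f_{X,i}\|f_{Y,i})$, and symmetrically for $\alpha^-$. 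Hence $\max\{\alpha^+(\lambda),\alpha^-(\lambda)\}\le k\lambda\, C_k$. Setting $\lambda=1$ and requiring the resulting upper bound $\ee^{kC_k+1}\ee^{-L}\le \eta$ yields the condition $L\ge kC_k+1+\log(1/\eta)$, so I can take $L=\mathcal{O}(kC_k+\log(1/\eta))$.

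Second, I would control the discretisation error: by Theorem~\ref{thm:discretisation_error} and $\Delta x=2L/n$,
$$
  \delta^\mathrm{R}(\veps)-\delta(\veps)\;\le\; k\Delta x \;=\; \frac{2Lk}{n}.
$$
Requiring this to be at most $\eta$ and substituting the value of $L$ just obtained, and assuming the mild bound $k\ge \log(1/\eta)$ so that the $\log(1/\eta)$ term is absorbed, yields $n=\mathcal{O}(k^2 C_k/\eta)$.

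Finally, Algorithm~\ref{alg:delta} requires one forward FFT of length $n$ for each of the $m$ distinct mechanisms, one inverse FFT, and pointwise exponentiations and products, all costing $\mathcal{O}(mn\log n)$. Substituting $n=\mathcal{O}(k^2 C_k/\eta)$ gives the stated complexity
$$
\mathcal{O}\!\left(\frac{2m\,k^2 C_k}{\eta}\log\frac{k^2 C_k}{\eta}\right),
$$
with the factor $2m$ coming from the FFTs per distinct PLD and the leading constant of FFT. The main obstacle is not in the FFT accounting, which is standard, but in showing that a single scalar choice such as $\lambda=1$ simultaneously keeps the periodisation bound small while the $D_\infty$-based estimate of $\alpha^\pm(\lambda)$ remains tight enough that the required $L$ grows only linearly in $kC_k$; any looser handling would inflate $n$ and degrade the logarithmic factor relative to~\citet{murtagh2018complexity}.
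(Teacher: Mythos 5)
Your proposal is correct and follows essentially the same route as the paper's proof: the same crude bound on the periodisation error of Theorem~\ref{thm:alg_error_bound1}, the same Rényi-divergence monotonicity argument giving $\max\{\alpha^+(\lambda),\alpha^-(\lambda)\}\leq k\lambda C_k$, the choice $\lambda=1$ yielding $L \geq kC_k+1+\log(1/\eta)$, the discretisation bound $k\Delta x = 2Lk/n$ with the assumption $k\geq\log(1/\eta)$ giving $n=\mathcal{O}(k^2C_k/\eta)$, and the count of $m$ FFTs of length $n$. No gaps to report.
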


\begin{remark}
We see from the proof, that for the condition that the periodisation error is less than $\eta$, it is sufficient to choose
$$
L \geq \frac{\log \eta^{-1}  + \max \{  {\alpha^-(\lambda), \alpha^+(\lambda)} \} + 1 }{\lambda}.
$$
As this is true for all $\lambda \geq 1$ and since $\alpha^-(\lambda)$ and $\alpha^+(\lambda)$ can be evaluated numerically, a sufficient value of $L$
can be found via an optimisation problem. Notice also that since $\alpha^-(\lambda)$ and $\alpha^+(\lambda)$ correspond to cumulant generating functions (CGFs)~\citep{Abadi2016},
and since the minimisation problem 
$$
\min_\lambda \frac{\log \delta^{-1}  +  \alpha(\lambda) }{\lambda}
$$
is exactly the conversion formula for turning CGF-values to $(\veps(\delta),\delta)$-DP values
, we see that approximately (assuming $\lambda^{-1}$ is small)
$L$ has to be chosen as
$$
L \geq \veps(\eta),
$$
where $\veps(\eta)$ gives $(\veps,\delta)$-DP of the composition $(\mathcal{M}_1, \ldots, \mathcal{M}_k)$ at $\delta=\eta$. 

\end{remark}

\subsection{Fast Evaluation Using the Plancherel Theorem}

We next prove Lemma 11 of the main text. Recall the Fourier accountant algorithm of the main text.
When using this algorithm to approximate $\delta(\veps)$, we need to evaluate the expression
\begin{equation} \label{Aeq:b_k_p}
\boldsymbol{b}^k = D \, \mathcal{F}^{-1} \big(\mathcal{F}( D \boldsymbol{a} )^{\odot k}   \big)
\end{equation}
and the sum
\begin{equation} \label{Aeq:summ}
\widetilde{\delta}(\veps) = \sum\nolimits_{ - L + \ell \Delta x > \veps}  \big(1 - \ee^{\veps - ( - L + \ell \Delta x)} \big) \, b^k_\ell.
\end{equation}
The following lemma shows that when evaluating $\widetilde{\delta}(\veps)$ for different numbers of compositions $k$,
the updates of $\widetilde{\delta}(\veps)$ can be performed in $\mathcal{O}(n)$ time.
\begin{lem} \label{Alem:plancherel}
Denote $\boldsymbol{w}_\veps \in \mathbb{R}^n$ such that
\begin{equation*}
	(\boldsymbol{w}_\veps)_\ell = \max\{  1 - \ee^{\veps - ( - L + \ell \Delta x)} , 0\}
\end{equation*}
and let $\boldsymbol{b}^k$ be of the form \eqref{Aeq:b_k_p}.
Then, we have that
\begin{equation} \label{AAeq:planch0}
\widetilde{\delta}(\veps) = \frac{1}{n} \langle \mathcal{F}( D \boldsymbol{w}_\veps ) , \mathcal{F}( D \boldsymbol{a} )^{\odot k}   \rangle.
\end{equation}
\begin{proof}
We see that the sum \eqref{Aeq:summ} is given by the following inner product:
$$
\widetilde{\delta}(\veps) = \langle \boldsymbol{w}_\veps, \boldsymbol{b}^k\rangle.
$$
The Plancherel Theorem states that the discrete Fourier transform preserves inner products: for $x,y \in \mathbb{R}^n$,
\begin{equation} \label{Aeq:planch}
	\langle x, y \rangle = \frac{1}{n} \langle \mathcal{F} x, \mathcal{F} y \rangle.
\end{equation}
Using \eqref{Aeq:planch}, we see that
\begin{equation*}
	\begin{aligned}
	\widetilde{\delta}(\veps) &= \langle \boldsymbol{w}_\veps, \boldsymbol{b}^k\rangle \\
	  &= \langle \boldsymbol{w}_\veps, D \mathcal{F}^{-1}\big(  \mathcal{F}( D \boldsymbol{a} )^{\odot k} \big) \rangle \\
	  &= \langle D \boldsymbol{w}_\veps, \mathcal{F}^{-1}\big(  \mathcal{F}( D \boldsymbol{a} )^{\odot k} \big) \rangle \\
	  &= \frac{1}{n} \langle \mathcal{F}( D \boldsymbol{w} ) , \mathcal{F}( D \boldsymbol{a} )^{\odot k}   \rangle.
	\end{aligned}
\end{equation*}
\end{proof}
\end{lem}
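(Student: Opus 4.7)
The plan is to recognize the sum \eqref{Aeq:summ} as a plain Euclidean inner product on $\mathbb{R}^n$ and then shuttle operators through it using (i) the self-adjointness of the permutation $D$ and (ii) Plancherel's theorem for the DFT. First I would observe that the indicator constraint $-L+\ell\Delta x>\veps$ is built into $\boldsymbol{w}_\veps$ by the $\max\{\cdot,0\}$, since for indices where $-L+\ell\Delta x\leq\veps$ the factor $1-\ee^{\veps-(-L+\ell\Delta x)}$ is non-positive and is clipped to zero; together with $b_\ell^k\geq 0$ (by construction of the algorithm) this shows
\begin{equation*}
\widetilde\delta(\veps)=\sum_{\ell=0}^{n-1}(\boldsymbol{w}_\veps)_\ell\,b_\ell^k=\langle\boldsymbol{w}_\veps,\boldsymbol{b}^k\rangle.
\end{equation*}

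Next I would substitute the explicit formula $\boldsymbol{b}^k=D\,\mathcal{F}^{-1}\!\left(\mathcal{F}(D\boldsymbol{a})^{\odot k}\right)$ coming from Lemma~\ref{Alem:fft} applied $k$ times, and push the leading $D$ onto the first factor of the inner product. Since $D$ is the permutation $\begin{bsmallmatrix}0&I_{n/2}\\ I_{n/2}&0\end{bsmallmatrix}$, it is symmetric and an involution ($D^{\sf T}=D$, $D^2=I$), so $\langle\boldsymbol{w}_\veps,D\,\boldsymbol{v}\rangle=\langle D\boldsymbol{w}_\veps,\boldsymbol{v}\rangle$ for any $\boldsymbol{v}\in\mathbb{R}^n$. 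This gives
\begin{equation*}
\widetilde\delta(\veps)=\langle D\boldsymbol{w}_\veps,\;\mathcal{F}^{-1}\!\left(\mathcal{F}(D\boldsymbol{a})^{\odot k}\right)\rangle.
\end{equation*}

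Finally I would apply Plancherel's theorem in the form appropriate to the DFT convention used in \eqref{def:mathcal{F}}: because $\mathcal{F}^{-1}$ carries the $1/n$ normalisation, one has $\langle x,y\rangle=\tfrac{1}{n}\langle\mathcal{F}x,\mathcal{F}y\rangle$ for all $x,y\in\mathbb{R}^n$, or equivalently $\langle x,\mathcal{F}^{-1}y\rangle=\tfrac{1}{n}\langle\mathcal{F}x,y\rangle$. Taking $x=D\boldsymbol{w}_\veps$ and $y=\mathcal{F}(D\boldsymbol{a})^{\odot k}$ yields exactly the identity \eqref{AAeq:planch0}.

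The only subtle point, and the one I would check carefully, is the $1/n$ normalisation together with the fact that both $\boldsymbol{w}_\veps$ and $\boldsymbol{a}$ are real while $\mathcal{F}(D\boldsymbol{w}_\veps)$ and $\mathcal{F}(D\boldsymbol{a})^{\odot k}$ are complex-valued; the standard Plancherel identity in this convention still reads $\langle x,y\rangle_{\mathbb{R}^n}=\tfrac{1}{n}\langle\mathcal{F}x,\mathcal{F}y\rangle_{\mathbb{C}^n}$ with the Hermitian inner product on the right, and this reduces to the stated formula because the right-hand side is real whenever $x,y$ are. Apart from this bookkeeping, the argument is a direct two-line manipulation, so I do not expect any genuine obstacle.
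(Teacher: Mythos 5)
Your proposal is correct and follows essentially the same route as the paper's proof: rewrite the sum as $\langle\boldsymbol{w}_\veps,\boldsymbol{b}^k\rangle$, use the symmetry of $D$ to move it onto $\boldsymbol{w}_\veps$, and then apply Plancherel in the form $\langle x,\mathcal{F}^{-1}y\rangle=\tfrac{1}{n}\langle\mathcal{F}x,y\rangle$. Your extra remarks on the clipping in $\boldsymbol{w}_\veps$ and the real/Hermitian inner-product bookkeeping are careful additions but do not change the argument.
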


\section{Details for the Experiments of Section 6} 

In order to use the Fourier accountant for compositions including continuous mechanisms, we first need to discretise
the PLDs of the continuous mechanism appropriately. This means that we replace each continuous PLD $\omega$
by a certain discrete-valued distribution $\omega_{\mathrm{max}}$ that leads to an overall $\delta(\veps)$-upper bound.
This procedure is analogous to what is considered in the experiments of~\cite{koskela2021tight} for the homogeneous composition
of subsampled Gaussian mechanisms~\cite[see the supplementary material of][]{koskela2021tight}.  
Those results can be used to derive the discrete PLDs for the experiments of Sec. 6.2.,
i.e., for the heterogeneous composition of subsampled Gaussian mechanisms.

\subsection{Experiments of Section 6.1 }

For the PLD $\omega_\mathrm{G}$ of the Gaussian mechanism we know that~\citep{sommer2019privacy}
$$
\omega_\mathrm{G} ~ \sim \mathcal{N}\left( \frac{1}{2 \sigma^2}, \frac{1}{\sigma^2} \right).
$$
Let $L>0$, $n \in \mathbb{Z}^+$, $\Delta x = 2L/n$ and let the grid $X_n$ be defined as in Section 4.2 of the main text.
Define
\begin{equation*} 
	\begin{aligned}
				\omega_{\mathrm{max}}(s) = \sum\limits_{i=0}^{n-1} c^+_i \cdot \deltas{s_i},
	\end{aligned}
\end{equation*}
where $s_i = i \Delta x$ and 
\begin{equation} \label{eq:c_plusminus}
	\begin{aligned}
		c^+_i = \Delta x \cdot \max\limits_{s \in [s_{i-1}, s_i]} \omega_\mathrm{G}(s),
	\end{aligned}
\end{equation}
and define
\begin{equation} \label{eq:c_plusminus_inf}
	\begin{aligned}
			\omega^\infty_{\mathrm{max}}(s) = \sum\limits_{i \in \mathbb{Z}} c^+_i \cdot \deltas{s_i}.
	\end{aligned}
\end{equation}

To obtain rigorous $\delta(\veps)$-bounds for the compositions, we carry out the error analysis for the distribution
$\omega^\infty_{\mathrm{max}}$ and use Theorem~\ref{Athm:trunc} above.
To this end, we need bounds for
the moment generating functions of $-\omega^\infty_{\mathrm{max}}$ and $\omega^\infty_{\mathrm{max}}$. 

To show that $\omega^\infty_{\mathrm{max}}$ indeed leads to an upper bound for $\delta(\veps)$,
we refer to~\citep[supplementary material of][]{koskela2021tight}, where this is shown for the compositions of the subsampled Gaussian mechanism. The proof here goes analogously,
and we have that for all $\veps>0$,
$$
\delta(\veps) \leq \delta_{\mathrm{max}}^\infty(\veps),
$$
where $\delta_{\mathrm{max}}^\infty(\veps)$ is the tight bound for the composition involving $\omega^\infty_{\mathrm{max}}$.

To evaluate $\alpha^+(\lambda)$ and $\alpha^-(\lambda)$ for the upper bound of Theorem~\ref{Athm:trunc}, 
we need the moment generating functions of 
$-\omega^\infty_{\mathrm{max}}$ and $\omega^\infty_{\mathrm{max}}$. We have the following bound for $\omega^\infty_{\mathrm{max}}$.
 We note that $\mathbb{E} [\ee^{\lambda \omega_{\mathrm{max}} }]$ can be evaluated numerically.

\begin{lem} \label{lem:mgfs}
Let $0 < \lambda \leq L$ and assume $\sigma \geq 1$ and $\Delta x \leq c \cdot L$, $0<c<1$.
The moment generating function of 
$\omega^\infty_{\mathrm{max}}$ can be bounded as
$$
\mathbb{E} [\ee^{\lambda \omega^\infty_{\mathrm{max}} }] \leq 
\mathbb{E} [\ee^{\lambda \omega_{\mathrm{max}} }] + \mathrm{err}(\lambda,L,\sigma),
$$
where
\begin{equation} \label{eq:err_term}
	\mathrm{err}(\lambda,L,\sigma) = \exp\bigg(\frac{3\lambda}{2\sigma^2}\bigg) \left(  \int_{-\infty}^{-L} \widetilde{\omega}(s) \, \dd s + \int_{L- \Delta x}^\infty\widetilde{\omega}(s) \, \dd s     \right),
	\quad \widetilde{\omega} ~ \sim \mathcal{N}\left( \frac{1+2 \lambda}{2 \sigma^2}, \frac{1}{\sigma^2} \right).
\end{equation}

\begin{proof}

The moment generating function of $\omega^\infty_{\mathrm{max}}$ is given by
\begin{equation} \label{eq:pld_lmf}
	\begin{aligned}
		\mathbb{E} [\ee^{\lambda \omega^\infty_{\mathrm{max}} }] &= \int_{-L}^L \ee^{\lambda s} \omega^\infty_{\mathrm{max}}(s) \, \dd s 
		+ \int_{-\infty}^{-L} \ee^{\lambda s} \omega^\infty_{\mathrm{max}}(s) \, \dd s + \int_{L}^\infty \ee^{\lambda s} \omega^\infty_{\mathrm{max}}(s) \, \dd s  \\
		&\leq \mathbb{E} [\ee^{\lambda \omega_{\mathrm{max}} }]
		+ \int_{-\infty}^{-L} \ee^{\lambda s} \omega_\mathrm{G}(s) \, \dd s + \int_{L- \Delta x}^\infty \ee^{\lambda s} \omega_\mathrm{G}(s) \, \dd s  \\
	\end{aligned}
\end{equation}
We arrive at the claim by observing that for $\omega_\mathrm{G} ~ \sim \mathcal{N}\left( \frac{1}{2 \sigma^2}, \frac{1}{\sigma^2} \right)$,
$$
\int_{-\infty}^{-L} \ee^{\lambda s} \omega_\mathrm{G}(s) \, \dd s = \exp\bigg(\frac{3\lambda}{2\sigma^2}\bigg) \int_{-\infty}^{-L} \widetilde{\omega}(s) \, \dd s,
$$
where $\widetilde{\omega} ~ \sim \mathcal{N}\left( \frac{1+2 \lambda}{2 \sigma^2}, \frac{1}{\sigma^2} \right)$ and similarly for the second term in~\eqref{eq:err_term}.
\end{proof}
\end{lem}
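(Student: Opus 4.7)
I would prove this by splitting the moment generating function of $\omega^\infty_{\mathrm{max}}$ into a central ``in-grid'' part and two tail parts, identifying the central part with the MGF of $\omega_{\mathrm{max}}$, dominating each tail part by a Riemann-type integral against $\omega_\mathrm{G}$, and then using a completing-the-square identity for the Gaussian to extract the explicit prefactor appearing in \eqref{eq:err_term}.

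Concretely, the first step is to write
\begin{equation*}
\mathbb{E}[\ee^{\lambda \omega^\infty_{\mathrm{max}}}]
= \int_{-L}^{L} \ee^{\lambda s}\,\omega^\infty_{\mathrm{max}}(s)\,\dd s
+ \int_{-\infty}^{-L} \ee^{\lambda s}\,\omega^\infty_{\mathrm{max}}(s)\,\dd s
+ \int_{L}^{\infty} \ee^{\lambda s}\,\omega^\infty_{\mathrm{max}}(s)\,\dd s.
\end{equation*}
The central integral picks up exactly the delta masses indexed by $0\le i\le n-1$ and therefore equals $\mathbb{E}[\ee^{\lambda \omega_{\mathrm{max}}}]$. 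Each tail integral is a weighted sum $\sum c^+_i \ee^{\lambda s_i}$ in which, by construction, $c^+_i/\Delta x$ is the cell-maximum of $\omega_\mathrm{G}$ on the adjacent interval $[s_{i-1},s_i]$. Under the hypotheses $\sigma\ge 1$ and $\Delta x \le cL$ with $0<c<1$, the integrand $\ee^{\lambda s}\omega_\mathrm{G}(s)$ is monotone on every cell that contributes to a tail, so a Riemann-sum comparison dominates the lower tail sum by $\int_{-\infty}^{-L}\ee^{\lambda s}\omega_\mathrm{G}(s)\,\dd s$ and the upper tail sum by $\int_{L-\Delta x}^{\infty}\ee^{\lambda s}\omega_\mathrm{G}(s)\,\dd s$. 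The asymmetric shift by $\Delta x$ in the upper limit is the book-keeping needed because above the Gaussian peak the cell maximum of $\omega_\mathrm{G}$ is attained at the left endpoint $s_{i-1}$ rather than $s_i$, so the cell-maximum at grid point $s_i$ must be ``absorbed'' into the integral over the neighbouring cell to the left.

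The last step is a purely algebraic completing-the-square manipulation applied to $\omega_\mathrm{G}\sim\mathcal{N}(1/(2\sigma^2),1/\sigma^2)$:
\begin{equation*}
\ee^{\lambda s}\omega_\mathrm{G}(s) = \exp\!\Big(\tfrac{3\lambda}{2\sigma^2}\Big)\,\widetilde{\omega}(s), \qquad
\widetilde{\omega}\sim\mathcal{N}\!\Big(\tfrac{1+2\lambda}{2\sigma^2},\tfrac{1}{\sigma^2}\Big),
\end{equation*}
which rewrites the two tail integrals as the explicit prefactor times tail probabilities of $\widetilde{\omega}$; summing these with the central equality yields exactly \eqref{eq:err_term}. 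I expect the main obstacle to be the Riemann-sum comparison in the tails: the asymmetric upper limit $L-\Delta x$ versus $-L$ is delicate, and it is precisely the hypotheses $\sigma\ge 1$ and $\Delta x \le cL$ that guarantee $L-\Delta x$ stays safely above the Gaussian peak $1/(2\sigma^2)$, making $\ee^{\lambda s}\omega_\mathrm{G}(s)$ monotone on every contributing cell so the cell-maximum-to-integral dominations have a clean sign.
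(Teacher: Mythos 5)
Your route is the same as the paper's: split $\mathbb{E}[\ee^{\lambda \omega^\infty_{\mathrm{max}}}]$ into the part supported on $[-L,L]$ (identified with $\mathbb{E}[\ee^{\lambda \omega_{\mathrm{max}}}]$) plus two tail sums, dominate the tails by $\int_{-\infty}^{-L}\ee^{\lambda s}\omega_\mathrm{G}(s)\,\dd s$ and $\int_{L-\Delta x}^{\infty}\ee^{\lambda s}\omega_\mathrm{G}(s)\,\dd s$, and finish by completing the square. The paper asserts the tail domination in a single line; you attempt to justify it, and your justification of the upper tail does not work as described. On the lower tail your mechanism is fine (modulo the endpoint convention): there $c_i^+=\Delta x\,\omega_\mathrm{G}(s_i)$, the function $h(s)=\ee^{\lambda s}\omega_\mathrm{G}(s)$ is increasing below its peak, so $\Delta x\, h(s_i)\le \int_{s_i}^{s_i+\Delta x} h(s)\,\dd s$ and the terms sum into the left tail integral. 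On the upper tail, however, the mass at $s_i$ contributes $c_i^+\ee^{\lambda s_i}=\Delta x\,\omega_\mathrm{G}(s_{i-1})\,\ee^{\lambda s_i}$, i.e.\ the maximum of $\omega_\mathrm{G}$ over the cell (left endpoint) multiplied by the maximum of $\ee^{\lambda s}$ (right endpoint); this mixed product is an \emph{upper} bound for $\int_{s_{i-1}}^{s_i}\ee^{\lambda s}\omega_\mathrm{G}(s)\,\dd s$, not a lower one, so the ``clean sign'' you invoke points the wrong way cell by cell, and the one-cell extension of the integral down to $L-\Delta x$ does not absorb the discrepancy. What your monotonicity argument actually yields is something like $\Delta x\,\omega_\mathrm{G}(s_{i-1})\ee^{\lambda s_i}\le \ee^{\lambda \Delta x}\int_{s_{i-2}}^{s_{i-1}}\ee^{\lambda s}\omega_\mathrm{G}(s)\,\dd s$ when $h$ is decreasing there, i.e.\ an extra factor $\ee^{\lambda\Delta x}$ and a further cell shift. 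Moreover, the monotonicity itself is not guaranteed by the stated hypotheses: the peak of $h$ sits at $(1+2\lambda)/(2\sigma^2)$, which for $\lambda$ close to $L$ and $\sigma$ close to $1$ can exceed $L$, so ``safely above the Gaussian peak $1/(2\sigma^2)$'' is the wrong quantity to control.

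A second point: you present the completing-the-square step as an identity with prefactor $\exp\big(\tfrac{3\lambda}{2\sigma^2}\big)$, but carrying out the computation for $\omega_\mathrm{G}\sim\mathcal{N}\big(\tfrac{1}{2\sigma^2},\tfrac{1}{\sigma^2}\big)$ gives
\begin{equation*}
\ee^{\lambda s}\,\omega_\mathrm{G}(s)=\exp\Big(\tfrac{\lambda(1+\lambda)}{2\sigma^2}\Big)\,\widetilde{\omega}(s),\qquad \widetilde{\omega}\sim\mathcal{N}\Big(\tfrac{1+2\lambda}{2\sigma^2},\tfrac{1}{\sigma^2}\Big),
\end{equation*}
and $\lambda(1+\lambda)=3\lambda$ only at $\lambda=2$. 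The paper's proof makes the same terse assertions at both of these places, so you have not diverged from its approach; but since your proposal's added value is precisely the justification of these two steps, the upper-tail comparison needs to be repaired (e.g.\ accept the $\ee^{\lambda\Delta x}$ factor and the extra cell, or bound $c_i^+$ differently) and the constant in the Gaussian shift needs to be derived rather than quoted.
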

Using a reasoning analogous to the proof of Lemma~\ref{lem:mgfs}, we get the following. 
We note that $\mathbb{E} [\ee^{ - \lambda \omega_{\mathrm{min}} }]$ can be evaluated numerically.
\begin{cor}
The moment generating function of $-\omega^\infty_{\mathrm{max}}$ can be bounded as 
$$
\mathbb{E} [\ee^{ - \lambda \omega^\infty_{\mathrm{max}} }] \leq 
\mathbb{E} [\ee^{ - \lambda \omega_{\mathrm{max}} }] + \mathrm{err}(\lambda,L,\sigma),
$$
where
$\mathrm{err}(\lambda,L,\sigma)$ is defined as in \eqref{eq:err_term}.
\end{cor}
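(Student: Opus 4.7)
The strategy is to replay the three-step argument of Lemma~\ref{lem:mgfs} with the sign of $\lambda$ flipped throughout. First, I would split
\begin{equation*}
\mathbb{E}[\ee^{-\lambda\omega_{\mathrm{max}}^\infty}] = \int_{-L}^{L}\ee^{-\lambda s}\omega_{\mathrm{max}}^\infty(s)\dd s + \int_{-\infty}^{-L}\ee^{-\lambda s}\omega_{\mathrm{max}}^\infty(s)\dd s + \int_{L}^{\infty}\ee^{-\lambda s}\omega_{\mathrm{max}}^\infty(s)\dd s,
\end{equation*}
and identify the central integral with $\mathbb{E}[\ee^{-\lambda\omega_{\mathrm{max}}}]$, since $\omega_{\mathrm{max}}$ is by construction the restriction of $\omega_{\mathrm{max}}^\infty$ to the grid points in $[-L, L-\Delta x]$.

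Second, I would dominate the two discrete tail sums by continuous Gaussian integrals. Since $c_i^+ = \Delta x \cdot \max_{s \in [s_{i-1}, s_i]}\omega_\mathrm{G}(s)$ and $\ee^{-\lambda s}$ is monotone on each subinterval, a Riemann-upper-sum argument (the same as in Lemma~\ref{lem:mgfs} but with the opposite monotonicity) gives
\begin{equation*}
\int_{-\infty}^{-L}\ee^{-\lambda s}\omega_{\mathrm{max}}^\infty(s)\dd s + \int_{L}^{\infty}\ee^{-\lambda s}\omega_{\mathrm{max}}^\infty(s)\dd s \leq \int_{-\infty}^{-L}\ee^{-\lambda s}\omega_\mathrm{G}(s)\dd s + \int_{L-\Delta x}^{\infty}\ee^{-\lambda s}\omega_\mathrm{G}(s)\dd s.
\end{equation*}

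Finally, I would bound these continuous Gaussian tails by $\mathrm{err}(\lambda,L,\sigma)$. Completing the square in the exponent rewrites $\ee^{-\lambda s}\omega_\mathrm{G}(s) = \exp(\lambda(\lambda-1)/(2\sigma^2))\,\widetilde{\omega}'(s)$, where $\widetilde{\omega}'\sim\mathcal{N}((1-2\lambda)/(2\sigma^2), 1/\sigma^2)$. The stated $\mathrm{err}$ is expressed via the shifted-right Gaussian $\widetilde{\omega}\sim\mathcal{N}((1+2\lambda)/(2\sigma^2), 1/\sigma^2)$ with prefactor $\exp(3\lambda/(2\sigma^2))$, so the remaining task is to compare the $\widetilde{\omega}'$ tails with those of $\widetilde{\omega}$. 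This is the main obstacle and is precisely where the $+\lambda$/$-\lambda$ symmetry breaks: on $[L-\Delta x, \infty)$ the $\widetilde{\omega}'$ tail is already smaller than that of $\widetilde{\omega}$ (the Gaussian sits to the left), but on $(-\infty, -L]$ it is larger, by a translation of $2\lambda/\sigma^2$. Within the intended parameter range ($\sigma \geq 1$, $0<\lambda \leq L$, and $L$ not too small relative to $\lambda/\sigma^2$), the slack in $\exp(\lambda(\lambda-1)/(2\sigma^2)) \leq \exp(3\lambda/(2\sigma^2))$, valid for $\lambda \leq 4$, absorbs the translation deficit on the left tail, so the common $\mathrm{err}$ formula suffices. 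Assembling the three bounds yields the stated inequality.
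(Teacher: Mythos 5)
Your set-up follows exactly the "reasoning analogous to Lemma~\ref{lem:mgfs}" that the paper invokes: the three-way splitting, the domination of the discrete tails by Gaussian integrals, and the completing-the-square identity $\ee^{-\lambda s}\omega_\mathrm{G}(s)=\ee^{\lambda(\lambda-1)/(2\sigma^2)}\,\widetilde{\omega}'(s)$ with $\widetilde{\omega}'\sim\mathcal{N}\big(\tfrac{1-2\lambda}{2\sigma^2},\tfrac{1}{\sigma^2}\big)$ are all correct, and you rightly isolate the real issue, namely that $\widetilde{\omega}'$ is not the $\widetilde{\omega}$ appearing in \eqref{eq:err_term}. The gap is in your resolution of that issue. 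You pair the left tail of $\widetilde{\omega}'$ with the left tail of $\widetilde{\omega}$ and claim the prefactor slack $\ee^{3\lambda/(2\sigma^2)}/\ee^{\lambda(\lambda-1)/(2\sigma^2)}=\ee^{\lambda(4-\lambda)/(2\sigma^2)}\le\ee^{2/\sigma^2}$ absorbs the mismatch. But the mismatch is a ratio of Gaussian tails whose means differ by $2\lambda/\sigma^2$, evaluated at the far-away threshold $-L$; it is of order $\ee^{2\lambda L+\lambda/\sigma^2}$ up to polynomial factors. For instance, with $\sigma=1$, $\lambda=2$, $L=5$ the tail ratio is $\Phi(-3.5)/\Phi(-7.5)\approx 7\cdot 10^{9}$, against an available slack of $\ee^{2}\approx 7.4$. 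So the like-for-like pairing fails by many orders of magnitude even well inside the intended parameter range; the sentence "absorbs the translation deficit on the left tail" is not a proof step that can be made to work.

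What actually saves the corollary (and is presumably the authors' intended analogue) is a cross-pairing: the problematic left tail must be charged to the \emph{right}-tail term of \eqref{eq:err_term}. Substituting $u=-s$ (equivalently, using the Gaussian PLD symmetry $\ee^{-s}\omega_\mathrm{G}(s)=\omega_\mathrm{G}(-s)$) gives
\begin{equation*}
\int_{-\infty}^{-L}\ee^{-\lambda s}\,\omega_\mathrm{G}(s)\,\dd s=\int_{L}^{\infty}\ee^{(\lambda-1)u}\,\omega_\mathrm{G}(u)\,\dd u=\ee^{\lambda(\lambda-1)/(2\sigma^2)}\;\PP\Big(\mathcal{N}\big(\tfrac{2\lambda-1}{2\sigma^2},\tfrac{1}{\sigma^2}\big)\ge L\Big),
\end{equation*}
and since this Gaussian is centred to the left of $\widetilde{\omega}$ and $L\ge L-\Delta x$, the probability is dominated by $\PP(\widetilde{\omega}\ge L-\Delta x)$; what remains is only the prefactor comparison $\lambda(\lambda-1)\le 3\lambda$, i.e.\ a smallness restriction on $\lambda$ of the same kind that is already implicit in the paper's own factor $\ee^{3\lambda/(2\sigma^2)}$ (completing the square in Lemma~\ref{lem:mgfs} really produces $\ee^{\lambda(\lambda+1)/(2\sigma^2)}$). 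Such a restriction is genuinely needed: for $\sigma=1$, $\lambda=L=5$ the left-tail contribution is about $\ee^{10}\Phi(-0.5)$, which already exceeds $\mathrm{err}(\lambda,L,\sigma)\approx\ee^{7.5}\Phi(0.5)$, so no absorption argument can avoid it. A secondary point: with $\ee^{-\lambda s}$ the Riemann-sum domination of the discrete right tail is not "the same with opposite monotonicity", because $c_i^+$ uses the maximum of $\omega_\mathrm{G}$ over the interval to the \emph{left} of $s_i$ while $\ee^{-\lambda s}$ is now decreasing; one must pair the mass at $s_i$ with a shifted interval, at the cost of another $\Delta x$ in the integration limit — minor, but it should be stated rather than waved through.
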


\begin{remark}
In the experiments, the error term $\mathrm{err}(\lambda,L,\sigma)$ was found to be negligible.
\end{remark}

\newpage

\section{Tight $(\veps,\delta)$-Bound for an Adaptive Composition of Multivariate Subsampled Gaussian Mechanisms Using One-Dimensional Distributions}

We next give a rigorous proof for the fact that the multivariate subsampled Gaussian mechanism with Poisson subsampling
can be analysed by one-dimensional Gaussian mixtures.
The proof is motivated by an analogous result~\cite[Thm.\;4]{mironov2019} which is for RDP and we partly use
the notation used in the proof of that result. For simplicity, we
focus here on the case, where the underlying function is of the summative form
\begin{equation} \label{eq:f_sum}
	F(X,\theta) = \sum_{x \in X} g(x,\theta),
\end{equation}

where $g$ is Lipschitz-continuous w.r.t. $\theta$ and has a 2-norm bounded by a constant $c>0$.

We show the equivalence part by part so that first in Section~\ref{sec:poisson} we show the equivalence for a single iteration of the mechanism and then
in Section~\ref{sec:adaptive} we show the equivalence rigorously for the multivariate Gaussian mechanism. Then, by combining these 
results arrive at the conclusion.

\subsection{Motivational Example: DP-SGD}

The motivational example to consider functions of the form \eqref{eq:f_sum} is DP-SGD, where the terms $g(\theta,x)$ are the sample-wise clipped gradients. 
When applying DP-SGD to, for example, neural networks, the gradients can be assumed to be Lipschitz continuous
in bounded sets (a condition sufficient for our analysis). As the following result verifies, then are also the clipped gradients
Lipschitz continuous.

\begin{lem}
Suppose the function $h(\theta)$, $h \, : \, S \rightarrow \mathbb{R}^d$ is $L$-Lipschitz continuous in $S \subset \mathbb{R}^d$. Then, also the function 
$$
g(\theta) = \mathrm{clip}_c\big(h\big)(\theta) =  \frac{h(\theta)}{\max\{1,\frac{\norm{h(\theta)}_2}{c} \}}
$$ 
is $L$-Lipschitz in $S$.
\begin{proof}
Let $x,y \in S$.
If $\norm{h(x)}_2 \leq c$ or $\norm{h(y)}_2 \leq c$, or if $\langle h(x),h(y) \rangle < 0$, the inequality
$$
\norm{g(x) - g(y)}_2 \leq \norm{h(x) - h(y)}_2.
$$
follows by simple geometry.
Assume $\norm{h(x)}_2 > c$, $\norm{h(y)}_2 > c$ and	$\langle h(x),h(y) \rangle \geq 0$. Then,
\begin{equation*}
	\begin{aligned}
			\norm{g(x) - g(y)}_2^2 &= \norm{g(x)}_2^2 + \norm{g(y)}_2^2 - 2 \langle g(x),g(y) \rangle \\
			& = 2 c^2 \cdot \big(  1 -  \langle \tfrac{h(x)}{\norm{h(x)}_2},\tfrac{h(y)}{\norm{h(y)}_2} \rangle  \big) \\
			& \leq 2 \norm{h(x)}_2 \norm{h(y)}_2 \cdot \big(  1 -  \langle \tfrac{h(x)}{\norm{h(x)}_2},\tfrac{h(y)}{\norm{h(y)}_2} \rangle  \big) \\
			& =  2 \norm{h(x)}_2 \norm{h(y)}_2 - 2 \langle h(x),h(y) \rangle \\
			& \leq \norm{h(x)}_2^2 + \norm{h(y)}_2^2 - 2 \langle h(x),h(y) \rangle \\
			&= \norm{h(x) - h(y)}_2^2.
	\end{aligned}
\end{equation*}
\end{proof}
\end{lem}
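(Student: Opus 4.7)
The plan is to reduce the claim to showing that the clipping map $T\colon \mathbb{R}^d \to \mathbb{R}^d$ given by $T(v) = v/\max\{1, \norm{v}_2/c\}$ is itself 1-Lipschitz. Once this reduction is established, the lemma follows immediately, since for any $x, y \in S$,
$$
\norm{g(x) - g(y)}_2 = \norm{T(h(x)) - T(h(y))}_2 \leq \norm{h(x) - h(y)}_2 \leq L\norm{x-y}_2.
$$

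The slickest route is to observe that $T$ is precisely the metric projection onto the closed convex ball $B_c = \{w \in \mathbb{R}^d : \norm{w}_2 \leq c\}$: points inside $B_c$ are fixed, and points $v$ outside are mapped to the unique nearest boundary point $c v/\norm{v}_2$ along the ray from the origin. Since metric projections onto closed convex subsets of a Hilbert space are non-expansive (a classical fact), $T$ is 1-Lipschitz and we are done.

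An elementary self-contained proof proceeds by case analysis on $u := h(x)$ and $v := h(y)$, as begun in the excerpt. Letting $\theta$ denote the angle between $u$ and $v$ when both are nonzero, the case $u, v \in B_c$ is trivial, and the excerpt handles the case when both lie outside $B_c$ with $\cos\theta \geq 0$. The remaining sub-case $\cos\theta < 0$ with both outside $B_c$ follows from $\norm{u - v}_2^2 = \norm{u}_2^2 + \norm{v}_2^2 - 2\norm{u}_2 \norm{v}_2 \cos\theta \geq 2c^2 - 2c^2 \cos\theta = 2c^2(1 - \cos\theta) = \norm{T(u) - T(v)}_2^2$, using $\norm{u}_2 \norm{v}_2 \geq c^2$ and $-\cos\theta > 0$.

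The main obstacle, and the step I expect to be the most delicate in the elementary route, is the mixed case, say $\norm{u}_2 \leq c < \norm{v}_2$. Setting $v' = c v/\norm{v}_2$, a direct expansion yields
$$
\norm{u - v}_2^2 - \norm{u - v'}_2^2 = (\norm{v}_2 - c)\bigl(\norm{v}_2 + c - 2\langle u, v\rangle/\norm{v}_2\bigr),
$$
which is non-negative because $\norm{v}_2 > c$ and Cauchy-Schwarz combined with $\norm{u}_2 \leq c \leq \norm{v}_2$ gives $2\langle u, v\rangle/\norm{v}_2 \leq 2\norm{u}_2 \leq 2c \leq \norm{v}_2 + c$. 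This is precisely the situation that the authors subsume under "simple geometry".
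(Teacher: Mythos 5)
Your proposal is correct, and it is both more complete and partly different in method from the paper's own argument. The paper only writes out the algebraic chain for the case $\norm{h(x)}_2, \norm{h(y)}_2 > c$ with $\langle h(x),h(y)\rangle \geq 0$ and dismisses everything else as ``simple geometry''; you instead (a) reduce the lemma cleanly to the $1$-Lipschitzness of the clipping map $T(v)=v/\max\{1,\norm{v}_2/c\}$ and observe that $T$ is the metric projection onto the closed ball $B_c$, so non-expansiveness follows from the classical fact that nearest-point projections onto closed convex sets in Hilbert space are $1$-Lipschitz, and (b) supply the elementary case analysis in full, including the mixed case $\norm{u}_2\le c<\norm{v}_2$ via the identity $\norm{u-v}_2^2-\norm{u-v'}_2^2=(\norm{v}_2-c)\bigl(\norm{v}_2+c-2\langle u,v\rangle/\norm{v}_2\bigr)$ with $v'=cv/\norm{v}_2$, whose non-negativity you justify correctly (and which, as you note, covers either sign of $\langle u,v\rangle$). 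The projection argument buys generality and brevity at the cost of invoking a known theorem; your elementary route buys self-containedness and actually fills the gap the paper leaves implicit, the mixed case being the only genuinely non-trivial one among the omitted cases. One small observation: the separate treatment of the both-outside, $\cos\theta<0$ sub-case is not really needed, since the paper's chain of inequalities (which you reproduce) only uses $\norm{u}_2\norm{v}_2\ge c^2$, $1-\cos\theta\ge 0$ and AM--GM, and hence is valid for both signs of $\langle u,v\rangle$; neither your split nor the paper's restriction is an error, just a redundancy.
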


\subsection{The Subsampled Gaussian Mechanism with Poisson Subsampling} \label{sec:poisson}

We first show the analogy for a single iteration of the mechanism,
in a case where the underlying function $g$ is differentiable w.r.t. $\theta$ and has a norm exactly 1 for all data samples $x$ and for all $\theta$.
We will use the following notation repeatedly in the proof.

\begin{defn} \label{def:indistinguishability}
	Let $\varepsilon > 0$ and $\delta \in [0,1]$.
	Let $P$ and $Q$ be two random variables taking values in a measurable space $\mathcal{R}$.
	We say that $P$ and $Q$ are
	$(\varepsilon,\delta)$-indistinguishable, denoted
	$P \simeq_{(\veps,\delta)} Q$,
	if for every measurable set $E \subset \mathcal{R}$ we have
	$$
		\mathrm{Pr}( P \in E ) \leq \ee^\varepsilon \mathrm{Pr} (Q \in E ) + \delta \quad
		\textrm{and} \quad
		\mathrm{Pr}( Q \in E ) \leq \ee^\varepsilon \mathrm{Pr} (P \in E ) + \delta.
	$$
\end{defn}

\begin{thm} \label{thm:subsampled_xyz}
If
$$
	\mathcal{N}( 0,\sigma^2)  \simeq_{(\veps,\delta)} (1-q) \cdot \mathcal{N} \big(	 0,\sigma^2 \big) + q \cdot \mathcal{N}\big( 1 ,\sigma^2  \big),
$$
where $(1-q) \cdot \mathcal{N} (0,\sigma^2 ) + q \cdot \mathcal{N}( 1 ,\sigma^2 )$ denotes a mixture of $\mathcal{N} (0,\sigma^2 )$
and $\mathcal{N} (1,\sigma^2 )$,
then also the multivariate Poisson subsampled Gaussian mechanism with subsampling ratio $q$ and variance $\sigma^2$ and $L_2$-sensitivity $1$ is $(\varepsilon,\delta)$-DP. \\

\begin{proof}

Similarly to the proof of~\cite[Thm.\;4]{mironov2019},
let $T$ denote a set-valued random variable defined by taking a random subset of $X \in \mathcal{X}^N$, 
where each element of $X$ is independently placed in $T$ with probability $q$. 
For simplicity, suppose that $f(T)$ is of the summative form
$$
f(T) = \sum_{x \in T} g(x), 
$$
where $\norm{g(x)}_2=1$ for all $x \in X$.
Conditioned on $T$, the mechanism $\mathcal{M}(X)$ samples from a Gaussian with mean $f(T)$.
Then, $\mathcal{M}(X)$ can be represented as
a mixture
$$
\mathcal{M}(X) = \sum_T p_T \cdot \mathcal{N}( f(T),\sigma^2 I_d),
$$
where the sum denotes mixing of the distributions with the weights $p_T$.

Let $X' \in \mathcal{X}^N$ be a neighbouring dataset such that $X' = X \cup \{ x' \}$. Then, we have
$$ 
\mathcal{M}(X') = \sum_T p_T \cdot \bigg( (1-q) \cdot \mathcal{N} \big(	 f(T),\sigma^2 I_d \big) + q \cdot \mathcal{N}\big( f(T)+g(x'),\sigma^2 I_d \big) \bigg).
$$
\textbf{(Rotation)} Consider an orthogonal matrix $U \in \mathbb{R}^{d \times d}$ 
of the form 
$$
U = \begin{bmatrix}  g(x') & \widetilde{U} \end{bmatrix}, 
$$
where $\widetilde{U} \in \mathbb{R}^{d \times (d-1)}$.
Again, $\widetilde{U}$ can be taken as any $d \times (d-1)$ matrix such that the columns of $U$
give an orthonormal basis of $\mathbb{R}^d$.
Then, in particular, we have that
$$
U^\mathrm{T} \Delta = e_1,
$$
where $e_1 = \begin{bmatrix} 1 & 0 & \ldots & 0 \end{bmatrix}^T$.
We see that the fact that $\mathcal{M}(X) \simeq_{(\veps,\delta)} \mathcal{M}(X')$ is equivalent to the fact that 
$U^T \mathcal{M}(X) \simeq_{(\veps,\delta)} U^T \mathcal{M}(X')$. Clearly,
$$
U^T \mathcal{M}(X) 
\sim \sum_T p_T \cdot  \mathcal{N}( U^T f(T),\sigma^2 I_d)
$$
since $U$ is orthogonal. Similarly
$$
U^T \mathcal{M}(X') \sim \sum_T p_T \cdot \bigg( (1-q) \cdot \mathcal{N} \big(	 U^T f(T),\sigma^2 I_d \big) + q \cdot \mathcal{N}\big( U^T (f(T)+g(x')),\sigma^2 I_d \big) \bigg).
$$
\textbf{(Translation)} Clearly for each subset $T$ of $X$, 
\begin{equation} \label{eq:cond21}
	\mathcal{N}( U^T f(T),\sigma^2 I_d) \simeq_{(\veps,\delta)} (1-q) \cdot 
	\mathcal{N} \big(	 U^T f(T),\sigma^2 I_d \big) + q \cdot \mathcal{N}\big( U^T (f(T)+g(x')),\sigma^2 I_d \big)
\end{equation} 
if and only if 
\begin{equation} \label{eq:cond22}
	\mathcal{N}( 0,\sigma^2 I_d) \simeq_{(\veps,\delta)} (1-q) \cdot \mathcal{N} \big(	 0,\sigma^2 I_d \big) + q \cdot \mathcal{N}\big( U^T g(x'),\sigma^2 I_d \big).
\end{equation}
Since $U^T g(x') = e_1$, and the coordinate-wise noises in the mechanisms of \eqref{eq:cond22} are independent,
\eqref{eq:cond22} holds if and only if 
\begin{equation} \label{eq:cond23}
	\mathcal{N}( 0,\sigma^2) \simeq_{(\veps,\delta)} (1-q) \cdot \mathcal{N} \big(	 0,\sigma^2 \big) + q \cdot \mathcal{N}\big( 1 ,\sigma^2  \big).
\end{equation}
Thus,  \eqref{eq:cond21} holds if and only if  \eqref{eq:cond23} holds.
	
\medskip	
	
Now suppose 
$$
\mathcal{N}( 0,\sigma^2) \simeq_{(\veps,\delta)} (1-q) \cdot \mathcal{N} \big(	 0,\sigma^2 \big) + q \cdot \mathcal{N}\big( 1 ,\sigma^2  \big).
$$
Let $S \subset \mathbb{R}^d$. Using the reasoning above, we have
\begin{equation*}
	\begin{aligned}
		\mathbb{P}( U^T \mathcal{M}(X') \subset S ) &= \mathbb{P}( U^T \mathcal{M}(X') \subset S )     \\
		& = \mathbb{P} \bigg( \sum_T p_T \cdot \big( (1-q) \cdot \mathcal{N} \big(	 U^T f(T),\sigma^2 I_d \big) + q \cdot \mathcal{N}\big( U^T (f(T)+g(x')),\sigma^2 I_d \big) \big) \subset S \bigg)     \\
		& = \sum_T p_T \cdot \mathbb{P}\bigg( \big( (1-q) \cdot \mathcal{N} \big(	 U^T f(T),\sigma^2 I_d \big) + q \cdot \mathcal{N}\big( U^T (f(T)+g(x')),\sigma^2 I_d \big) \big) \subset S \bigg) \\
		& \leq \sum_T p_T  \bigg( \ee^\varepsilon  \mathbb{P}(   \mathcal{N}( U^T f(T),\sigma^2 I_d) \subset S ) + \delta \bigg) \\
		& =  \ee^\varepsilon \sum_T p_T \cdot \mathbb{P} \big(   \mathcal{N}( U^T f(T),\sigma^2 I_d) \subset S \big) + \sum_T p_T \cdot  \delta \\
		& =  \ee^\varepsilon  \mathbb{P} \big( \sum_T p_T \cdot  \mathcal{N}( U^T f(T),\sigma^2 I_d) \subset S \big) +  \delta \\
		& =  \ee^\varepsilon  \mathbb{P} \big( U^T \mathcal{M}(X) \subset S \big) +  \delta. \\
	\end{aligned}
\end{equation*}
Similarly, we see that 
\begin{equation*}
	\begin{aligned}
		\ee^\varepsilon  \mathbb{P} \big( U^T \mathcal{M}(X) \subset S \big) \leq 
		\mathbb{P}( U^T \mathcal{M}(X') \subset S ) +  \delta.
	\end{aligned}
\end{equation*}
Since the fact that $\mathcal{M}(X) \simeq_{(\veps,\delta)} \mathcal{M}(X')$is equivalent to the fact that 
$U^T \mathcal{M}(X) \simeq_{(\veps,\delta)} U^T \mathcal{M}(X')$, we see that 
$\mathcal{M}(X) \simeq_{(\veps,\delta)} \mathcal{M}(X')$.

\end{proof}
\end{thm}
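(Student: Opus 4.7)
The plan is to reduce the multivariate problem to the one-dimensional hypothesis $\mathcal{N}(0,\sigma^2) \simeq_{(\veps,\delta)} (1-q)\mathcal{N}(0,\sigma^2) + q\mathcal{N}(1,\sigma^2)$ by exploiting two symmetries of the Gaussian: invariance under orthogonal rotations and under translations, together with convexity of DP under mixtures. First I would represent the mechanism on $X$ and on a neighbour $X' = X \cup \{x'\}$ as mixtures indexed by the random Poisson subset $T \subset X$: $\mathcal{M}(X) = \sum_T p_T \cdot \mathcal{N}(f(T),\sigma^2 I_d)$, while $\mathcal{M}(X') = \sum_T p_T \cdot \bigl((1-q)\mathcal{N}(f(T),\sigma^2 I_d) + q\mathcal{N}(f(T)+g(x'),\sigma^2 I_d)\bigr)$, since $x'$ is independently included in $T \cup \{x'\}$ with probability $q$.

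Next I would apply an orthogonal rotation $U \in \mathbb{R}^{d \times d}$ whose first column is $g(x')$ (possible because $\norm{g(x')}_2 = 1$). Since $U^{\mathrm{T}}$ is a bijective post-processing, $\mathcal{M}(X) \simeq_{(\veps,\delta)} \mathcal{M}(X')$ iff $U^{\mathrm{T}} \mathcal{M}(X) \simeq_{(\veps,\delta)} U^{\mathrm{T}} \mathcal{M}(X')$, and the orthogonality preserves the covariance $\sigma^2 I_d$ while mapping $g(x')$ to $e_1$. For each fixed $T$, the resulting conditional problem reads $\mathcal{N}(U^{\mathrm{T}} f(T), \sigma^2 I_d) \simeq_{(\veps,\delta)} (1-q)\mathcal{N}(U^{\mathrm{T}} f(T), \sigma^2 I_d) + q\mathcal{N}(U^{\mathrm{T}} f(T) + e_1, \sigma^2 I_d)$. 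I would then translate both sides by $-U^{\mathrm{T}} f(T)$, again a bijective post-processing, reducing to $\mathcal{N}(0, \sigma^2 I_d) \simeq_{(\veps,\delta)} (1-q)\mathcal{N}(0,\sigma^2 I_d) + q\mathcal{N}(e_1,\sigma^2 I_d)$. Because the last $d-1$ coordinates are identically distributed $\mathcal{N}(0,\sigma^2)^{\otimes(d-1)}$ on both sides and independent of the first coordinate, the multivariate indistinguishability is equivalent to its first-coordinate restriction, which is precisely the hypothesis $\mathcal{N}(0,\sigma^2) \simeq_{(\veps,\delta)} (1-q)\mathcal{N}(0,\sigma^2) + q\mathcal{N}(1,\sigma^2)$.

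Finally, I would lift this per-$T$ indistinguishability to the full mixture via convexity of the $(\veps,\delta)$-indistinguishability relation: if $P_T \simeq_{(\veps,\delta)} Q_T$ for every $T$, then $\sum_T p_T P_T \simeq_{(\veps,\delta)} \sum_T p_T Q_T$. Applied to our mixtures this yields $U^{\mathrm{T}} \mathcal{M}(X) \simeq_{(\veps,\delta)} U^{\mathrm{T}} \mathcal{M}(X')$, hence $\mathcal{M}(X) \simeq_{(\veps,\delta)} \mathcal{M}(X')$. Since $X'$ was an arbitrary add-neighbour of $X$, this gives the claimed $(\veps,\delta)$-DP of the multivariate Poisson subsampled Gaussian mechanism.

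The main obstacle I anticipate is making the per-$T$ reduction rigorous. The translation by $-U^{\mathrm{T}} f(T)$ depends on $T$, so it cannot be applied once to the overall mixtures simultaneously; one must perform the rotation and translation on each conditional distribution separately, verify the reduced one-dimensional inequality there, and only then re-aggregate with convexity. The minor secondary subtlety is justifying that mixture convexity truly holds with a common pair $(\veps,\delta)$ (both directions of the indistinguishability), which follows by integrating the DP inequality against the common mixing weights $p_T$ on both sides. Once these two points are handled cleanly, the rest is bookkeeping about orthogonal invariance and product structure of the Gaussian.
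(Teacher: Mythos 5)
Your proposal follows exactly the paper's proof: represent both mechanisms as $p_T$-weighted mixtures over the Poisson subset $T$, rotate by an orthogonal $U$ with first column $g(x')$ so that $g(x') \mapsto e_1$, translate each conditional by $-U^{\mathrm T} f(T)$, reduce to the one-dimensional hypothesis via independence of coordinates $2,\ldots,d$, and then re-aggregate by the convexity of $(\veps,\delta)$-indistinguishability under a common mixture. The paper's final displayed computation is precisely the convexity step you flag as the main thing to make rigorous, so the two arguments coincide.
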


\subsection{Adaptive Compositions}  \label{sec:adaptive}
The PLD approach is directly applicable to non-adaptive compositions of the form 
\begin{equation} \label{eq:comp_non_adaptive}
	\mathcal{M}(X) = \big(\mathcal{M}_1(X), \ldots, \mathcal{M}_k(X) \big).
\end{equation}
The adaptive compositions we consider are of the form
$$
\mathcal{M}(X,\theta) = \big(\mathcal{M}_1(X,\theta), \mathcal{M}_2\big(X,\mathcal{M}_1(X,\theta) \big),\ldots, \mathcal{M}_k\big(X,\mathcal{M}_{k-1}(X,\ldots)\big).
$$
We want to bound the tight $(\veps,\delta)$-values of the adaptive composition with the a non-adaptive composition of the form~\eqref{eq:comp_non_adaptive}.
We first recall an integral representation for the privacy loss random variable that will be of use.


\subsubsection{Representations for Tight DP-Guarantees}


When analysing general DP mechanisms, we use the following definition for the privacy loss random variable.
We write $f_X(t)$, $t \in \mathbb{R}^{k \cdot t}$, for the density function 
of $\mathcal{M}(X,\theta)$ and $f_{X'}(t)$ for the density function of $\mathcal{M}(X',\theta)$.
\begin{defn} \label{defn:pld}
The privacy loss random variable is a measure $\omega \, : \, \mathbb{R} \cup \{ \infty \} \rightarrow [0,1]$, such that
for $S \subset \mathbb{R} \cup \{ \infty \}$,
$$
\omega(S) = \int\limits_{\{ t \in \mathbb{R}^d \, : \, \mathcal{L}_{X/{X'}}(t) \in S   \}} f_X(t) \, \dd t,
$$
where $\mathcal{L}_{X/{X'}}(t) = \log \frac{f_X(t)}{f_{X'}(t)}$ denotes \emph{the privacy loss function}.

\end{defn}

We first recall the following result for mechanisms in $\mathbb{R}^d$:

\begin{lem} \label{lem:tight_d}
$\mathcal{M}(X) \simeq_{(\veps,\delta)} \mathcal{M}({X'})$ (tightly) with
\begin{equation} \label{eq:max_eq}
\delta(\veps) = \max \Bigg\{ \int\limits_{\mathbb{R}^d}  \max \{  f_X(t) - \ee^\veps f_{X'}(t) ,0  \} \, \dd t,  
\int\limits_{\mathbb{R}^d}  \max \{  f_{X'}(t) - \ee^\veps f_X(t) ,0  \} \, \dd t \Bigg\}.
\end{equation}
\end{lem}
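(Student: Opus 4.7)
The plan is to reduce the lemma to a standard optimization over measurable subsets $E$ of $\mathbb{R}^d$, exploiting that for each fixed $E$ the two probabilities appearing in the definition of $(\veps,\delta)$-indistinguishability can be expressed as Lebesgue integrals against the densities $f_X$ and $f_{X'}$.

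First I would write, for any measurable $E \subset \mathbb{R}^d$,
\begin{equation*}
	\mathbb{P}(\mathcal{M}(X) \in E) - \ee^\veps \mathbb{P}(\mathcal{M}(X') \in E) = \int_E \big( f_X(t) - \ee^\veps f_{X'}(t) \big) \, \dd t.
\end{equation*}
The integrand is a fixed function on $\mathbb{R}^d$, so the right-hand side is maximised, over all measurable $E$, by the set $E^\star = \{ t \in \mathbb{R}^d \,:\, f_X(t) > \ee^\veps f_{X'}(t) \}$, and the maximum equals $\int_{\mathbb{R}^d} \max \{ f_X(t) - \ee^\veps f_{X'}(t), 0 \} \, \dd t$. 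Denote this value by $\delta_{X/X'}(\veps)$; then $\mathbb{P}(\mathcal{M}(X)\in E)\le \ee^\veps \mathbb{P}(\mathcal{M}(X')\in E)+\delta_{X/X'}(\veps)$ for every $E$, with equality attained at $E=E^\star$. So $\delta_{X/X'}(\veps)$ is the smallest $\delta$ for which the first inequality in Definition~\ref{def:indistinguishability} holds uniformly in $E$.

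Next I would run the completely symmetric argument with $X$ and $X'$ interchanged, obtaining the tight smallest $\delta_{X'/X}(\veps)=\int_{\mathbb{R}^d}\max\{f_{X'}(t)-\ee^\veps f_X(t),0\}\,\dd t$ making the second inequality valid for every measurable $E$. Since $\mathcal{M}(X)\simeq_{(\veps,\delta)}\mathcal{M}(X')$ requires \emph{both} inequalities of Definition~\ref{def:indistinguishability} to hold, the tight $\delta$ is the maximum of the two one-sided tight values, yielding exactly the formula \eqref{eq:max_eq}. Tightness follows because the one-sided worst-case sets $E^\star$ attain equality in the respective direction, so no smaller $\delta$ can satisfy both simultaneously.

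There is no real technical obstacle here; the only care needed is in the pointwise optimisation step, where one must verify that including a point $t$ in $E$ strictly increases the integral precisely when $f_X(t)>\ee^\veps f_{X'}(t)$, and that the boundary set $\{f_X(t)=\ee^\veps f_{X'}(t)\}$ contributes zero and can be adjoined to either side without affecting the value. The rest is a direct application of the definition of $(\veps,\delta)$-indistinguishability and the fact that a maximum of two tight one-sided bounds is the tight two-sided bound.
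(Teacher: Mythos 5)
Your argument is correct and is exactly the standard one underlying this lemma, which the paper only recalls without proof: for each direction the smallest admissible $\delta$ is $\sup_E \int_E (f_X - \ee^\veps f_{X'})\,\dd t$, attained on $E^\star = \{t : f_X(t) > \ee^\veps f_{X'}(t)\}$ and equal to the hockey-stick-type integral, and the two-sided tight $\delta$ is the maximum of the two one-sided values since both inequalities of Definition~\ref{def:indistinguishability} must hold and the corresponding $E^\star$ witnesses that no smaller $\delta$ suffices. The only cosmetic point is your remark that adjoining a single point ``strictly increases'' the integral --- a single point has measure zero, so the cleaner statement is simply $\int_E (f_X - \ee^\veps f_{X'})\,\dd t \le \int_{E^\star}(f_X - \ee^\veps f_{X'})\,\dd t$ for every measurable $E$, which your argument already contains.
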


As a direct corollary of Lemma~\ref{lem:tight_d}, we have the following.

\begin{lem}
	
The tight $\delta$ as a function of $\veps$ is given by
$$
\delta(\veps) = \max \left\{ \mathop{\mathbb{E}}\limits_{s \sim \omega_{X/{X'}}} 
\left[ \big( 1 - \ee^{\veps-s}  \big)_+\right], \mathop{\mathbb{E}}\limits_{s \sim \omega_{{X'}/X}}\left[ \big( 1 - \ee^{\veps-s}  \big)_+\right] \right\}.
$$
\begin{proof}

Let the privacy loss random variable $\omega_{X/{X'}}$ be defined as in Def.~\ref{defn:pld}. Then, with the change of variables
$s = \log \frac{f_X(t)}{f_{X'}(t)}$, we see that
\begin{equation*}
	\begin{aligned}
\int\limits_{\mathbb{R}^d}  \max \{  f_X(t) - \ee^\veps f_{X'}(t) ,0  \} \, \dd t 		
& = \int\limits_{\mathbb{R}^d}  \max \{  \big(1 - \ee^{\veps  - \log \frac{f_X(t)}{f_{X'}(t)}}\big) \cdot f_X(t) ,0  \} \, \dd t 		\\
& = \mathop{\mathbb{E}}\limits_{s \sim \omega_{X/{X'}}} \left[ \big( 1 - \ee^{\veps-s}  \big)_+\right].
	\end{aligned}
\end{equation*}
	
\end{proof}
\end{lem}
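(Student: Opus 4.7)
The plan is to reduce the stated expectation-form identity to the integral form already established in Lemma~\ref{lem:tight_d}. Concretely, for each of the two asymmetric terms I would rewrite the $L^1$-type integrand $\max\{f_X(t) - \ee^\veps f_{X'}(t), 0\}$ as an expectation against the privacy loss distribution $\omega_{X/{X'}}$ by factoring $f_X(t)$ and applying a pushforward argument through the privacy loss function $\mathcal{L}_{X/{X'}}(t) = \log(f_X(t)/f_{X'}(t))$ of Definition~\ref{defn:pld}.

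First, on the set $\{t \in \mathbb{R}^d \, : \, f_X(t) > 0\}$ I would factor to obtain
\begin{equation*}
\max\{f_X(t) - \ee^\veps f_{X'}(t), 0\} = f_X(t) \cdot \bigl(1 - \ee^{\veps - \mathcal{L}_{X/{X'}}(t)}\bigr)_+,
\end{equation*}
adopting the convention $\mathcal{L}_{X/{X'}}(t) = \infty$ on $\{f_{X'}(t) = 0, \, f_X(t) > 0\}$, which is consistent with $(1 - \ee^{\veps - \infty})_+ = 1$. On the complementary set $\{f_X(t) = 0\}$ the original integrand already vanishes, so integrating over all of $\mathbb{R}^d$ costs nothing.

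Next I would invoke the law of the unconscious statistician: since by Definition~\ref{defn:pld} $\omega_{X/{X'}}$ is the pushforward of the measure $f_X(t)\,\dd t$ under the map $\mathcal{L}_{X/{X'}} \, : \, \mathbb{R}^d \to \mathbb{R} \cup \{\infty\}$, we have for any bounded measurable $\phi$
\begin{equation*}
\int_{\mathbb{R}^d} \phi(\mathcal{L}_{X/{X'}}(t)) f_X(t) \, \dd t
= \mathop{\mathbb{E}}\limits_{s \sim \omega_{X/{X'}}} [\phi(s)].
\end{equation*}
Choosing $\phi(s) = (1 - \ee^{\veps - s})_+$ converts the first integral in Lemma~\ref{lem:tight_d} into the first expectation inside the claimed maximum, and an identical argument with the roles of $X$ and $X'$ interchanged handles the second term.

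The main delicate point, and the step I would verify carefully, is the treatment of the asymmetric support: on $\{f_{X'}(t) = 0, f_X(t) > 0\}$ the PLD $\omega_{X/{X'}}$ carries a point mass at $s = \infty$ of size $\int_{\{f_{X'}=0,\, f_X > 0\}} f_X(t)\,\dd t$, and one must check that the convention $(1 - \ee^{\veps - \infty})_+ = 1$ makes this contribution agree with the original integrand $f_X(t)$ restricted to that set. No injectivity or differentiability of $\mathcal{L}_{X/{X'}}$ is required, because the identity is a pushforward statement rather than a classical Jacobian change of variables; this is what makes the argument work for arbitrary $d$-dimensional mechanisms without any regularity assumption on $\mathcal{L}_{X/{X'}}$.
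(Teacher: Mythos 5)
Your proposal is correct and follows essentially the same route as the paper's proof: factoring $f_X(t)$ out of the integrand of Lemma~\ref{lem:tight_d} and recognising the result as an expectation against the pushforward measure $\omega_{X/{X'}}$ via the change of variables $s = \mathcal{L}_{X/{X'}}(t)$, with the symmetric term handled identically. Your additional care about the sets $\{f_X = 0\}$ and $\{f_{X'} = 0,\, f_X > 0\}$ (the point mass at $s=\infty$) is a sensible elaboration of what the paper leaves implicit, but it is not a different argument.
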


\subsubsection{Continuous Gaussian mechanism}

Recall: we focus on the case, where the underlying function is of the summative form
\begin{equation} \label{eq:f_sum}
	F(X,\theta) = \sum_{x \in X} g(x,\theta).
\end{equation}
And again, we denote $X'$ as a neighbouring dataset of $X$ such that $X' = X \cup \{ x' \}$. 

For the privacy loss distribution of the adaptive Gaussian mechanism, we have:

\begin{thm} \label{thm:adaptive_xyz}
Let $\widetilde{\omega}$ be the privacy loss random variable of the $k$-wise adaptive composition of a $d$-dimensional Gaussian mechanism,
where the underlying function $F$ is of the form \eqref{eq:f_sum}, $g(x,\theta)$ is Lipschitz-continuous as a function of $\theta$
for all $x$ and has a 2-norm exactly 1 for all $x$ and for all $\theta$.
Let $\omega$ be the $k$-wise non-adaptive composition of a $1$-dimensional Gaussian mechanism with sensitivity exactly 1. 
Then, for all inputs $\theta$ and all $S \subset \mathbb{R}$:
$$
\widetilde{\omega}(S,\theta) = \omega(S),
$$
i.e., the PLD of the multivariate adaptive composition is identical to that of the univariate non-adaptive composition.
\end{thm}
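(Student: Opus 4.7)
The plan is to prove Theorem~\ref{thm:adaptive_xyz} by induction on $k$, extending the rotation argument from Theorem~\ref{thm:subsampled_xyz} from a single step to a $k$-step adaptive trajectory. Write the trajectory as $\theta_0 = \theta, \theta_1, \dots, \theta_k$, where under $\mathcal{M}(X,\theta)$ the update is $\theta_i = F(X,\theta_{i-1}) + Z_i$ with $Z_i \sim \mathcal{N}(0,\sigma^2 I_d)$ independent across $i$, and under $\mathcal{M}(X',\theta)$ the update is $\theta_i = F(X,\theta_{i-1}) + g(x',\theta_{i-1}) + Z_i$. Because the $Z_i$ are fresh at each step, the joint densities factor as $f_{X}(\theta_1,\dots,\theta_k \mid \theta) = \prod_{i=1}^k f_{X,i}(\theta_i \mid \theta_{i-1})$ and analogously for $X'$, so the privacy-loss function telescopes:
\begin{equation*}
\mathcal{L}_{X/X'}(\theta_1,\dots,\theta_k) = \sum_{i=1}^k \log \frac{f_{X,i}(\theta_i \mid \theta_{i-1})}{f_{X',i}(\theta_i \mid \theta_{i-1})}.
\end{equation*}

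A direct computation with Gaussian densities shows that, under $\mathcal{M}(X,\theta)$, the $i$-th summand equals
\begin{equation*}
\frac{\|g(x',\theta_{i-1})\|_2^2}{2\sigma^2} - \frac{\langle Z_i,\, g(x',\theta_{i-1})\rangle}{\sigma^2} = \frac{1}{2\sigma^2} - \frac{\langle Z_i,\, g(x',\theta_{i-1})\rangle}{\sigma^2},
\end{equation*}
where the last equality uses the unit-norm hypothesis $\|g(x',\theta_{i-1})\|_2 = 1$. This is the rotation step transported to the adaptive setting: for any fixed unit vector $v \in \mathbb{R}^d$, rotational invariance of the spherical Gaussian gives $\langle Z_i, v \rangle \sim \mathcal{N}(0,\sigma^2)$. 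Conditioning on the history $(\theta_1,\dots,\theta_{i-1})$ fixes $v = g(x',\theta_{i-1})$, which is a unit vector, so the $i$-th increment has the conditional distribution $\mathcal{N}(1/(2\sigma^2),\,1/\sigma^2)$ that does not depend on the history --- and since $Z_i$ is drawn fresh, the increment is independent of $(\theta_1,\dots,\theta_{i-1})$ as well.

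Assembling the $k$ increments, $\mathcal{L}_{X/X'}(\theta_1,\dots,\theta_k)$ is distributed as a sum of $k$ i.i.d.\ $\mathcal{N}(1/(2\sigma^2),\,1/\sigma^2)$ random variables, which by convolution matches the PLD of the $k$-wise non-adaptive composition of the univariate Gaussian mechanism with sensitivity $1$ recalled in Section 6.1. Hence $\widetilde{\omega}(S,\theta) = \omega(S)$ for every Borel $S \subset \mathbb{R}$ and every input $\theta$.

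The main obstacle is the adaptivity: the sensitivity direction $g(x',\theta_{i-1})$ at step $i$ is a random unit vector depending on all earlier randomness, so a priori the $i$-th PLD increment could be coupled with the history in a complicated way. The crucial fact that dissolves this obstacle is that the one-dimensional marginal $\langle Z_i, v\rangle$ of a spherical Gaussian is $\mathcal{N}(0,\sigma^2)$ \emph{for every} unit vector $v$, so conditioning on $\theta_{i-1}$ affects only the direction into which $Z_i$ is projected, not the resulting distribution. This is also why the unit-norm (as opposed to merely norm-bounded) assumption on $g$ is essential for the \emph{equality} of PLDs here, whereas the corresponding clipped-gradient setting in DP-SGD only produces an upper bound.
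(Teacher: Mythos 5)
Your proof is correct, and it takes a genuinely different route from the paper's. The paper works directly with the joint density of the whole trajectory and performs two explicit changes of variables---a trajectory-dependent translation followed by a block-diagonal rotation built from orthonormal frames $U_i(\theta_{i-1})$ containing $g(x',\theta_{i-1})$ as a column---and then verifies that the Jacobians of both maps have unit determinant, before concluding that only the first coordinate of each block contributes to the privacy loss. You instead exploit the Markov-chain factorisation of the joint density, telescope the log-likelihood ratio into per-step increments, and observe that each increment, conditionally on the past, is an affine image of the projection $\langle Z_i, g(x',\theta_{i-1})\rangle$ of a fresh spherical Gaussian onto a (history-measurable) unit vector; since that conditional law is $\mathcal{N}(1/(2\sigma^2),\,1/\sigma^2)$ regardless of the history, the increments are in fact i.i.d.\ and the total privacy loss is the $k$-fold convolution. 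The underlying fact used is the same in both arguments---rotational invariance of the spherical Gaussian---but your packaging is shorter, avoids Jacobian bookkeeping, and, notably, dispenses entirely with the Lipschitz-continuity hypothesis on $g$ that the paper needs (it invokes Rademacher's theorem to justify the change-of-variables formula for its bi-Lipschitz maps), since no change of variables is performed; measurability of $g$ suffices for your conditional-density argument. Two minor remarks: the claim is a direct decomposition rather than an actual induction, so the opening phrase ``by induction on $k$'' is a slight mismatch with what follows; and the paper's PLD is defined for a specific ordered pair (either $\omega_{X/X'}$ sampling from $f_X$, or $\omega_{X'/X}$ sampling from $f_{X'}$), and you treat only the $X/X'$ direction---it is worth a one-line remark that the $X'/X$ case yields the identical $\mathcal{N}(1/(2\sigma^2),1/\sigma^2)$ increment law by the same projection argument (the cross term changes sign but the projected Gaussian is symmetric), so both directions match the non-adaptive univariate PLD.
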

\begin{proof}
We consider a composition of two mechanisms, the general case can be shown using the same technique.
Let us assume first that $g(x,\theta)$ is everywhere differentiable as a function of $\theta$ for all $x$.


\textbf{(Translation)} We first make the change of variables
$$
\begin{bmatrix} 
	s_1  \\
	s_2 
	\end{bmatrix} =
	\begin{bmatrix} 
		t_1  \\
		t_2 
		\end{bmatrix}
		- 
		\begin{bmatrix} 
			\mathcal{M}_1(X,\theta) \\
			 \mathcal{M}_2(X,t_1)
	\end{bmatrix} =: F_1(t).
$$

Clearly $F_1$ is bijective and differentiable, with the inverse given by simple back-substitution: 
$$
F_1^{-1}(s) = \begin{bmatrix} 
		s_1  \\
		s_2 
		\end{bmatrix}
		+ 
		\begin{bmatrix} 
				\mathcal{M}_1(X,\theta) \\
				\mathcal{M}_2(X,s_1 + \mathcal{M}_1(X,\theta))
				\end{bmatrix}.
$$
We see that the Jacobian $\frac{\partial}{\partial s} F_1^{-1}(s)$ is a lower-triangular matrix with ones on the diagonal.
Thus $\mathrm{det} [ \frac{\partial}{\partial t} F_1^{-1}(t)] = 1$,
i.e., this change of variables preserves the measure. This is also easily seen in case of a $k$-wise composition, $k>2$. 
Now the privacy loss random variable expressed as
\begin{equation*}
	\begin{aligned}
		\widetilde{\omega}(S) = \int\limits_{\{ t \in \mathbb{R}^{2d} \, : \, \mathcal{L}_{X'/X}(t) \in S   \}} f_{X'}(t) \, \dd t 
			 	= \int\limits_{\{ t \in \mathbb{R}^{2d} \, : \, \widetilde{\mathcal{L}}_{X'/X}(t) \in S   \}} \widetilde{f}_{X'}(t) \, \dd t,
	\end{aligned}
\end{equation*}
where $\widetilde{f}_{X'}(t)$ denotes the density function of 
$$
\widetilde{\mathcal{M}}(X') = \big(   g(x',\theta) + \mathcal{N}(0,\sigma^2 I_d),   \widetilde{g}(x',t_1) + \mathcal{N}(0,\sigma^2 I_d) \big),
$$
where $t_1$ is the output of the first component, 
$\widetilde{g}(t_1) = g(t_1 + \mathcal{M}_1(X,\theta))$,
and $\widetilde{\mathcal{L}}_{X'/X}$ is determined by
$\widetilde{\mathcal{M}}(X')$ and $\widetilde{\mathcal{M}}(X)$, where 
$$
\widetilde{\mathcal{M}}(X) = \big(   \mathcal{N}(0,\sigma^2 I_d),  \mathcal{N}(0,\sigma^2 I_d) \big).
$$
 
\textbf{(Rotation)} Next, we make the change of variables
\begin{equation} \label{eq:change_of_vars}
	\begin{bmatrix} s_1 \\ s_2 \end{bmatrix} = \begin{bmatrix} U_1(\theta)^T & 0 \\ 0 & U_2(t_1)^T \end{bmatrix} 
		\begin{bmatrix} t_1 \\ t_2 \end{bmatrix} =: F_2(t),
\end{equation}
where $U_1(\theta), U_2(t_1) \in \mathbb{R}^{d \times d}$ are orthogonal matrices (i.e. $U_1^T U_1  = U_1 U_1^T = U_2^T U_2 = U_2 U_2^T = I_d$) such that $U_1$ is of the form
$$
U_1(\theta) = \begin{bmatrix} g(x',\theta) & \widetilde{U}_1(\theta) \end{bmatrix},
$$
where the columns of $\widetilde{U}_1(\theta) \in \mathbb{R}^{d \times (d-1)}$ give an orthonormal basis for the orthogonal complement of the subspace 
spanned by $g(x',\theta)$ such that $\widetilde{U}_1(\theta)$ depends continuously on $\theta$.
%
Similarly, $U_2$ is of the form
$$
U_2(t_1) = \begin{bmatrix}  \widetilde{g}(x',t_1) & \widetilde{U}_2(t_1) \end{bmatrix},
$$
where the columns of $\widetilde{U}_2(t_1) \in \mathbb{R}^{d \times (d-1)}$ give an orthonormal basis for the orthogonal complement of the subspace 
spanned by $\widetilde{g}(x',t_1)$ such that $\widetilde{U}_2(t_1)$ depends continuously on $t_1$.
%
Such basis matrices $\widetilde{U}_1(\theta)$ and $\widetilde{U}_2(t_1)$ clearly exist as $g(x',\theta)$ and $\widetilde{g}(x',t_1)$
depend continuously on $\theta$ and $t_1$, respectively.

Then, in particular, we have that
$$
U_1(\theta)^\mathrm{T} g(x',\theta) = e_1
\quad \textrm{and} \quad
U_2(t_1)^\mathrm{T}  \widetilde{g}(x',t_1) = e_1,
$$
where $e_1 = \begin{bmatrix} 1 & 0 & \ldots & 0 \end{bmatrix}^T \in \mathbb{R}^d$.
By simple back-substitution, we see that the inverse of the mapping $F_2$ is given by 
$$
	F_2^{-1}(s) = \begin{bmatrix} U_1(\theta) & 0 \\ 
		0 & U_2\big(U_1(\theta) s_1 \big)  \end{bmatrix} 
		\begin{bmatrix} s_1 \\ s_2 \end{bmatrix},
$$
and furthermore, its Jacobian is given by
$$
\frac{\partial}{\partial s} F_2^{-1}(s) = \begin{bmatrix} U_1(\theta) & 0 \\ 
 \frac{\partial}{\partial s_1} U_2\big(U_1(\theta) s_1 \big) s_2 & U_2\big(U_1(\theta) s_1 \big) \end{bmatrix}.
$$
Since the determinant of a block-triangular matrix is the product of the determinants of the matrices on the diagonal~\citep[pp.\;49]{horn2012matrix},
and since the absolute value of the determinant of an orthogonal matrix is 1, we see that for all $s = (s_1,s_2)$:
$$
\abs{ \mathrm{det} [\frac{\partial}{\partial s} F_2^{-1}(s)] } = \abs{ \mathrm{det} \big( U_1(\theta) \big)} \cdot \abs{ \mathrm{det} \big( U_2(U_1(\theta) s_1) \big)} = 1,
$$
i.e., also the rotation preserves the measure. In case of a $k$-wise composition, $k>2$, the Jacobian here will also be a lower-triangular matrix with orthogonal matrices
on its diagonal, from which $\abs{ \mathrm{det} [\frac{\partial}{\partial s} F_2^{-1}(s)] }=1$ follows.

After the rotation, the privacy loss can  be written as
\begin{equation*}
	\begin{aligned}
		\widetilde{\omega}(S) = \int\limits_{\{ t \in \mathbb{R}^{2d} \, : \, \widetilde{\mathcal{L}}_{X'/X}(t) \in S   \}} \widetilde{f}_{X'}(t) \, \dd t 
				= \int\limits_{\{ t \in \mathbb{R}^{2d} \, : \, \widehat{\mathcal{L}}_{X'/X}(t) \in S   \}} \widehat{f}_{X'}(t) \, \dd t,
	\end{aligned}
\end{equation*}
where $\widehat{f}_{X'}(t)$ denotes the density function of 
$$
\widehat{\mathcal{M}}(X') = \big(   e_1 + \mathcal{N}(0,\sigma^2 I_d),   e_1 + \mathcal{N}(0,\sigma^2 I_d) \big).
$$
and $\widehat{\mathcal{L}}_{X'/X}$ is determined by
$\widehat{\mathcal{M}}(X')$ and $\widehat{\mathcal{M}}(X)$, where 
$$
\widehat{\mathcal{M}}(X) = \big(   \mathcal{N}(0,\sigma^2 I_d),  \mathcal{N}(0,\sigma^2 I_d) \big).
$$
Thus, after the rotation we see that the privacy loss random variable of the adaptive composition is identical 
to that of a non-adaptive composition. As the coordinates $2$ to $d$ of $\widehat{\mathcal{M}}(X')$
and $\widehat{\mathcal{M}}(X)$ are identical, they do not contribute to the privacy loss.
More precisely: $\log \tfrac{f_{X',i}(t_i)}{f_{X,i}(t_i)} = 0$ for all $t_i$, $i \geq 2$, i.e. the coordinates $2$ to $d$ are simply integrated over $\mathbb{R}$ resulting in an integral containing only
the privacy loss of the first coordinate.

Thus, we see that for all $\theta$, 
$$
\widetilde{\omega}(S,\theta) = \omega(S),
$$
i.e., for all inputs $\theta$, the PLD of the multivariate adaptive composition is identical to that of the univariate non-adaptive composition.

Finally, we can loosen the assumption on differentiability of $g$ to Lipschitz-continuity of $g$, as follows.
By Rademacher's theorem, Lipschitz-bounded functions are almost everywhere differentiable~\cite[Thm.\;3.1.6,][]{federer1996}, 
and in case the transform $\phi$ is bi-Lipschitz-continuous but not necessarily everywhere differentiable, the change-of-variables formula 
$$
\int_U ( f \circ \phi) \abs{ \mathrm{det} \; \phi'(x)} \, \dd x = \int_{\phi(U)}f(x) \, \dd x
$$
still holds~\cite[see for example Thm.\;20.3 and its corollary 20.5 in][]{hewitt1965}. 
As $g(x,\theta)$ is Lipschitz-continuous, so is $F_1$ and also $U_1( \cdot )$ and $U_2(\cdot)$ and subsequently $F_2$ is also Lipschitz-continuous.
%
\end{proof}

\subsubsection{Monotonicity w.r.t. Sensitivity Using the Data Processing Inequality} \label{subsec:monotonicity}


We still need prove the fact that the condition $\norm{ g(x',\theta) }_2 = 1$ (for all $\theta$)
leads to an upper bound-$\delta(\veps)$-value for the cases where only a constraint $\norm{ g(x',\theta) }_2 \leq 1$ is imposed.
Repeating the steps above, we arrive at analysing 1-dimensional mechanisms
\begin{equation} \label{eq:c1c21}
	\widehat{\mathcal{M}}(X') = \big(   c_1(\theta) + \mathcal{N}(0,\sigma^2),   c_2(t_1) + \mathcal{N}(0,\sigma^2 I_d) \big),
\end{equation}
where $t_1$ denotes the output of the first component, $0\leq c_1(\theta) \leq 1$, $0 \leq c_2(t_1) \leq 1$ and
\begin{equation} \label{eq:c1c22}
\widehat{\mathcal{M}}(X) = \big(   \mathcal{N}(0,\sigma^2),  \mathcal{N}(0,\sigma^2) \big).
\end{equation}
By the post-processing property of DP, the analysis under the condition $\norm{ g(x',\theta) }_2 = 1$ (for all $\theta$) is equivalent
to considering the pair of mechanisms 
\begin{equation} \label{eq:c1c23}
	\widehat{\mathcal{M}}(X') = \big(   c_1(\theta) + \mathcal{N}(0,c_1(\theta)^2 \sigma^2),   c_2(t_1) + \mathcal{N}(0,c_2(t_1)^2 \sigma^2 I_d) \big),
\end{equation}
and
\begin{equation} \label{eq:c1c24}
\widehat{\mathcal{M}}(X) = \big(   \mathcal{N}(0,c_1(\theta)^2 \sigma^2),  \mathcal{N}(0,c_2(t_1)^2 \sigma^2) \big),
\end{equation}
where $t_1$ denotes the output of the first component. We see that we arrive to the pair of mechanisms
\eqref{eq:c1c21} and \eqref{eq:c1c22} by adding to the both \eqref{eq:c1c23} and \eqref{eq:c1c24} the
noise
$$
{Z} = \big( \mathcal{N}\big(0,(1-c_1(\theta)^2 ) \sigma^2 \big),  \mathcal{N}\big(0,(1-c_2(t_1)^2) \sigma^2 \big) \big).
$$
We know that the hockey-stick divergence that gives the tight $(\veps,\delta)$-bound 
is an $f$-divergence~\citep{barthe2013beyond}.
Using the data processing inequality for $f$-divergences~\citep[see e.g.][]{sason2016}, 
we see that by adding $Z$ to both $\widehat{\mathcal{M}}(X')$ and $\widehat{\mathcal{M}}(X)$ leads to and upper $(\veps,\delta)$-bound which shows the claim.
Using the data processing inequality is also motivated by the proof of~\cite[Thm.\;4]{mironov2019}, where the data processing 
property of R\'enyi divergences was used.

\subsubsection{Adaptive Composition of Multivariate Subsampled Gaussian Mechanisms}

The proof for the adaptive composition of subsampled Gaussian mechanism can be carried out by combining the proof of 
Theorems~\ref{thm:subsampled_xyz} and~\ref{thm:adaptive_xyz}. Under the condition $\norm{ g(x',\theta) }_2 = 1$ (for all $\theta$),
using rotation and translation as used in the proof of Thm.~\ref{thm:subsampled_xyz} and by showing that they 
preserve the measure (as in the proof of Thm.~\ref{thm:adaptive_xyz}), shows the claim.
The case $\norm{ g(x',\theta) }_2 \leq 1$ can be shown by noise-adding similarly to Subsection~\ref{subsec:monotonicity}.

\end{document}